\newtheorem{thm}{Theorem}[section]
\newtheorem{lemma}[thm]{Lemma}
\newtheorem{claim}[thm]{Claim}
\newtheorem{definition}[thm]{Definition}
\newtheorem{remark}[thm]{Remark}
\newtheorem{fact}[thm]{Fact}
\newtheorem{question}{Question}
\newcommand\card[1]{\left| {#1} \right|}
\newcommand\sett[2]{\left\{ \left. #1 \;\right\vert #2 \right\}}
\newcommand\set[1]{{\left\{ #1 \right\}}}
\newcommand\Prob[2]{{\Pr_{#1}\left[ {#2} \right]}}
\newcommand\cProb[3]{{\Pr_{#1}\left[ \left. #3 \;\right\vert #2 \right]}}
\newcommand\Expect[2]{{\mathop{\mathbb{E}}_{#1}\left[ {#2} \right]}}
\newcommand\cExpect[3]{{\mathbb{E}_{#1}\left[ \left. #3 \;\right\vert #2 \right]}}
\newcommand\norm[1]{\| #1 \|}
\newcommand\half{{1\over2}}
\newcommand\inner[2]{\langle{#1},{#2}\rangle}
\newcommand\eps{\varepsilon}
\renewcommand\geq{\geqslant}
\renewcommand\leq{\leqslant}
\newcommand{\rom}[1]{\uppercase\expandafter{\romannumeral #1\relax}}
\title{
Pandemic Spread in Communities via Random Graphs
}
\date{\vspace{-5ex}}
\author{
Dor Minzer
\thanks{Department of Mathematics, Massachusetts Institute of Technology, Cambridge, USA.}
\and
Yaron Oz
\thanks{Raymond and Beverly Sackler School of Physics and Astronomy, Tel-Aviv University;
School of Natural Sciences, Institute for Advanced Study, Princeton NJ, USA. Supported by ISF Center of Excellence, the IBM Einstein Fellowship and the John and Maureen Hendricks Charitable Foundation at the Institute for Advanced Study.
}
\and Muli Safra
\thanks{School of Computer Science, Tel Aviv University. Supported by the European Research Council (ERC) under the European Union’s Horizon 2020 research and innovation programme (Grant agreement No. 835152), and by an ISF grant and a BSF grant.}
\and Lior Wainstain
\thanks{School of Computer Science, Tel Aviv University, Tel Aviv, Israel.}
}
\begin{document}
\maketitle
\begin{abstract}
Working in the multi-type Galton-Watson
branching-process framework we analyse the spread of
a pandemic via a general multi-type random contact graph.
Our model consists of several communities,
and takes, as input, parameters that outline the contacts between
individuals in distinct communities.
Given these parameters,
we determine whether there will be an outbreak and if yes, we calculate the size of the giant--connected-component of the graph, thereby, determining
the fraction of the population of each type that would be infected before it ends.
We show that the pandemic spread has a natural evolution direction given by the
Perron-Frobenius eigenvector of a matrix whose
entries encode the average number of individuals of one
type expected to be infected by an individual of another type.
The corresponding eigenvalue is
the basic reproduction number of the pandemic.
We perform numerical simulations that compare homogeneous and heterogeneous spread graphs
and quantify the difference
between them.
We elaborate on the difference between herd immunity and the end of the pandemic and the effect of countermeasures on the fraction of infected
population.

\end{abstract}

\section{Introduction}

Pandemic spread has a tremendous disruptive
impact on the world and consequently a momentous worldwide effort is carried out to figure out the dynamics of the spread and plausible measures that should be taken to control it.
The fundamental question is: Presuming clear knowledge of the distributions of the infectiousness and susceptibility parameters in the population, as well as the correlation between them; how does one calculate the fraction of the population that would contract the disease as a function of time? And when would the spread end with vs. without taking countermeasures?

The spread {\em network} may be viewed as a random combinatorial graph, where vertices correspond to individuals and random edges to  infections.
In real life, this network is not homogeneous but rather heterogeneous, with distinct
individuals and communities being infectious and susceptible to contract the disease to different degrees \cite{review1,review2}.
Indeed, typical data of COVID-19 pandemic around the world
reveals sources of high infection rates, and certain
estimates \cite{Adi} assert
that about $10\%$ of the infected individuals---the superspreaders \cite{NatureCovid2}---cause $80\%$ of the secondary infections, thus much effort was given to understand their influence on the spread of a pandemic (see, e.g.~\cite{dearruda2014role,kitsak2010identification,us2,us1}).

A major source of the complexity of the spread is the structure of the graph
of contacts between individuals and among diverse communities.
On the one hand, it is difficult to obtain precise real data on the spread that would have allowed us to construct that graph in details, while, on the other hand, it is rather complicated
to analyse the properties of such a complex graph.

The aim of our work is to consider a rather general model of heterogeneous random graphs, analyse some of its combinatorial properties and relate them to properties of the spread.
The important feature of our model is that the random graph is multi-type, i.e. it includes
different types of vertices and this allows us to take into account distinct communities
when studying properties of the pandemic outbreak.
The epidemic model that we consider is that of SIR~\cite{SIR} where an infected individual, which will be called a saturated vertex of the graph, cannot be infected
again and hence is removed from the spread process. Within this framework it is customary to consider the size of the infected population at the end of the pandemic 
that corresponds to the giant connected component (abbreviated GCC henceforth) of the graph. This fraction
can be very different from the fraction of the population that is infected until the (effective) reproduction
number drops below $1$ which is conventionally referred to as herd immunity.
We will see this difference explicitly in the numerical simulations.

By employing the multi-type Galton-Watson branching-process framework we are able to analyse a general random contact graph and draw conclusions about the spread of the pandemic. Specifically, given  the parameters defining the distribution of a random graph, we are able to determine whether there would be an outbreak and if so, calculate the size of the GCC of the graph, thereby determining the fraction of population that will contract the disease before it ends, either naturally or as a result of countermeasures.

We will assume $r$ types of communities, wherein community $i$ has size $n_i$ for $i=1,...,r$. We will also assume that we have a symmetric matrix $P$ with non-negative entries, whose
$(i,j)$ entry quantifies the probability that an individual of type $i$ infects an individual of type $j$ as depicted in Figure 1. We will assume that this probabilities are time independent during the natural evolution of the pandemic unless countermeasures are taken at some time.

\begin{figure}[h]
\centering
\begin{tabular}{c}
 \includegraphics[width=90mm]{./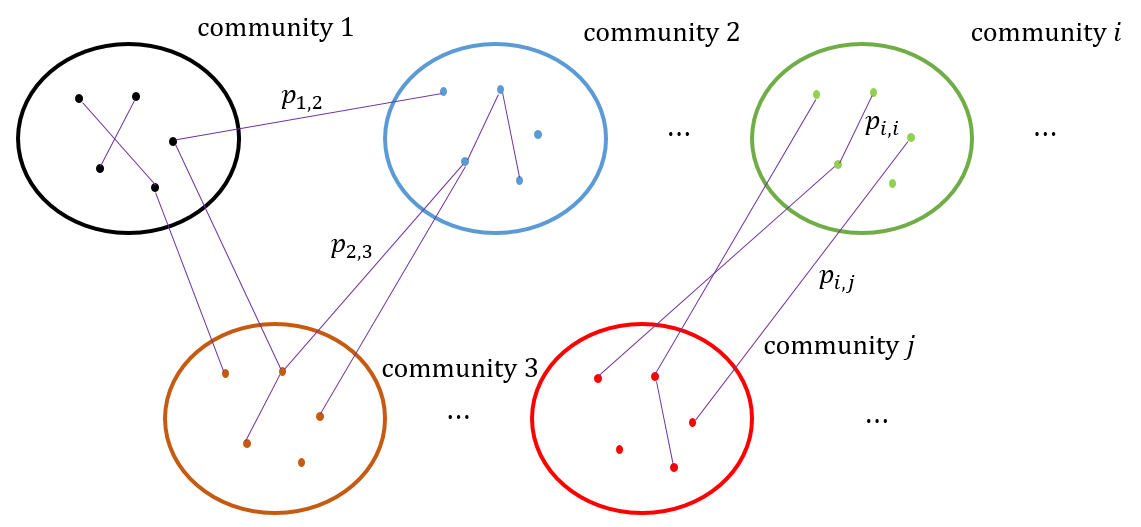} \\[6pt]
\end{tabular}
\caption{A multi-type random graph.}
\label{fig:model}
\end{figure}

This model is motivated by what is often observed as the structure of a society inside
a country or geographical location. Namely, individuals can often be partitioned into
relatively small number of sub-populations (defined by socioeconomic status, religion, size
of household etc.). Within each sub-population, individuals typically behave the same, in terms of internal contacts, as well as contacts with other sub-populations.
This motivates
the view of the contact-network in a society as a graph, wherein each sub-population consists of a set of vertices, and the links between (and inside) these sets of vertices are drawn randomly according to the characteristics of this sub-population.

Our analysis reveals an interesting structure governing the progression of the pandemic.
Despite the random graph potentially having a complicated structure, there is a unique vector that can be associated with the ``direction" of the
spread in the $r$-dimensional space of types, with an associated scalar that corresponds to the basic reproduction number of the pandemic.
More precisely, one defines a matrix $M$ whose $(i,j)$ entry for $i,j=1,...,r$ encodes the average number of individuals of type $j$ expected to be infected by an individual of type $i$.
The matrix $M$ determines the progression of the pandemics and the ``direction" vector is its
Perron-Frobenius eigenvector with {\em the largest eigenvalue $\rho(M)$ corresponding
to the basic reproduction number} of the pandemic.
$\rho(M)$ is the base of the exponential-growth of the disease, before a significant fraction of the population have contracted it.
The components of the Perron-Frobenius eigenvector hold the information about
the potential number of spreaders of each type, hence is intuitively related to the direction of the spread.

Given the input matrix $P$, one can determine when $\rho(M) \leq 1- \eps$, implying the spread ends naturally without an outbreak, in contrast to when $\rho(M) \geq 1 + \eps$, in which case, a large fraction of the population would be infected, namely all the GCC.
In the latter case, we can furthermore perform a calculation of the size of the GCC of the graph.
Moreover, our framework allows us to calculate the fraction of the infected population for each community type separately.

Note that during the evolution of the pandemic the ratios between the numbers of unsaturated vertices (individuals that have not been infected and remain susceptible to contract the disease) from each of the types changes and thus the matrix $M$ changes.
This is because there are types that are more infectious and susceptible than others and they are likely to have a larger percentage of infected vertices.
This implies that the matrix $M$ changes with time.
While this dependency is not important if we are interested only in the final size of the GCC at the end of the pandemic, 
it is important if we are interested in answering questions that are time dependent or before the end of the pandemic.
We will discuss such questions in Section~\ref{sec:simulations}.

\subsection{The graph model}
Our model is best described via the formalism of graph theory.
Let $r\in\mathbb{N}$ be the number of types of individuals in the population, and
let $\vec{n} = (n_1,\ldots, n_r)$ be the vector indicating for each type $i=1,\ldots,r$, the number $n_i$ of individuals of that type. Note, that the numbers $n_i$ are the ones at the beginning of the pandemics.
Let us remark here
that $r$ should be thought of as constant when compared to the total size of the population $n := n_1 + \ldots + n_r$.\footnote{Our analysis nonetheless follows through even if $r$ is a slowly
growing function of the total population, say logarithmic.}

For any two types $i, j\in [r]$ (not necessarily distinct), we have a non-negative parameter $\lambda_{i,j}$ which captures (after an appropriate normalization)
the susceptibility and infection between an individual of type $i$ and an individual of type $j$, this is a term often called {\em transmissibility} (see
for instance \cite{NewmanAlone}).
Let us assume, throughout, that these parameters are symmetric, i.e. $\lambda_{i,j} = \lambda_{j,i}$.
Using these parameters we define the matrix $P = (p_{i,j})_{i,j\in[r]}$ as $p_{i,j} = \lambda_{i,j}/\sqrt{n_in_j}$ which describes the probability that an individual
of type $i$ infects an individual of type $j$ (given that the individual of type $j$ is susceptible).\footnote{When $i=j$, $p_{i,j}=\lambda_{i,i}/n_i$, this is the well-known normalization from the Erd\H{o}s-R\'{e}nyi~\cite{ErdRenT} model $G(n,p)$ where the threshold function was found to be $1/n$ (this model is often stated with parameter $\lambda$ such that $p=\lambda/n$ and the value of $\lambda$ determines whether a giant component will appear or not). Thus, it fits logically for our model, as $\lambda_{i,i}$ describes the strength of connections inside a given community of size $n_i$, i.e. between vertices of the same type. On the other hand, when $i\neq j$, we want $\lambda_{i,j}$ to describe the connection strength between two disjoint sides (each side contains the vertices of one of the types $i$ or $j$). Those edges between different types, resemble the connections between vertices in the bipartite random graph described in~\cite{swedishboy}, where the threshold function for a GCC in $G(m,n;p)$ was found to be $1/\sqrt{mn}$. Therefore, this setting of parameters generalizes both the Erd\H{o}s-R\'{e}nyi and the bipartite random graph models, and will turn out to be suitable for the multi-type graphs, as well.}

Once that matrix is set, let us formulate the distribution $G(\vec{n}, P)$ over graphs $G = (V,E)$ as follows: The vertex set $V$ consists of $r$ disjoint sets of vertices $V_1,\ldots,V_r$, where
$\card{V_i} = n_i$ for all $i\in [r]$; as for the edges, for all $v_i\in V_i$, $u_j\in V_j$, each edge $(v_i,u_j)$ occurs in $G$, independently, with probability $p_{i,j}$.

A random graph sampled this way describes the progression of the pandemic as follows: if at a certain point in time a vertex $v\in V$ was infected, 
then at the next step $v$ infects all vertices adjacent to it in $G$.
In other words, the random variable corresponding to an edge $(v,u)$ occurring in $G$ encapsulates the probability
that these two individuals would be in close proximity when $v$ is infected, as well as the probability that $v$ would pass it along in those interactions (see Figure~\ref{fig:model}).

This perspective highlights the importance of a giant component in the graph $G$ in the study of the size of an outbreak, as well as the number of infected individuals at each step of the spread.
Intuitively, if an individual $v\in V$ is infected, then every individual in the connected component of $v$
would eventually be infected too. 
This motivates us to study the following question:
\begin{question}
  How does the distribution of connected components in $G$ behave as a function of $(\lambda_{i,j})_{i,j\in [r]}$?
\end{question}
Let us remark here that this question is a variant of the well-known problem of determining the size of the giant component in the classical
Erd\H{o}s-R\'{e}nyi model~\cite{AlonSpencer,RandomGraphs}, and closely related variants that have been studied
more recently~\cite{swedishboy,GamarnikMisra,kang2015phase}.

Our method nevertheless---combining spectral considerations and the theory of multi-type Galton-Watson processes---is more natural in our view.
As a biproduct of our approach, we uncover other parameters that may be of interest in the study of the spread of a pandemic.
Our model is a general model of multi-type random graphs. We will describe a branching process over such a random graph in order to analyse some of its combinatorial properties, most importantly, for pandemic analysis, the size of the GCC depending on the initial parameters of the graph. The branching process matches the way a pandemic spreads through society in our model.

Throughout, we pose the basic assumptions of the SIR framework \cite{SIR}. We have two classes of vertices in the graph called
unsaturated and saturated. The unsaturated vertices are individuals susceptible to infection and the saturated vertices are infected individuals.
An individual can only be infected once. At the beginning
of the process all the vertices of the graph are unsaturated except for $1$ (the so called, patient zero), which is infectious. At each step, infectious vertices infect according to their connections with other vertices in the graph, but they infect only other unsaturated vertices. At the next step the vertices that just infected their neighbours, become saturated (i.e. recovered/removed in SIR framework) and step out of the pandemic spread cycle in the sense that they cannot get infected (or infect others) again. Under these assumptions of the SIR framework, the GCC in the random graph corresponds to those individuals who got infected during the process.

\subsection{The descendants matrix and its spectrum}
Consider the matrix $\Lambda\in\mathbb{R}^{r\times r}$ whose $i,j$ entries are equal to $\lambda_{i,j}$ and
recall the matrix $P$ whose $i,j$ entry is $p_{i,j} = \frac{\lambda_{i,j}}{\sqrt{n_i n_j}}$.
Let us further define the matrix $M\in \mathbb{R}^{r\times r}$ as
$M_{i,j} = p_{i,j} n_j=\lambda_{i,j}\cdot\sqrt{\frac{n_j}{n_i}}$. Intuitively, $M_{i,j}$ measures the expected number
of individuals of type $j$ that a given individual of type $i$ would infect.

It is well known (and easy to observe) that $M$ is diagonalizable, and furthermore, its eigenvalues are real (we reproduce a proof of this fact in Section~\ref{sec:prelim}).
The one parameter of utmost interest regarding this matrix is its highest eigenvalue, which will be denoted throughout by $\rho(M)$.
Our result shows that this parameter
is in fact exactly the same as the basic reproduction number (i.e. $R_0$).
Namely, we prove:
\begin{thm}\label{thm:main}
  With the set-up above, for all $r, A\in\mathbb{N}$, $\eps>0$; there exists $N\in\mathbb{N}$ such that the following holds:
  Suppose $n_1,\ldots,n_r$ are such that $n\geq N$, $n_i\leq A n_j$ for all $i,j\in [r]$, and let $\Lambda$, $M$ be the matrices as above.
  Further suppose that $\Lambda$ is connected, and for any $i,j\in [r]$, we either have $\eps\leq \lambda_{i,j}\leq \frac{1}{\eps}$ or $\lambda_{i,j} = 0$.
  Then:
  \begin{enumerate}
    \item if $\rho(M)\leq 1-\eps$, all connected components in $G\sim G(\vec{n}, P)$ are of size
    at most $o(n)$ with probability $1-o(1)$;
    \item if $\rho(M) > 1+\eps$, then there are $\alpha_1,\ldots,\alpha_r > 0$ such that with probability $1-o(1)$,
    $G = (V_1\cup\ldots V_r, E)\sim G(\vec{n}, P)$
    has a connected component $C$ such that $\card{\card{C\cap V_i} - \alpha_i n_i} = o(n)$ for all $i\in[r]$, and all other connected
    components have size $\le{\sf polylog}(n)$.
  \end{enumerate}
\end{thm}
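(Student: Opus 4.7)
The plan is a multi-type analogue of the standard exploration-and-sprinkling argument for the Erd\H{o}s-R\'enyi giant-component theorem. The key observation is that a BFS exploration of $G\sim G(\vec n, P)$ started at a vertex $v\in V_i$ is well approximated by a multi-type Galton-Watson branching process $\mathcal{B}_i$ in which a type-$i$ particle gives birth, independently for each type $j\in[r]$, to $\mathrm{Poisson}(M_{i,j})$ children of type $j$. Indeed, before a non-negligible fraction of any $V_j$ is revealed, the number of type-$j$ neighbors of a newly exposed type-$i$ vertex is $\mathrm{Bin}(n_j-s_j, p_{i,j})$, which is close in total variation to $\mathrm{Poisson}(M_{i,j})$. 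Because $\Lambda$ is connected and $M$ is similar to $\Lambda$ via the diagonal conjugation $DMD^{-1}=\Lambda$ with $D=\mathrm{diag}(\sqrt{n_i})$, the matrix $M$ is irreducible and nonnegative, so Perron-Frobenius applies and $\rho(M)$ is a simple top eigenvalue with strictly positive eigenvector.

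For part (1), standard multi-type branching-process theory gives that when the Perron eigenvalue of the offspring matrix is $\le 1-\eps$, the total progeny of $\mathcal{B}_i$ has exponential tails, uniformly in the starting type $i$. Comparing BFS to $\mathcal{B}_i$, with a Chernoff bound controlling the depletion error $s_j$, one obtains that for any fixed $v\in V_i$ the component of $v$ has size at least $K$ with probability at most $e^{-\Omega(K)}$. Choosing $K=\omega(\log n)$ and union-bounding over the $n$ starting vertices immediately gives part (1) (in fact in a stronger form than stated).

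For part (2), I would first set aside a small fraction $\delta$ of the edge density for sprinkling: realize $G$ as the edge-union of independent $G_1\sim G(\vec n, (1-\delta)P)$ and $G_2\sim G(\vec n, \delta P)$. Since $\rho((1-\delta)M)>1+\eps/2$ for $\delta$ small enough, one may work with $G_1$ for the main exploration. From each vertex $v\in V_i$, run BFS in $G_1$ until it either dies out or reaches $\log^2 n$ vertices. By the coupling with $\mathcal{B}_i$ truncated at size $\log^2 n$, the probability of surviving to this threshold is $\alpha_i+o(1)$, where $(\alpha_1,\ldots,\alpha_r)$ is the unique strictly positive solution of the multi-type extinction-probability fixed-point equation
\begin{equation*}
 1-\alpha_i \;=\; \prod_{j\in[r]} \exp\bigl(-M_{i,j}\alpha_j\bigr), \qquad i\in[r].
\end{equation*}
A second-moment computation, using that two BFS explorations from distinct vertices are almost independent until they collide, shows that the set $S_i$ of these ``seed'' vertices in $V_i$ satisfies $|S_i|=\alpha_i n_i + o(n)$ with high probability. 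Growing each seed's $G_1$-cluster somewhat further produces linear-sized clusters, and then in $G_2$ the expected number of edges between any two such clusters is $\omega(1)$, in fact super-polylogarithmic, so by Chernoff all seed clusters are merged into a single connected component $C$ of $G$ with $|C\cap V_i|=\alpha_i n_i + o(n)$. Any component of size $\ge \mathrm{polylog}(n)$ must contain a seed, hence all other components have size at most $\mathrm{polylog}(n)$.

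The main obstacle, as is typical in such arguments, is the second-moment calculation delivering concentration of $|S_i|$: one must quantify how two BFS trees interact when they collide, so as to show that the probability of both vertices being seeds factors up to $o(1)$ into the product of individual survival probabilities. A related subtlety is the bookkeeping needed so that seed-clusters can be enlarged to linear size within $G_1$ before the sprinkling step is invoked, while keeping the two analyses measurably independent. Once this concentration and the sprinkling step are in hand, the irreducibility of $M$ guarantees that the survival probabilities $\alpha_i$ are all strictly positive simultaneously, so that the giant component spreads across all types $V_i$ as claimed.
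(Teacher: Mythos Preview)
Your outline matches the paper's skeleton --- couple BFS to a multi-type Galton--Watson process, establish that components are either small or large, sprinkle to merge the large ones, and compute the size of the giant by a second-moment argument --- and your edge-splitting sprinkling is a fine alternative to the paper's device of adding an independent $G'\sim G(\vec n,n^{-1.95})$ and bounding ${\sf SD}(G,G\cup G')$ via Pinsker. But you have inverted the difficulties. The second-moment bound you call ``the main obstacle'' is nearly free once one knows every component has size $\le k_-=\log^2 n$ or $\ge k_+=n^{0.99}$: conditioning on $|{\sf CC}(v)|\le k_-$ deletes at most $k_-$ vertices of each type, so the exploration from a second vertex $u$ couples to the original up to an $O(k_-^2/n)$ error, and Chebyshev finishes in one line.

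The genuine gap is the sentence ``growing each seed's $G_1$-cluster somewhat further produces linear-sized clusters.'' A seed's $G_1$-component could have size exactly $\log^2 n$, and your $G_2\sim G(\vec n,\delta P)$ cannot merge two such clusters (the expected number of $G_2$-edges between them is $O(\log^4 n/n)\to 0$). What is actually needed is a no-middle-ground lemma ruling out components of size in $[k_-,k_+]$, and in the multi-type setting this is the bulk of the work, because supercriticality is a \emph{spectral} condition on $M$: from certain types $i$ one may have $\sum_j M_{i,j}<1$ even when $\rho(M)>1+\eps$, so a single BFS layer need not increase the frontier in expectation. The paper's fix is to replace each exploration step by a batch of $t_j$ parallel generations (with $t_j$ chosen via a quantitative Perron--Frobenius argument so that $\|M^{k}z\|_2\ge 2\|z\|_2$ for all nonnegative $z$), making every modified step uniformly supercritical regardless of starting type; only then can one show that surviving $\Omega(k_-)$ modified steps forces $\Omega(k_-)$ unsaturated vertices, each of which independently has probability $\ge\delta(A,r,\eps)>0$ of seeding a surviving subtree. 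Your proposal contains no analogue of this batching mechanism, and without it the passage from ``size $\log^2 n$'' to ``size $n^{0.99}$'' does not go through.
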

In our view, the main case of interest is $\rho(M) > 1+\eps$ (otherwise the pandemic disappears fairly quickly with little impact).
In this case, we give explicit, simple equations for computing $\alpha_1,\ldots,\alpha_r$ (see Equation~\eqref{eq:gen_fn_def} and Theorem~\ref{thm:extinct_solve_gen}).
Furthermore, our proof demonstrates that the parameter $\rho(M)$ behaves like the basic reproduction number in a manner more general than just
implying whether the disease is likely to spread out or not. In particular, our arguments show that the growth in
the number of infections over time is an exponential function whose base is $\rho(M)$ (at least as long as the total number of infections has not reached a significant
proportion of the population).

\subsection{Comparison to related literature}

A large number of approaches have been proposed to model pandemic spread, as well as other information transmission processes, using percolation in complex networks. Thus, much effort was devoted to characterizing sharp thresholds for many such models. In~\cite{Karrer_2014}, Karrer, Newman and Zdeborova consider a related problem of percolation over sparse networks. They address the random subgraph model over a general host graph, wherein one starts with some host graph $H$ and samples a random subgraph $G$ of $H$ by independently including each edge from $H$ with probability $p$. Their work considers the problem of locating the critical probability $p_c$, such that for $p<p_c$ the graph $G$ is unlikely to contain large clusters (i.e. large connected components), and for $p>p_c$ the graph is likely to contain large connected components. The authors relate this critical probability to the non-backtracking random walk matrix on $H$, and in particular to the topmost eigenvalue of a matrix related to the adjacency matrix of $H$. The intuition behind this result is that the topmost eigenvalue represents a certain notion of average degree in the percolation graph, so that, in a sense, if the topmost eigenvalue is $\lambda$, then a vertex in $G$ has $\lambda$ neighbours on average, which results in an exponential growth with the number of steps of the process. The authors argue that their result holds for large girth, locally tree-like host graphs, and support their results by performing numerical simulations over several host graphs. Furthermore, a similar result was obtained by Bollob\'{a}s et al.~\cite{BoloDense} for dense networks. In the latter, the threshold function obtained for $p$ is the inverse of the leading eigenvalue of the simple adjacency matrix.

\noindent Our work has several points of similarity and dissimilarity with~\cite{Karrer_2014,BoloDense}, which we outline next.
\begin{enumerate}
\item Our work seeks formal mathematical proof of our results, and we are therefore constrained to work with a less general class of host graphs. Indeed, there are pathological examples of host graphs and to allow for a formal proof one has to make relatively strong assumptions on the structure of the host graph. In our case, we are working with blow-ups of graphs on $k$-vertices with $k$ being constant, meaning that we start with any host graph $H$ over $k$ vertices and replace each vertex in it with a new cloud of vertices. To simplify terminology, we refer to the original vertices of $H$ as types, and denote the cloud of vertices in our graph that replaces type $i$ in $H$ by $V_i$. Our graph only allows edges between a vertex in $V_i$ and a vertex in $V_j$ if the original host graph $H$ had an edge between types $i$ and $j$.

\item Our model is more general in the sense that it allows different
survival probabilities for different types of edges (occupation probabilities in the language of~\cite{Karrer_2014}). To be more precise, for any two types $i,j\in [k]$, the probability of an edge $(v_i,v_j)$ where $v_i\in V_i$ and $v_j\in V_j$ (namely, in the language of~\cite{Newman},
the occupation probabilities of edges coming from different types may be different). These probabilities are the input to our problem (related to the parameters $\lambda_{i,j}$ and the initial population sizes), and we wish to determine when such graphs are likely to contain large clusters, and when they are unlikely.

\item Our work also identifies that the important parameter in our problem is the topmost eigenvalue, but of a different matrix. The matrix we consider here is $M$, a $k$ by $k$ associated with the prototype of our host graph, whose entries are appropriate normalizations of the susceptibility and rate of interaction between different types. As in~\cite{Karrer_2014}, this parameter also carries with it a certain notion of average-degreeness, however it is more subtle in our case as not all of the neighbours contribute equally to further percolation of the process. Still, we prove that the magnitude of the topmost eigenvalue of the matrix $M$ determines the existence of large clusters in our model.
\end{enumerate}

Pandemic spread  models of networks with general power-law degree distributions were
studied by Newman in~\cite{NewmanAlone} and later by Miller in~\cite{PhysRevE.76.010101}.\footnote{Constructions of other scale-free networks with power-law degree distributions appear in~\cite{Albert_2002} by Albert and Barab\'{a}si, which were found to be useful in modeling many different connection-networks, such as the world wide web.}
The transmissibility between each two vertices $u,v$ is drawn according to an appropriate random variable $T_{u,v}$. All those random variables are i.i.d. and taken from a predefined distribution, thus a new parameter $T$ is introduced which takes a weighted average of these variables, according to the predefined distribution. In this model, vertices can be sometimes infectious for multiple rounds, or rather, be non-infectious (albeit being carriers of the virus) for multiple rounds.
The paper~\cite{NewmanAlone} develops several methods in order to analyze questions concerning parameters such as: 
the relation between the degree of a vertex and its probability to avoid infection; sizes of outbreaks; and finally thresholds for the occurrence of an outbreak. 
Those are obtained using the generating functions method~\cite{Newman}, described also in~\cite[Section IV.C]{review2} and initially in~\cite{1990G}.

The general multi-type framework that we analyse in this paper was described in~\cite{Newman_2003} as an undirected network whose vertices are partitioned into types that interact with each other. A study of such model using the method of generating functions with an application to pandemic spread has been carried out in~\cite{Allard_2009} where, under some assumptions, a criterion for the threshold to having GCC was proposed. The criterion differs from ours
and since the proof methods used in \cite{Allard_2009} are non-rigorous it is not clear to us when their result holds.  In the current work we study a model where the degree distributions of the vertices are binomial distributions and prove  rigorously the results about the GCC. The advantage of having a rigorous proof is two-fold: first, it explains exactly under which conditions on the parameters the result has to hold; second, in order to make the proof go through we identify several parameters that then may actually become the center of interest, and are completely absent in~\cite{Allard_2009}. In this case, as our technique mostly relies on spectral graph theory, we identify the types matrix $M$ and the importance of its topmost eigenvalue $\rho(M)$, which for pandemics has the interpretation as the basic reproduction number, but more generally may be regarded as the rate of increase of the components as a function of "time" (where each step of the aforementioned Galton-Watson branching process advances the "time" by exactly one step). The associated eigenvector also plays a crucial role in our proof, and turns out to encode the number of "active" vertices at various stages of the process (a term we later refer to as "unsaturated vertices"). In contrast, in~\cite{Vazquez_2006} a topmost eigenvalue of a matrix $\widetilde{R}$ is indeed noted to be related to the basic reproduction number of the disease. However, no explicit statement is given to relate the eigenvalue to the emergence of the giant component or to its growth rate through a branching process on the graph, nor a connection is drawn between the direction of the spread and the appropriate leading eigenvector, which was one of the main goals of this paper.

More recent analysis of pandemic spread in networks include~\cite{Herd1,He} and~\cite{us1,us2}, where the basic graph is complete, and the heterogeneity is received by drawing connections according to some specific power-law degree distribution (e.g. appropriate Gamma distributions). Using these models, herd immunity factors are calculated for families of distributions, in particular for COVID-19. Moreover, a numerical simulation of multi-type heterogeneous society network where described in~\cite{science}, which resembles our model in the sense that it considers partitioning society according to different characteristics, but takes a rather experimental approach, to tackle specifically COVID-19 data. In summary, those described models often assume the homogeneity of the society (individual vertices are indistinguishable) and they obtain heterogeneity only by applying degree distributions that describe the disease parameters. Others, allow multiple communities, but include only simulations of specific data, and not general analytical results.

\subsection{Related applications}

The questions addressed here are applicable beyond the spread of pandemics.
Taking a large set of vertices and analyzing their interaction by first partitioning them into types and then assuming random connections, takes place in many distinct areas
where the setup consists of a large complex network. These include networks in the physical world such as in life sciences, in the virtual world of the internet, and in the society
(for a survey see e.g. \cite{kleinberg2006complex}).
One natural example is search algorithms.
The web is a graph whose vertices are all the known web-pages and where a directed edge connects one vertex to another if the first includes a link to the other.
Trying to figure out the best pages for a given query amounts to analyzing that graph.
The assumption that web-pages have types, and that connections between types can be assumed to be quite random, have been promoted both theoretically (for a review see e.g. \cite{klein}), as well as in practice, where search algorithms that run in practice are rumored to employ such tactics.

\vskip1cm
\paragraph{Organization:}
The paper is organized as follows.
In Section~\ref{sec:prelim} we introduce some preliminaries and probabilistic tools
and outline the multi-type Galton-Watson
branching process framework.
In Section~\ref{sec:GW_properties_more} we prove various properties of the multi-type Galton-Watson process
including a quantitative bound on the extinction probabilities, as well as
the Galton-Watson
simulation of the connected components in random graphs.
In Section~\ref{sec:proff_main} we prove Theorem~\ref{thm:main}.
In Section~\ref{sec:simulations} we perform numerical simulations to demonstrate the analytical statements.
Section~\ref{sec:discussion} is devoted to a discussion and outlook.

\section{Preliminaries}\label{sec:prelim}
\paragraph{Notations.}
Throughout the paper we use asymptotic big-$O$ notation. We write $X = O(Y)$ or $Y = \Omega(X)$ to say that there exists an absolute
constant $C$ such that $X\leq C\cdot Y$. Sometimes, this constant will depend on other parameters, say on $\eps$,
in which case we write $X = O_{\eps}(Y)$ or $Y = \Omega_{\eps}(X)$. We let $[r] = \set{1,\ldots,r}$ and $\mathbb{N} = \set{0,1,2,\ldots}$.
We use the standard Euclidean inner product $\inner{x}{y} = \sum\limits_{i} x_iy_i$ and $\ell_p$-norms
$\norm{x}_p^p = \sum\limits_{i} \card{x_i}^p$.

\begin{definition}
  We say $n\in\mathbb{N}^r$ is $A$-balanced if for any $i,j\in [r]$ it holds that $n_i\leq An_j$.
\end{definition}
\begin{definition}
  We say $\Lambda = (\lambda_{i,j})_{i,j\in[r]}$ is $\eps$-separated if for any $i,j\in [r]$ it holds that $\lambda_{i,j} = 0$ or $\eps\leq \lambda_{i,j}\leq \frac{1}{\eps}$.
\end{definition}
\begin{definition}
  We say a symmetric $\Lambda = (\lambda_{i,j})_{i,j\in[r]}$ is connected if the graph whose vertices are $[r]$ and there is an edge between $i$ and $j$
  if $\lambda_{i,j}>0$, is connected.
\end{definition}

\subsection{Spectral properties of the matrix $M$}
The following lemma asserts  that our matrix $M$ is diagonalizable, and furthermore the eigenvalues and eigenvectors satisfy several useful properties.
\begin{lemma}\label{lemma:eigenbasis}
  For all $r\in\mathbb{N}$, $\eps>0$ there exists $\delta>0$ such that the following holds.
  Suppose $\Lambda$ is $\eps$-separated and connected. Then there exists a basis $x_1,\ldots,x_r$ of $\mathbb{R}^r$ consisting of eigenvectors of
  $M$. Furthermore, if $\theta_1 \geq \theta_2\geq\ldots\geq \theta_r$ are the eigenvalues of $M$,
  then
  $\card{\theta_i} \leq \theta_1-\delta$ for all $i\neq 1$.
\end{lemma}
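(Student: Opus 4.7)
The key algebraic observation is that $M$ is similar to the symmetric matrix $\Lambda$. Setting $D = \mathrm{diag}(n_1,\ldots,n_r)$, the formula $M_{ij} = p_{ij} n_j = \lambda_{ij}\sqrt{n_j/n_i}$ gives $M = D^{-1/2}\Lambda D^{1/2}$. Since $\Lambda$ is real symmetric, it has an orthonormal eigenbasis $v_1,\ldots,v_r$ with real eigenvalues $\theta_1 \geq \cdots \geq \theta_r$, and by similarity $M$ has the same spectrum. The vectors $x_i := D^{-1/2}v_i$ then form a basis of $\mathbb{R}^r$ (since $D^{-1/2}$ is invertible) and are eigenvectors of $M$, as $M x_i = D^{-1/2}\Lambda D^{1/2} D^{-1/2} v_i = D^{-1/2} \Lambda v_i = \theta_i x_i$. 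This handles the diagonalizability part of the lemma and reduces the spectral gap question to the corresponding question for $\Lambda$ itself.

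For the gap, the plan is to combine Perron-Frobenius theory with a compactness argument. By hypothesis $\Lambda$ is symmetric, entrywise non-negative, and (by the connectivity assumption on its support graph) irreducible, so Perron-Frobenius yields that $\theta_1 > 0$ is simple and admits a strictly positive eigenvector; in particular $\theta_i < \theta_1$ for every $i > 1$, and provided the support pattern is non-bipartite one further obtains $|\theta_i| < \theta_1$ strictly for all $i > 1$. To upgrade this strict inequality into a uniform lower bound $\delta = \delta(r,\eps) > 0$, I would argue by compactness. The collection of $\eps$-separated $\Lambda$'s is a finite union, indexed by the possible support patterns $S \subseteq [r]\times[r]$, of compact sets (nonzero entries lie in $[\eps, 1/\eps]$). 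On each such compact piece for which $S$ is connected (and non-bipartite), the map $\Lambda \mapsto \theta_1(\Lambda) - \max_{i\neq 1}|\theta_i(\Lambda)|$ is continuous in the matrix entries and strictly positive, hence attains a positive minimum $\delta_S$. Taking $\delta = \min_S \delta_S$ over the finitely many patterns gives the desired uniform gap.

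The most delicate point, and the one I expect to be the main obstacle, is the bound on $|\theta_r|$ rather than just on $\theta_2$. Perron-Frobenius does not preclude $\theta_r = -\theta_1$: this occurs precisely for bipartite support patterns with vanishing diagonal, e.g.\ a block matrix $\Lambda = \bigl(\begin{smallmatrix} 0 & B \\ B^\top & 0 \end{smallmatrix}\bigr)$, in which case no uniform gap on $|\theta_r|$ is possible. The statement as formulated therefore implicitly requires ruling out the bipartite case; in the pandemic setting this is natural since one expects $\lambda_{ii} > 0$ for at least one $i$, which puts a self-loop in the support graph and forces it to be non-bipartite. Under such an (implicit or explicit) hypothesis, Perron-Frobenius strengthens to $|\theta_i| < \theta_1$ strictly for all $i > 1$, and the compactness argument described above then closes the gap uniformly in $(r,\eps)$. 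Once this subtlety is dispatched, the remainder of the proof is standard linear algebra and continuity of eigenvalues.
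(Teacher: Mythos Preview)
Your diagonalizability argument via the similarity $M = D^{-1/2}\Lambda D^{1/2}$ is exactly what the paper does (with the same construction $x_i = D^{-1/2} y_i$ from an orthonormal eigenbasis $y_i$ of $\Lambda$).

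For the spectral gap, the paper takes a different and more explicit route. Rather than compactness, it invokes a quantitative Perron--Frobenius inequality from the literature that bounds $\theta_1 - |\theta_i|$ below in terms of a quantity $m(\Lambda) = r\,\|y_1\|_2^{-2}\min_{\emptyset\neq F\subsetneq[r]}\sum_{i\in F,\,j\notin F}\lambda_{ij}\,y_1(i)\,y_1(j)$. The work then goes into lower-bounding $m(\Lambda)$: one first shows that every coordinate of the Perron eigenvector $y_1$ is at least some explicit $\delta(r,\eps)$, by walking along a path in the support graph from the largest coordinate to any other and using the $\eps$-separation at each step; connectivity then forces at least one cross term $\lambda_{ij}y_1(i)y_1(j)\geq \eps\cdot\delta(r,\eps)^2$ in the cut sum. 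This produces an explicit formula for the gap in terms of $r$ and $\eps$, whereas your compactness argument is non-constructive but entirely self-contained and avoids the external citation.

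Your flag about the bipartite obstruction is well taken and is not addressed explicitly in the paper's proof either; the paper simply applies the cited inequality as stated. As you say, the conclusion $|\theta_r|\leq \theta_1-\delta$ genuinely fails for bipartite $\Lambda$ with zero diagonal, so some aperiodicity-type hypothesis (e.g.\ at least one $\lambda_{ii}>0$) is implicitly in play.
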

\begin{proof}
Let $D^{-\half} = {\sf diag}(n_1^{-\half},\ldots,n_r^{-\half})$, and observe that $M = P D$.
Consider the matrix $\Lambda = D^{\half} P D^{\half} = (\lambda_{i,j})_{i,j}$. Since it is
symmetric, there is a basis $y_1,\ldots, y_r$ consisting of eigenvectors of it with
real eigenvalue $\theta_1,\ldots,\theta_r$.
Thus, for each $i\in [r]$,
\[
M (D^{-\half} y_i)
=
P D (D^{-\half} y_i) = P D^{\half} y_i = D^{-\half} (D^{\half} P D^{\half} y_i) = \theta_i D^{-\half} y_i,
\]
so $x_i = D^{-\half} y_i$ is an eigenvector of $MD$ with eigenvalue $\theta_i$.

For the furthermore statement, we use a quantitative version of the Perron-Frobenius theorem. More precisely,
we use \cite[Inequality (3)]{PerronQuantitative}, and for that we first bound the quantity $m(A)$ defined therein.

Normalize $y_1$ so that $\norm{y_1}_2 = 1$; we show that $y_1(i) \geq \delta$ for all $i$, for some $\delta>0$ depending only on $r$ and $\eps$.
Indeed, first taking $i$ that maximizes $y_1(i)$, we have that $y_1(i)\geq r^{-\half}$. For the rest of the types, we first upper bound $\rho(\Lambda)$:
\begin{equation}\label{eq:rho_upper_bd}
\rho(\Lambda) = \frac{(\Lambda y_1)(i)}{y_1(i)} = \frac{\sum\limits_{j=1}^{r} \lambda_{i,j} y_1(j)}{y_1(i)}
\leq \frac{y_1(i)\sum\limits_{j=1}^r\lambda_{i,j}}{y_1(i)}
\leq \frac{r}{\eps}.
\end{equation}
We now lower bound $y_i(j)$ for all $j$. Take $j$ minimizing $y_1(j)$; as $\Lambda$ is connected, there is a path of length $\ell\leq r$ from $j$ to $i$, so
\[
\rho(\Lambda)^\ell y_1(j) = (\Lambda^{\ell} y_1)_j \geq \sum\limits_{j_1=j,\ldots,j_\ell=i}{\lambda_{i,j_1}\lambda_{j_1,j_2}\cdots \lambda_{j_{\ell-1},j_{\ell}} y_{1}(j_{\ell})}\geq
\eps^{\ell}y_1(i)
\geq \frac{\eps^{\ell}}{\sqrt{r}},
\]
so
\begin{equation}\label{eq1}
y_1(j)\geq \frac{\eps^{2r}}{\sqrt{r} \cdot r^r}.
\end{equation}

We can now lower bound $m(\Lambda)$ from \cite[Inequality (3)]{PerronQuantitative} as:
\[
m(\Lambda) = \frac{r}{\norm{y_1}_2^2}\min_{F\subsetneq[r], F\neq \emptyset}\sum\limits_{i\in F, j\not\in F}{\lambda_{i,j} y_1(i) y_1(j)}.
\]
As $\Lambda$ is connected, there are $i\in M$, $j\not\in M$ such that $\lambda_{i,j}>0$ so that $\lambda_{i,j}\geq \eps$, and then
we get $m(A)\geq r \eps \min_{k}y_1(k)^2 \geq r\eps\cdot \frac{\eps^{4r}}{r^{2r+1}} = \delta(r,\eps)>0$. Therefore, by \cite[Inequality (3)]{PerronQuantitative}
$\card{\theta_i}\leq \theta_1 - \frac{4}{r(r-1)}\delta$ for all $i\neq 1$, and we're done.
\end{proof}

Let $x_1,\ldots,x_r$ be the basis from Lemma~\ref{lemma:eigenbasis}. Since this is a basis, any vector $\vec{z}$ may be written as a linear
combination of these vectors. However, since this is not an orthonormal basis, finding these coefficients and proving estimates on them may be
a bit tricky. In the lemma below, we use the fact that this basis is a multiplication of an orthonormal basis with a diagonal matrix,
to prove several estimates on such coefficients that will show up in our proofs.
\begin{lemma}\label{lemma:decompose_std}
  For all $A,r\in\mathbb{N}$ and $\eps>0$ there are $\delta>0$ and $S>0$ such that the following holds.
  Let $\vec{z}\in [0,\infty)^r$, and write $\vec{s} = \sum\limits_{j}\alpha_j x_j$.
  Then:
  \begin{enumerate}
    \item $\norm{\alpha_1 x_1}_2\geq \delta \norm{\vec{z}}_2$;
    \item for all $j$, $\norm{\alpha_j x_j}_2\leq S\norm{\vec{z}}_2$.
  \end{enumerate}
\end{lemma}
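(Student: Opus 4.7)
The plan is to exploit the explicit form $x_j = D^{-1/2} y_j$ of the eigenvectors from the proof of Lemma~\ref{lemma:eigenbasis}, where $y_1, \ldots, y_r$ is an orthonormal eigenbasis of $\Lambda$ and $D = {\sf diag}(n_1, \ldots, n_r)$. Multiplying the decomposition $\vec{z} = \sum_j \alpha_j x_j$ by $D^{1/2}$ gives $D^{1/2}\vec{z} = \sum_j \alpha_j y_j$, so by orthonormality of the $y_j$'s,
\[
\alpha_j = \langle D^{1/2}\vec{z}, y_j\rangle = \sum_{i=1}^r \sqrt{n_i}\, z_i\, y_j(i).
\]
This explicit formula for the coefficients is the main leverage point; the two items are then short separate computations.

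For item 2 (upper bound), I would apply Cauchy--Schwarz to get $|\alpha_j| \leq \|D^{1/2}\vec{z}\|_2 \leq \sqrt{\max_i n_i}\,\|\vec{z}\|_2$, and bound $\|x_j\|_2 = \|D^{-1/2} y_j\|_2 \leq 1/\sqrt{\min_i n_i}$. Using the $A$-balanced assumption $\max_i n_i / \min_i n_i \leq A$, this gives $\|\alpha_j x_j\|_2 \leq \sqrt{A}\,\|\vec{z}\|_2$, so $S = \sqrt{A}$ works.

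For item 1 (lower bound on $\|\alpha_1 x_1\|_2$), the crucial inputs are that $\vec{z}$ has non-negative entries and that $y_1$, being the Perron--Frobenius eigenvector of $\Lambda$, has strictly positive entries bounded below by a constant $\delta_1 = \delta_1(r, \varepsilon) > 0$; this bound was already proved inside Lemma~\ref{lemma:eigenbasis} (see inequality \eqref{eq1}). Combining these,
\[
\alpha_1 = \sum_i \sqrt{n_i}\, z_i\, y_1(i) \geq \delta_1 \sqrt{\min_i n_i}\, \sum_i z_i \geq \delta_1 \sqrt{\min_i n_i}\,\|\vec{z}\|_2,
\]
where in the last step I use the standard fact $\|\vec{z}\|_1 \geq \|\vec{z}\|_2$ together with $z_i \geq 0$. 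On the other hand, $\|x_1\|_2 \geq 1/\sqrt{\max_i n_i}$ from the computation above. Multiplying and again invoking the $A$-balance assumption yields $\|\alpha_1 x_1\|_2 \geq (\delta_1/\sqrt{A})\,\|\vec{z}\|_2$, so $\delta = \delta_1/\sqrt{A}$ works.

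There is no real obstacle here: the only non-routine ingredient is the uniform positivity of $y_1$, but that has already been established quantitatively during the proof of Lemma~\ref{lemma:eigenbasis} via the Perron--Frobenius argument. The rest is bookkeeping with the diagonal conjugation $x_j = D^{-1/2} y_j$ and two applications of the $A$-balance hypothesis.
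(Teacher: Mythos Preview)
Your proposal is correct and follows essentially the same approach as the paper's own proof: both use the relation $x_j = D^{-1/2} y_j$ to write $\alpha_j = \langle D^{1/2}\vec{z}, y_j\rangle$, then apply Cauchy--Schwarz together with the $A$-balance for item~2, and the quantitative Perron--Frobenius lower bound~\eqref{eq1} on the entries of $y_1$ together with $A$-balance for item~1. Your use of $\|\vec{z}\|_1 \geq \|\vec{z}\|_2$ is in fact slightly cleaner than the paper's detour through $\max_k z_k$, but the argument is otherwise identical.
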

\begin{proof}
  As $y_1,\ldots,y_r$ is orthonormal, we have $D^{\half} \vec{z} = \sum\limits_{j=1}^{r}\alpha_{j} y_j$ where
  $\alpha_j = \inner{D^{\half} \vec{z}}{y_j}$, and so $\vec{z} = \sum\limits_{j=1}^{r}\alpha_{j} x_j$. Note that by~\eqref{eq1}
  we have
  \[
  \alpha_1 \geq \min_{j}\sqrt{n_j}y_1(j)\sum\limits_{k=1}^{r} z_k\geq \min_{j}\sqrt{n_j}y_1(j)\max_k z_k\geq \min_j\sqrt{n_j}\frac{\eps^{\ell}}{\sqrt{r}} \frac{\norm{z}_2}{r}.
  \]
  Thus,
  \[
  \norm{\alpha_1 x_1}_2 = \alpha_1 \norm{D^{-\half} y_1}_2\geq \frac{\alpha_1}{\max_j\sqrt{n_j}}\norm{y_1}_2
  \geq\frac{\min_j\sqrt{n_j}\norm{z}_2}{\max_j\sqrt{n_j}} \frac{\eps^{\ell}}{r^{3/2}} \geq \frac{1}{\sqrt{A}} \frac{\eps^{\ell}}{r^{3/2}}\norm{z}_2 \geq \delta(A,r,\eps)\norm{z}_2.
  \]
  This proves the first item. For the second item, fix $j$ and note that
  \[
  \alpha_j = \inner{D^{\half}\vec{z}}{y_j}\leq \norm{D^{\half} \vec{z}}_2\norm{y_j}_2\leq \max_{j}\sqrt{n_j}\norm{\vec{z}}_2.
  \]
  Therefore,
  \[
  \norm{\alpha_j x_j}_2
  \leq\max_{j}\sqrt{n_j}\norm{\vec{z}}_2\norm{D^{-\half} y_j}_2
  =\max_{j}\sqrt{n_j}\norm{\vec{z}}_2 \sqrt{\sum\limits_{k=1}^{r} \frac{1}{n_k} y_j(k)^2}
  \leq \frac{\max_{j}\sqrt{n_j}\norm{\vec{z}}_2}{\min_k \sqrt{n_k}} \norm{y_j}_2,
  \]
  which is at most $\sqrt{A}\norm{\vec{z}}_2$.
\end{proof}

\subsection{The mutli-type Galton-Watson process}\label{sec:GW_basics}
A key component in our analysis is the multi-type Galton-Watson process.
In this section we introduce this process
and recall several of its basic properties.
In Section~\ref{sec:GW_properties_more} we state and prove additional properties
of it that will be important in the proof of our main results. We will specialize our exposition to our case of interest,
in which the number of offsprings is distributed according to a binomial distribution. We refer the reader to~\cite{Harris} for a more systematic treatment
of this process.

\subsubsection*{Development of the Galton-Watson Process}
The parameters defining the Galton-Watson process are identical to the parameters defining our disease: namely a
vector $\vec{n} = (n_1,\ldots,n_r)$ of integers, and a matrix $P := (p_{i,j})_{i,j\in [r]}$
positive entries. The process starts from an initial configuration $x\in\mathbb{N}^r$, specifying for each type
$i\in [r]$ the number of individuals of type $i$, that is, $x(i)$. Individuals are classified as either ``unsaturated'', if
we have not explored their offsprings yet, and otherwise are classified as ``saturated''. Initially, all individuals are classified
as unsaturated. There are two equivalent ways to develop this process, both of which will be useful for us:

\begin{enumerate}
  \item {\bf Sequential Galton-Watson process.}
  At each step, as long as there is an unsaturated individual, the process picks up some unsaturated individual, $w$, and explores its offsprings. Namely, suppose $w$ is of type $i$, then for each $j\in [r]$, the process
  generates independent Binomial samples $N_j\sim {\sf Binomial}(n_j, p_{i,j})$. The process
  adds $N_j$ unsaturated individuals of type $j$ for all $j\in [r]$, and changes the classification of $w$ to ``saturated''. The process halts
  when all individuals are classified as saturated.
  \item {\bf Parallel Galton-Watson process.} Here, at each point in time instead of picking a single unsaturated individual from the
  list, we look at them all together, and generate their descendants simultaneously.
\end{enumerate}

We denote by ${\sf pop}(\text{GW}(\vec{n},P,x))$ the random variable measuring the total population in the process
once it terminates, when starting with initial population $x\in\mathbb{N}^{r}$ (defining it as $\infty$ if the process does't terminate).

\subsubsection*{The extinction probabilities}
One aspect of the Galton-Watson process we will be concerned with are the extinction probabilities with a given initial
configuration, as well as the distribution of the number of individuals in case the process terminated. In this case as well,
it makes sense to consider the matrix $M = (p_{i,j} n_j)_{i,j\in [r]}$. We record below two well-known facts, showing that in
this case the parameter that determines whether the Galton-Watson process will terminate is $\rho(M)$.
\begin{fact}\label{fact:GW_subcript}
  Suppose $P$ is connected and $\rho(M)\leq 1-\eps$, then $\Prob{}{\text{GW}(\vec{n},P,x)\text{ goes extinct}} = 1$ for all $x$.
\end{fact}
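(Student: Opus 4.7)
The plan is to work with the parallel Galton-Watson process, letting $x_t\in\mathbb{N}^r$ denote the random vector whose $i$-th coordinate counts the unsaturated individuals of type $i$ at time $t$, with $x_0 = x$. Since an unsaturated individual of type $i$ produces, conditionally independently, $\mathrm{Binomial}(n_j,p_{i,j})$ offspring of type $j$ for each $j$, linearity of expectation immediately gives
\[
\cExpect{}{x_t}{x_{t+1}} = M^T x_t,
\qquad\text{and hence}\qquad
\Expect{}{x_t} = (M^T)^t x.
\]

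The next step is to bound the decay of $(M^T)^t x$. The key observation, already used in the proof of Lemma~\ref{lemma:eigenbasis}, is that $M = D^{-1/2}\Lambda D^{1/2}$ where $\Lambda$ is symmetric, so $M^t = D^{-1/2}\Lambda^t D^{1/2}$ and $\rho(M)=\rho(\Lambda)$. Consequently, using $\|\Lambda^t\|_2 = \rho(M)^t$ for the symmetric $\Lambda$,
\[
\norm{M^t v}_2 \;\leq\; \sqrt{\tfrac{\max_i n_i}{\min_j n_j}}\cdot \rho(M)^t\cdot \norm{v}_2.
\]
Applying this to $(M^T)^t$ (whose operator norm equals that of $M^t$) and invoking the hypothesis $\rho(M)\le 1-\eps$ yields $\norm{\Expect{}{x_t}}_1 \leq \sqrt{r}\norm{\Expect{}{x_t}}_2 \to 0$ as $t\to\infty$.

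The argument then concludes by a Markov inequality. Because $x_t$ has nonnegative integer entries,
\[
\Prob{}{x_t\neq 0} \;=\; \Prob{}{\norm{x_t}_1\geq 1} \;\leq\; \Expect{}{\norm{x_t}_1}
\;=\; \norm{\Expect{}{x_t}}_1 \;\xrightarrow{t\to\infty}\; 0.
\]
The event ``extinct by time $t$'' is monotone increasing in $t$, so taking the limit gives $\Prob{}{\text{GW}(\vec{n},P,x)\text{ goes extinct}} = \lim_{t\to\infty}\Prob{}{x_t = 0} = 1$.

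I do not anticipate any substantive obstacle: the statement is strictly subcritical ($\rho(M)\leq 1-\eps$, not $\rho(M)\leq 1$), so the exponential decay of $\Expect{}{x_t}$ combined with Markov is enough, and there is no need for the more delicate contraction arguments on the generating-function fixed point that are required for the critical case. The only mild subtlety is that Lemma~\ref{lemma:eigenbasis} is stated under the $\eps$-separation hypothesis which is absent here, but we do not actually need its quantitative gap conclusion; we only use the unconditional similarity $M = D^{-1/2}\Lambda D^{1/2}$, which gives diagonalizability and the bound above with a constant depending on the (fixed) ratios $n_i/n_j$ — harmless for a probability-one statement.
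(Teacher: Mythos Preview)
Your proof is correct and follows essentially the same approach as the paper: use the parallel process, bound the expected population at time $t$ by a constant times $\rho(M)^t$ via the similarity $M=D^{-1/2}\Lambda D^{1/2}$, and apply Markov's inequality. The paper applies Markov to the total population $\sum_t X_t$ rather than to $\norm{x_t}_1$ at a single time, and is slightly less careful about the operator-norm bound for the non-normal matrix $M$ (a point you handle correctly), but these are cosmetic differences.
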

\begin{proof}
  Let $x$ be the initial configuration of the Galton-Watson process, and define $x_0 = x$, $x_{t+1} = M x_t$ for each $t\geq 0$.
  Since by Lemma~\ref{lemma:eigenbasis} all eigenvalues of $M$ are at most $\rho(M)$ in absolute value, it follows that
  $\norm{x_{t+1}}_2\leq \rho(M) \norm{x_t}_2$ for all $t$, and by induction $\norm{x_j}_2\leq \rho(M)^j\norm{x}_2$.

  We take the parallel view of the development of the Galton-Watson process.
  Let $X_t$ be the number of individuals explored at time $t$. Note that the expectation of $X_t$ is
  $\norm{x_t}_1\leq \norm{x_t}_2$, so by Markov's inequality
  \[
   \Prob{}{{\sf pop}(\text{GW}(\vec{n},\Lambda))\geq k}
   = \Prob{}{\sum\limits_{t=1}^{\infty} X_t \geq k}
   \leq \frac{\Expect{}{\sum\limits_{t=1}^{\infty} X_t}}{k}
   \leq \frac{\sum\limits_{t=1}^{\infty} \norm{x_t}_2}{k}
   \leq \frac{\sum\limits_{t=1}^{\infty} \rho(M)^t \norm{x}_2}{k},
  \]
  and using the sum of geometric series this is at most
  $\frac{\rho(M)\norm{x}_2}{(1-\rho(M))k}$. Sending $k$ to infinity finishes the proof.
\end{proof}
\begin{remark}
  The above argument can actually give us strong bounds on the total size of the population when the Galton-Watson process terminates.
  Indeed, in Section~\ref{sec:sub_crit} we give an adaptation of this argument to handle the sub-critical case in the graph process.
\end{remark}

Next, it is natural to ask what can be said about the extinction probabilities in the super critical case, $\rho(M)\geq 1+\eps$.
For this, we appeal to a result from~\cite{Csernica}, stating that these extinction probabilities satisfy a simple
system of equations. Towards this end, consider for each type $i$ the probability generating function
of the distribution of descendants of an individual of type $i$:
\begin{equation}\label{eq:gen_fn_def}
  f_i(z_1,\ldots,z_r)
  =
  \sum\limits_{k_1,\ldots,k_r}\prod\limits_{\ell=1}^{r}\Prob{}{{\sf Binomial}(n_\ell, p_{i,\ell}) = k_{\ell}} z_{\ell}^{k_{\ell}}.
\end{equation}
In our case of interest, as the number of descendants of $i$ from different types are independent random variables, we may
simplify the about equation and simply write
\begin{equation}\label{eq:gen_fn_def_simplified}
  f_i(z_1,\ldots,z_r)
  = \prod_{\ell=1}^{r}{\sf Gen}_{n_\ell,p_{i,\ell}}(z_\ell),
\end{equation}
where ${\sf Gen}_{n,p}(z)$ is the generating function of ${\sf Binomial}(n,p)$. With these notations, we define
$f(\vec{z}) = (f_1(\vec{z}),\ldots,f_r(\vec{z}))$. In this language,~\cite[Theorem 7.4]{Csernica} reads:
\begin{thm}\label{thm:extinct_solve_gen}
  Suppose $\rho(M)\geq 1+\eps$, and let $\alpha_i$ be the extinction probability of
  $\text{GW}(\vec{n}, P, e_i)$. Then the vector $\vec{\alpha} = (\alpha_1,\ldots,\alpha_r)$ is a
  fixed point of $f$, i.e. $f(\vec{\alpha}) = \vec{\alpha}$, and $\alpha_i < 1$ for all $i$. Moreover, for any other non-trivial fixed point $\vec{\beta}$, $\alpha_i\leq \beta_i$ for all $i\in[r]$.
\end{thm}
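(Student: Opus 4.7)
The plan is to follow the classical approach to extinction probabilities for multi-type Galton-Watson processes, leveraging the monotonicity of the generating map $f$ together with the spectral information about $M$ from Lemma~\ref{lemma:eigenbasis}.

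First, I would show that $\vec{\alpha}$ is a fixed point of $f$ by conditioning on the first generation. Starting from one individual of type $i$, let $K_\ell \sim {\sf Binomial}(n_\ell, p_{i,\ell})$ denote the number of offsprings of type $\ell$; these are mutually independent across $\ell$. The process extinguishes if and only if each of the $K_1+\cdots+K_r$ sub-processes rooted at the children (each an independent fresh copy of the Galton-Watson process with the appropriate single-individual initial type) extinguishes, so by independence
\begin{equation*}
\alpha_i = \Expect{}{\prod_{\ell=1}^r \alpha_\ell^{K_\ell}} = \prod_{\ell=1}^r {\sf Gen}_{n_\ell, p_{i,\ell}}(\alpha_\ell) = f_i(\vec{\alpha}),
\end{equation*}
using~\eqref{eq:gen_fn_def_simplified}.

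Next I would establish the minimality statement. Each $f_i$ is a polynomial in $z_1,\ldots,z_r$ with non-negative coefficients, so $f$ is coordinatewise non-decreasing on $[0,1]^r$. Set $\vec\alpha^{(t)} := f^{\circ t}(\vec 0)$. A short induction on $t$, mirroring the conditioning argument above, shows that $\alpha^{(t)}_i$ is exactly the probability that the Galton-Watson process started from one individual of type $i$ extinguishes within $t$ generations; hence $\vec\alpha^{(t)}$ is non-decreasing and converges to $\vec\alpha$. For any fixed point $\vec\beta\in[0,1]^r$, monotonicity of $f$ gives $\vec\alpha^{(t)} = f^{\circ t}(\vec 0)\leq f^{\circ t}(\vec\beta) = \vec\beta$ for every $t$, so $\vec\alpha\leq \vec\beta$ coordinatewise.

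Finally, I must show $\alpha_i<1$ for all $i$, which is where the hypothesis $\rho(M)\geq 1+\eps$ enters and which I expect to be the main obstacle. The plan is to produce a point $\vec z^\star$ satisfying $\vec 0\leq \vec z^\star<\vec 1$ and $f(\vec z^\star)\leq \vec z^\star$ coordinatewise; by monotonicity, $f^{\circ t}(\vec z^\star)$ is then a decreasing sequence bounded below by $\vec 0$, converging to some fixed point $\leq \vec z^\star<\vec 1$, and the previous paragraph forces $\vec\alpha\leq \vec z^\star<\vec 1$. To construct $\vec z^\star$, let $x_1$ be the top eigenvector of $M$ from Lemma~\ref{lemma:eigenbasis}; by the Perron-Frobenius theorem and the quantitative bound~\eqref{eq1} we may take $x_1$ to have all entries strictly positive and bounded away from $0$. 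A direct calculation from~\eqref{eq:gen_fn_def_simplified} together with ${\sf Gen}_{n,p}(z) = (1-p+pz)^n$ gives $\partial_\ell f_i(\vec 1) = n_\ell p_{i,\ell} = M_{i,\ell}$, and therefore the first-order Taylor expansion yields
\begin{equation*}
f_i(\vec 1 - t\, x_1) - (1 - t\, (x_1)_i) = -t(Mx_1)_i + t(x_1)_i + O(t^2) = -t(\rho(M)-1)(x_1)_i + O(t^2) \leq -t\,\eps\,(x_1)_i + O(t^2),
\end{equation*}
which is strictly negative for every $i$ once $t>0$ is chosen small enough (uniformly in $i$, using that the $(x_1)_i$ are bounded away from $0$). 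Setting $\vec z^\star = \vec 1 - t x_1$ for such a $t$ completes the construction. The main difficulty is precisely this last step: it requires combining the Perron-Frobenius structure of $M$ supplied by Lemma~\ref{lemma:eigenbasis} with a careful control of the nonlinear remainder in the Taylor expansion of $f$ at $\vec 1$, whereas the rest of the proof is a routine application of monotonicity and the law of total probability.
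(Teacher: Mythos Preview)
Your proof is correct. The paper does not actually prove this theorem itself---it is quoted as \cite[Theorem 7.4]{Csernica}---so there is no in-paper argument to compare against; your approach (first-generation conditioning for the fixed-point equation, iterating $f$ from $\vec 0$ and using monotonicity for minimality, and constructing a sub-invariant point $\vec z^\star=\vec 1-t x_1$ via the Perron--Frobenius eigenvector together with the Jacobian identity $\partial_\ell f_i(\vec 1)=M_{i,\ell}$ to force $\vec\alpha<\vec 1$) is precisely the classical one, and is in the same spirit as the argument the paper later adapts in Lemma~\ref{lem:GW_supcrit}.
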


\begin{remark}\label{remark:gcc-equation}
For or most purposes, it is often
useful to approximate the moment generating function of a binomial random variable as
${\sf Gen}_{n,p}(z)\approx e^{-np(1-z)}$, thereby getting $f_i(z_1,\ldots,z_r)\approx e^{-(M (\vec{1}-\vec{z}))_i}$.
\end{remark}

\subsection{Probabilistic tools}
We will need the notion of stochastic domination as well as the following simple fact in our proof.
\begin{definition}[Stochastic Domination]
Let $X, Y$ be two real-valued random variables. We say $X$ stochastically dominates $Y$ if for
all $t\in \mathbb{R}$ it holds that $\Prob{}{X\geq t}\geq \Prob{}{Y\geq t}$.
\end{definition}

\begin{fact}\label{fact:domination_coupling}
Let $X, Y$ be two real-valued random variables. Then $X$ stochastically dominates $Y$ if and only
if there is a coupling $(\widehat{X}, \widehat{Y})$ of $X$ and $Y$, such that $\widehat{X}\geq \widehat{Y}$ always.
\end{fact}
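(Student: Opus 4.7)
The plan is to prove both directions of the equivalence, with the ``if'' direction being essentially immediate and the ``only if'' direction requiring a quantile coupling construction.

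For the easy direction, suppose a coupling $(\widehat{X}, \widehat{Y})$ exists with $\widehat{X}\geq \widehat{Y}$ pointwise. Then for any $t\in\mathbb{R}$, the event $\{\widehat{Y}\geq t\}$ is contained in $\{\widehat{X}\geq t\}$, so $\Pr[Y\geq t] = \Pr[\widehat{Y}\geq t]\leq \Pr[\widehat{X}\geq t] = \Pr[X\geq t]$. This gives stochastic dominance.

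For the harder direction, I would use the quantile (inverse-CDF) coupling. Let $F_X(t) = \Pr[X\leq t]$ and $F_Y(t) = \Pr[Y\leq t]$ be the cumulative distribution functions, and define the generalized inverses
\[
F_X^{-1}(u) = \inf\set{t\in\mathbb{R} : F_X(t)\geq u}, \qquad F_Y^{-1}(u) = \inf\set{t\in\mathbb{R} : F_Y(t)\geq u},
\]
for $u\in (0,1)$. The stochastic dominance hypothesis $\Pr[X\geq t]\geq \Pr[Y\geq t]$ for all $t$ is equivalent to $F_X(t)\leq F_Y(t)$ for all $t$. From this pointwise inequality between the CDFs, I would deduce the pointwise inequality $F_X^{-1}(u)\geq F_Y^{-1}(u)$ for every $u\in (0,1)$: if $F_Y^{-1}(u) = s$, then for every $t < s$ one has $F_Y(t) < u$, hence $F_X(t)\leq F_Y(t) < u$, which means $F_X^{-1}(u)\geq s$.

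Now let $U$ be a single uniform random variable on $[0,1]$, and define $\widehat{X} = F_X^{-1}(U)$ and $\widehat{Y} = F_Y^{-1}(U)$ on the same probability space. A standard check shows that $F_X^{-1}(U)$ has the distribution of $X$ (and analogously for $Y$): this follows because $F_X^{-1}(U)\leq t$ if and only if $U\leq F_X(t)$, and the latter event has probability $F_X(t)$. The almost-sure inequality $\widehat{X}\geq \widehat{Y}$ is then immediate from the pointwise inequality between the quantile functions established above, completing the coupling.

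The only genuinely technical step is the passage from $F_X\leq F_Y$ to the pointwise inequality between the generalized inverses, which must be handled carefully because the CDFs need not be continuous or strictly increasing; I expect this to be the main, though mild, obstacle, and it is dispatched by the one-line argument above using the definition of $F_Y^{-1}$ as an infimum.
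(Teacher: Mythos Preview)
Your proof is correct and is the standard quantile-coupling argument for this classical fact. The paper itself does not prove Fact~\ref{fact:domination_coupling} at all; it merely states it as a known result, so there is no approach to compare against.
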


\section{Properties of multi-type Galton-Watson processes}\label{sec:GW_properties_more}
In this section, we prove several more properties of the multi-type Galton-Watson process.

\subsection{A quantitative bound on the extinction probabilities}
Suppose $x\in\mathbb{N}^r$ is the initial configuration of a Galton-Watson process.
A quantity that we'll often be interested in is
$Mx$, which measures the ``expected configuration'' of the population after a single step of development
in parallel. That is, $(Mx)_i$ is the expected number of individuals of type $i$ after a single step of
development in parallel. Thus, studying norms of such expressions, or more generally of expressions such
as $M^{k} x$, is important in understanding the Galton-Watson process.
\begin{claim}\label{claim:pump}
  For all $A,r\in\mathbb{N}$ and $\eps>0$, there exist $k\in\mathbb{N}$ and $\delta>0$, such that if
  $\vec{n}$ is $A$-balanced, $\Lambda$ is connected and $\eps$-separated and $\rho(M)\geq 1+\eps$,
  then for any vector $\vec{z}$ with non-negative entries it holds that:
  \begin{enumerate}
    \item $\norm{M^{k}\vec{z}}_2\geq 2\norm{\vec{z}}_2$;
    \item for all $t\geq k$,
    \[
    \norm{M^{t}\vec{z}}_2\geq \delta \rho(M)^t\norm{\vec{z}}_2\qquad\qquad
    \norm{M^{t}\vec{z}}_2\leq \rho(M)^t\norm{\vec{z}}_2.
    \]
  \end{enumerate}
\end{claim}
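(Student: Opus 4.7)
The plan is to decompose $\vec{z}$ in the eigenbasis $x_1,\ldots,x_r$ of $M$ supplied by Lemma~\ref{lemma:eigenbasis}, write $M^t\vec{z} = \sum_{j=1}^{r} \alpha_j \theta_j^t x_j$, and exploit the spectral gap to conclude that for large enough $t$ the Perron mode $\alpha_1 x_1$ dominates. The quantitative handles on $\norm{\alpha_1 x_1}_2$ and $\norm{\alpha_j x_j}_2$ provided by Lemma~\ref{lemma:decompose_std} (which crucially uses the non-negativity of $\vec{z}$) are exactly what turns this qualitative picture into a concrete threshold $k$.

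For the lower bound in item~(2), I would apply the triangle inequality to the decomposition and use $\rho(M)=\theta_1$ together with the gap $|\theta_j|\leq \theta_1-\delta_1$ from Lemma~\ref{lemma:eigenbasis}:
$$\norm{M^t\vec{z}}_2 \;\geq\; \norm{\alpha_1 x_1}_2\,\rho(M)^t - \sum_{j\geq 2}\norm{\alpha_j x_j}_2\,(\rho(M)-\delta_1)^t.$$
Lemma~\ref{lemma:decompose_std} gives $\norm{\alpha_1 x_1}_2\geq \delta_0\norm{\vec{z}}_2$ and $\norm{\alpha_j x_j}_2\leq S\norm{\vec{z}}_2$ for constants $\delta_0,S>0$ depending on $A,r,\eps$. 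Combining,
$$\norm{M^t\vec{z}}_2 \;\geq\; \left[\delta_0 - (r-1)S\left(1-\tfrac{\delta_1}{\rho(M)}\right)^{t}\right]\rho(M)^t\,\norm{\vec{z}}_2.$$
Since $\rho(M)\leq r/\eps$ by \eqref{eq:rho_upper_bd}, the factor $(1-\delta_1/\rho(M))$ is bounded away from $1$ by a quantity depending only on $A,r,\eps$, so I can choose $k_1=k_1(A,r,\eps)$ such that the bracket is at least $\delta_0/2$ for every $t\geq k_1$. This proves the lower half of item~(2) with $\delta=\delta_0/2$. Item~(1) is then obtained by further enlarging $k$ so that $(\delta_0/2)\rho(M)^k\geq 2$; because $\rho(M)\geq 1+\eps$ this only requires $k=O_{A,r,\eps}(1)$ more steps.

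For the upper bound, I would not go through the eigen-decomposition directly, but instead exploit the conjugation $D^{1/2} M = \Lambda D^{1/2}$ from the proof of Lemma~\ref{lemma:eigenbasis}. Iterating gives $D^{1/2} M^t = \Lambda^t D^{1/2}$, and since $\Lambda$ is symmetric, $\norm{\Lambda^t \vec{u}}_2\leq \rho(\Lambda)^t\norm{\vec{u}}_2 = \rho(M)^t\norm{\vec{u}}_2$. The $A$-balancedness of $\vec{n}$ means that multiplication by $D^{\pm 1/2}$ distorts $\ell_2$-norms by at most $\sqrt{A}$, which transports the symmetric bound back to $\norm{M^t\vec{z}}_2\leq \sqrt{A}\,\rho(M)^t\,\norm{\vec{z}}_2$, yielding the claimed inequality up to a constant depending only on $A,r,\eps$.

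The main technical obstacle is that the eigenbasis $\{x_j\}$ of $M$ is not orthonormal, so individual coordinates $\alpha_j$ do not admit Parseval-type control. Lemma~\ref{lemma:decompose_std} sidesteps this by bounding $\norm{\alpha_j x_j}_2$ directly, and the $A$-balance hypothesis is precisely what keeps the change-of-basis factors between the $\vec{z}$-norm and the $D^{1/2}\vec{z}$-norm bounded; any degradation of these hypotheses would cause the constants $\delta_0$ and $S$ to blow up and force $k$ to depend on $n$, which is not allowed.
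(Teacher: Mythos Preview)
Your approach is essentially the same as the paper's: decompose $\vec{z}$ in the eigenbasis of $M$, isolate the Perron mode via Lemmas~\ref{lemma:eigenbasis} and~\ref{lemma:decompose_std}, and take $k$ large enough that the subleading terms $(\rho(M)-\delta_1)^t$ are swallowed by $\rho(M)^t$. The only difference is in the upper bound of item~(2): the paper simply says it ``follows immediately from the definition of $\rho(M)$,'' whereas you route through the conjugation $M = D^{-1/2}\Lambda D^{1/2}$ and pick up the factor $\sqrt{A}$; your argument is the more careful one here, since $M$ is not normal and the constant-free bound is not literally immediate.
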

\begin{proof}
  We begin with the first item.
  Write $\vec{z} = \sum\limits_{j=1}^{r} \alpha_j x_j$, and note that for all $k$ we have
  $M^{k}\vec{z} = \sum\limits_{j=1}^{r} \theta_j^k\alpha_j x_j$, where $\theta_j$ is the eigenvalue corresponding to
  $x_j$. Thus,
  \[
  \norm{M^{k}\vec{z}}_2\geq \norm{\theta_1^k\alpha_1 x_1}_2 -
  \max_{\ell\neq 1} \card{\theta_{\ell}}^k \sum\limits_{j=1}^{r} \norm{\alpha_j x_j}_2
  \geq \left(\rho(M)^k \delta_1 - (\rho(M) - \delta_2)^k r S\right)\norm{\vec{z}}_2,
  \]
  where $\delta_1, S>0$ are from Lemma~\ref{lemma:decompose_std}, and $\delta_2$ is from Lemma~\ref{lemma:eigenbasis}.
  As $\rho(M)\geq 1+\eps$, there is $k$ depending only on $\delta_1,\delta_2,r,S$ (and therefore only on $A,r,\eps$) such that
  $\rho(M)^k \delta_1 - (\rho(M) - \delta_2)^k r S\geq 2$, completing the proof.

  For the second item, we may find large enough $k$ such that $\rho(M)^t \delta_1 - (\rho(M) - \delta_2)^t r S\geq \half\delta_1 \rho(M)^t$
  for all $t\geq k$, which gives the first inequality of the second item. The second inequality follows immediately from definition of $\rho(M)$.
\end{proof}

Next, we show that in our set-up of the Galton-Watson process,
in the super-critical case $\rho(M)\geq 1+\eps$ the probability that the process never goes extinct
is bounded away from $0$.
\begin{lemma}\label{lem:GW_supcrit}
  For all $\eps>0$, $A,r\in\mathbb{N}$ there exists $\delta>0$ and $N\in\mathbb{N}$ such that the following holds.
  Suppose $\vec{n}$ is $A$-balanced, $n\geq N$, $\Lambda$ is connected and $\eps$-separated and
  and $\rho(M)\geq 1+\eps$. Then for all $i\in [r]$,
  \[
  \Prob{}{\text{GW}(\vec{n},P,e_i)\text{ goes extinct}} \leq 1-\delta.
  \]
\end{lemma}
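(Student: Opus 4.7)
I plan to prove non-extinction by a Kesten--Stigum-style second-moment argument on the martingale $Y_t := \inner{v}{X_t}/\rho(M)^t$, where $X_t\in\mathbb{N}^r$ is the $t$-th generation of the parallel Galton--Watson process started from $X_0 = e_i$, and $v$ is the Perron eigenvector of $M$, normalized so that $1\leq v_j\leq L$ componentwise for some $L = L(A,r,\eps)$. Such a normalization exists by Lemma~\ref{lemma:eigenbasis} and~\eqref{eq1}, via $v = D^{-1/2} y_1$ up to rescaling, together with $A$-balancedness. The branching structure gives $\Expect{}{X_{t+1}\mid X_t} = M^{\T} X_t$, so $(Y_t)$ is a non-negative martingale with $\Expect{}{Y_t} = v_i \geq 1$. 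If I can show $\sup_t \Expect{}{Y_t^2}$ is bounded by a constant depending only on $A,r,\eps$, Paley--Zygmund gives $\Prob{}{Y_t\geq v_i/2} \geq v_i^2/(4\Expect{}{Y_t^2}) \geq \delta > 0$; since $\{Y_t \geq v_i/2\}\subseteq \{X_t\neq \vec 0\}$ and the events $\{X_t\neq \vec 0\}$ are nested and decreasing with intersection equal to non-extinction, this yields the desired bound on the extinction probability.

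\textbf{The key second-moment computation.} Conditioning on $X_t$ and using independence of the offspring of distinct individuals, together with the fact that a binomial's variance is at most its mean,
\[
\mathrm{Var}\bigl(\inner{v}{X_{t+1}}\,\big|\,X_t\bigr)
\leq \sum_{\ell} X_t(\ell) \sum_j v_j^2 M_{\ell,j}
\leq L \sum_\ell X_t(\ell)\,(M v)_\ell
= L \rho(M) \inner{v}{X_t}
\leq L^2 \rho(M)\,\|X_t\|_1,
\]
where I used $v_j\leq L$ to bound $v_j^2 \leq L v_j$ and then $(Mv)_\ell = \rho(M) v_\ell$. Combining with $\Expect{}{\|X_t\|_1} \leq L\rho(M)^t$ (which follows from $\Expect{}{X_t} = (M^{\T})^t e_i$ and $M^t\vec 1\leq L\rho(M)^t \vec 1$, using $\vec 1\leq v$), one obtains $\Expect{}{Y_{t+1}^2 - Y_t^2} = O(\rho(M)^{-(t+1)})$. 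Because $\rho(M)\geq 1+\eps$, the telescoping series converges and $\sup_t\Expect{}{Y_t^2} = O_{A,r,\eps}(1)$.

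\textbf{Main obstacle.} The martingale algebra and the variance bound are fairly routine once the normalization is in place; the genuinely delicate step is the quantitative Perron control, namely producing constants (independent of $\vec n$) such that the Perron eigenvector $v$ of $M$ can be normalized with $1\leq v_j\leq L$. For this I would invoke~\eqref{eq1}, which already gives a uniform positive lower bound on the symmetric Perron vector $y_1$ of $\Lambda$, and then transport through $v = D^{-1/2} y_1/c$ (with $c = \min_j y_1(j)/\sqrt{n_j}$); $A$-balancedness of $\vec n$ enters precisely here, to upper bound $\max_j v_j/\min_j v_j$ by $\sqrt{A}\cdot \max_j y_1(j)/\min_j y_1(j)$, yielding the required $L = L(A,r,\eps)$.
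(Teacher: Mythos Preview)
Your argument is correct and takes a genuinely different route from the paper's. The paper proves Lemma~\ref{lem:GW_supcrit} analytically: it invokes the generating-function fixed-point characterization of the extinction vector $\vec\alpha$ (Theorem~\ref{thm:extinct_solve_gen}), Taylor-expands $f(\vec 1-\vec z)\approx \vec 1-M\vec z$ near $\vec 1$, iterates $k$ times, and uses the spectral pump of Claim~\ref{claim:pump} ($\|M^k\vec z\|_2\geq 2\|\vec z\|_2$) to contradict the assumption that all $\alpha_i$ are within $\delta$ of $1$; connectedness then spreads the bound to every coordinate. Your proof is purely probabilistic: you build the Perron-weighted martingale $Y_t=\inner{v}{X_t}/\rho(M)^t$, bound $\sup_t\Expect{}{Y_t^2}$ via the one-step variance estimate and a geometric telescoping (this is where $\rho(M)\geq 1+\eps$ enters), and finish with Paley--Zygmund and monotonicity of $\{X_t\neq 0\}$.

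What each approach buys: the paper's route stays close to the fixed-point description of $\vec\alpha$ that is needed anyway for the size of the giant component, and reuses Claim~\ref{claim:pump}; on the other hand it requires the large-$n$ approximation $f_i(\vec\alpha)=e^{-(M\vec z)_i}+o(\|\vec z\|_2)$ and a separate connectedness step. Your argument is more self-contained, avoids generating functions and Claim~\ref{claim:pump} entirely, yields an explicit $\delta=\delta(A,r,\eps)$ in one stroke for all $i$, and in fact does not need $n$ large beyond having $p_{i,j}\leq 1$. The only delicate point, as you note, is the uniform Perron normalization $1\leq v_j\leq L$; your derivation from~\eqref{eq1} and $A$-balancedness is correct (with $L=\sqrt{A}\cdot\max_j y_1(j)/\min_j y_1(j)$). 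A small cosmetic remark: you can stop the variance chain at $L\rho(M)\inner{v}{X_t}$ and use $\Expect{}{\inner{v}{X_t}}=\rho(M)^t v_i$ directly, which saves a factor of $L$ over the detour through $\|X_t\|_1$.
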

\begin{proof}
Recall the functions $f_1(\vec{z}),\ldots,f_r(\vec{z})$ and $f = (f_1(\vec{z}),\ldots,f_r(\vec{z}))$
from Section~\ref{sec:GW_basics}, and let $\alpha_i$ be the extinction probability of $\text{GW}(\vec{n},\Lambda,e_i)$.

By Theorem~\ref{thm:extinct_solve_gen}, $\alpha_1,\ldots,\alpha_r$ satisfies $\alpha_i = f_i(\vec{\alpha})$ and $\alpha_i<1$ for all $i$.
Our main goal is to show that under the conditions of the lemma, there is $i$ such that $z_i\geq 1-\delta$, where $\delta = \delta(A,r,\eps)>0$;
once we show that, it will be easy to deduce this is in fact the case for all $i$.
The proof is very similar to the proof of~\cite[Theorem 7.4]{Csernica}, except that we replace a part of the argument therein with
Claim~\ref{claim:pump} above.

Specifically, let $\delta>0$ to be chosen later, and assume towards contradiction that $\alpha_i\geq 1-\delta$ for all $i$.
Let $\vec{z} = 1-\vec{\alpha}$. For large enough $n = n_1+\ldots+n_r$, we may write
\[
f_i(\vec{\alpha}) = e^{-(M\vec{z})_i} + o(\norm{z}_2)
\leq 1-(M\vec{z})_i + o(\norm{z}_2) + O(\norm{\vec{M z}}_2^2)
\leq 1-(M\vec{z})_i + o(\norm{z}_2) + O(\norm{\vec{z}}_2^2),
\]
so we get that $f(\vec{1}-\vec{z}) = \vec{1} - M\vec{z} + \zeta$ where $\norm{\zeta}_2\leq O(\norm{\vec{s}}_2^2) + o(\norm{s}_2)$.
Therefore, applying this twice, we get
$f(f(\vec{1}-\vec{z})) = f(\vec{1} - M\vec{z} + \zeta) = M(M\vec{z} - \zeta) + \zeta' = \vec{1} - M^2\vec{z} + \zeta''$ where
$\norm{\zeta''}_2 \leq O(\norm{\vec{s}}_2^2) + o(\norm{s}_2)$.

Take $k$ from Claim~\ref{claim:pump}. By induction, we get that
\[
\underbrace{f\circ\ldots\circ f}_{k\text{-times}}(\vec{\alpha})
=\vec{1} - M^k\vec{z} + \zeta,
\]
where $\norm{\zeta}_2\leq  O_k(\norm{\vec{z}}_2^2) + o_k(\norm{z}_2)\leq C\sqrt{r\delta}\norm{\vec{z}}_2$ for large enough $n$ and some $C = C_{A,r,k,\eps}>0$. Now
as $\alpha$ is a fixed point of $f$, we get $\vec{\alpha} = \underbrace{f\circ\ldots\circ f}_{k\text{-times}}(\vec{\alpha})$, and combining
the last two equations we get $\zeta = M^k\vec{z} - \vec{z}$. Thus,
\[
C\sqrt{r\delta}\norm{\vec{z}}_2
\geq
\norm{\zeta}_2 \geq \norm{M^k\vec{z}}_2 - \norm{\vec{z}}_2
\geq \norm{\vec{z}}_2,
\]
giving a contradiction if $\delta < \frac{1}{C^2 r}$.

We conclude that there is $i$ such that $\alpha_i\leq 1-\delta$ for $\delta = \frac{1}{2C^2 r}$, and we next use the connectedness of $\Lambda$
to argue that $\alpha_j\leq 1-\delta'$ for some $\delta'(A,r,\eps)>0$. Indeed, we note that for each $j\in [r]$, as there is a path from $j$ to
$i$ of length $\ell\leq r$, and as $\Lambda$ is $\eps$-separated and $\vec{n}$ is $A$-balanced we get that the probability that
$\text{GW}(\vec{n}, \Lambda, e_j)$ produces an individual of type $i$ after $\ell$ steps is at least $\xi(A,\eps,r)>0$.
Therefore, the probability that this process survives is at least $\xi(A,\eps,r)$ times the probability that
$\text{GW}(\vec{n}, \Lambda, e_i)$ survives, which is at least $\delta$. Thus, the claim is proved for $\delta' = \xi(A,\eps,r)\delta$.
\end{proof}

\subsection{The Galton-Watson simulation of connected components in random graphs}
The following lemma will be useful for us to analyze the size of connected components in graphs
by translating such questions to their analogous counterparts in the Galton-Watson setting.
\begin{lemma}\label{GWSandwich}
Let $k\in \mathbb{N}$, $\vec{T},\vec{n}\in\mathbb{N}^r$ be such that $k\leq \vec{T}(i)\leq n_i$ for all $i$.
Let $i\in [r]$ be a type and $v_i\in V_i$ be a vertex in $G$ of type $i$. Then,
\[
\Prob{}{{\sf pop}(\text{GW}(\vec{n}-\vec{T},P,e_i))\geq k}
\leq
\Prob{}{\card{{\sf CC}(v_i)}\geq k}
\leq
\Prob{}{{\sf pop}(\text{GW}(\vec{n},P,e_i))\geq k}.
\]
\end{lemma}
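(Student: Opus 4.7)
The plan is to explore the connected component ${\sf CC}(v_i)$ via a breadth-first (or depth-first) search, and to couple this exploration step-by-step with the sequential version of the Galton-Watson process described in Section~\ref{sec:GW_basics}. The only difference between the two processes is that BFS on $G$ must discard vertices that have already been discovered, whereas the Galton-Watson process is free to re-use the same ``slot'' in $V_j$ multiple times. By exposing edges only when they are queried by the BFS, we can describe both processes as generating, at each step, a vector of independent binomial offspring counts; the coupling then reduces to the elementary fact that ${\sf Binomial}(m_1, p)$ stochastically dominates ${\sf Binomial}(m_2, p)$ whenever $m_1 \geq m_2$, together with Fact~\ref{fact:domination_coupling}.

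\textbf{Upper bound.} Run BFS starting from $v_i$, exposing edges on demand. When a vertex $w\in V_j$ is explored, let $U_\ell$ denote the number of vertices of type $\ell$ that have not yet been placed in ${\sf CC}(v_i)$. By independence of edges in $G(\vec{n}, P)$, the number of new neighbors of type $\ell$ discovered at this step is distributed as ${\sf Binomial}(U_\ell, p_{j,\ell})$, with $U_\ell \leq n_\ell$. Couple this with a sequential Galton-Watson process $\text{GW}(\vec{n}, P, e_i)$ that picks the same order of vertices for exploration: at each step, generate an independent ${\sf Binomial}(n_\ell, p_{j,\ell})$ sample $N_\ell$ and, using Fact~\ref{fact:domination_coupling}, couple it with the BFS sample $N_\ell'$ so that $N_\ell' \leq N_\ell$ almost surely. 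Then $|{\sf CC}(v_i)| \leq {\sf pop}(\text{GW}(\vec{n}, P, e_i))$ pointwise in the coupling, which yields the upper bound on the tail probability.

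\textbf{Lower bound and main obstacle.} For the other direction, we couple BFS with $\text{GW}(\vec{n} - \vec{T}, P, e_i)$. The idea is that as long as ${\sf CC}(v_i)$ has discovered fewer than $T(\ell)$ vertices of type $\ell$, the number of undiscovered type-$\ell$ vertices available to BFS is at least $n_\ell - T(\ell)$. So, in analogy with the upper bound, we can couple each ${\sf Binomial}(U_\ell, p_{j,\ell})$ from the graph with a ${\sf Binomial}(n_\ell - T(\ell), p_{j,\ell})$ from the truncated Galton-Watson process so that the latter is pointwise bounded by the former. The key subtlety—and the step that needs care—is that this coupling is valid only while the invariant $|{\sf CC}(v_i) \cap V_\ell| < T(\ell)$ holds for every $\ell$; we need to verify that the event $\{{\sf pop}(\text{GW}(\vec{n}-\vec{T}, P, e_i)) \geq k\}$ implies that the coupling survives long enough. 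This follows from the hypothesis $k \leq T(\ell)$: we run the coupling until either (i) BFS terminates, (ii) $|{\sf CC}(v_i)|$ reaches $k$, or (iii) some $|{\sf CC}(v_i) \cap V_\ell|$ reaches $T(\ell)$. Before any of these events, the coupling is valid, so the truncated GW population is dominated by $|{\sf CC}(v_i)|$. If the truncated GW process produces at least $k$ individuals, then before stopping we must have forced $|{\sf CC}(v_i)| \geq k$ as well (since $k \leq T(\ell)$ for all $\ell$ prevents case (iii) from occurring first), and the lower bound on the tail probability follows.
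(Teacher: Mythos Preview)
Your proposal is correct and follows essentially the same coupling argument as the paper: explore ${\sf CC}(v_i)$ by BFS with on-demand edge exposure, match each exploration step with a step of the appropriate Galton-Watson process, and couple the binomial offspring counts via Fact~\ref{fact:domination_coupling}. The only cosmetic difference is that the paper phrases the lower-bound stopping rule as ``total explored $\leq k$'' whereas you split it into cases (ii) and (iii) and then observe that $k\leq \vec{T}(\ell)$ forces (iii)$\Rightarrow$(ii); these are equivalent.
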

\begin{proof}
The proof proceeds by a coupling arguments.
\vspace{-2ex}
\paragraph{The upper bound.}
Consider a coupling  between the exploration process of the connected component of $v_i$ and $\text{GW}(\vec{n},P,e_i)$.
At each step, we will have queues $Q_{{\sf graph}}, Q_{\text{GW}}$ of unsaturated vertices in the graph
and in the Galton-Watson process, initially containing only $v_i$.

At each step, we take a vertex $w\in Q_{{\sf graph}}$, suppose its type is $i'$,
and an individual $\tilde{w}\in Q_{\text{GW}}$ of the same type $i'$. Letting $\vec{T_t} = (\vec{T_t}(1),\ldots,\vec{T_t}(r))$
be the number of vertices explored in the graph of each type so far, we sample $N_j\sim {\sf Binomial}(n_j - \vec{T_t}(j), p_{i',j})$
and $N_j(\text{GW}) \sim {\sf Binomial}(n_j, p_{i',j})$ in a correlated manner so that $N_j\leq N_j(\text{GW})$. We add $N_j$
vertices of type $j$ to $Q_{{\sf graph}}$ and $N_j(\text{GW})$ vertices of type $j$ to $Q_{\text{GW}}$.

Observe that at each step of the exploration process, for each type $j\in [r]$ the number of individuals of type $j$ in
$Q_{\text{GW}}$ is at least the number of individuals of type $j$ in $Q_{{\sf graph}}$. Also, at each step of the process
the number of vertices in the graph we have explored is at most the number of individuals explored in the Galton-Watson
process.
\vspace{-2ex}
\paragraph{The lower bound.}
For the lower bound, we use the same coupling, with one difference. If the total population explored in the process so far is at most $k$,
we take $w\in Q_{{\sf graph}}$, $\tilde{w}\in Q_{\text{GW}}$ of the same type $i'$, and sample
$N_j\sim {\sf Binomial}(n_j - \vec{T_t}(j), p_{i',j})$ and $N_j(\text{GW}) \sim {\sf Binomial}(n_j-\vec{T}(j), p_{i',j})$ in a correlated manner so that $N_j(\text{GW})\leq N_j$;
this is possible as $\vec{T_t}(j)\leq k \leq \vec{T}(j)$.
\end{proof}

As Lemma~\ref{GWSandwich} suggests, to understand the behaviour of connected components in random graphs we may need to
understand two different Galton-Watson processes, i.e. $\text{GW}(\vec{n}-\vec{T},P,e_i))$ and
$\text{GW}(\vec{n},P,e_i))$. Intuitively, these process could be thought as identical as long as $\max T_i = o(n)$ (at least
in the sense of extinction probabilities and rough statistics about the distribution of the population). For us, it will be enough to show that if the first of these process is super-critical, then so
is the second.

\begin{claim}\label{claim:Q_supercrit}
  For all $A,r\in\mathbb{N}$, $\eps>0$ there exists $N\in\mathbb{N}$ such that if $\vec{n}$ is $A$-balanced,
  $\Lambda$ is $\eps$-separated, $\rho(M)\geq 1+\eps$ and $n\geq N$, then the following holds for all non-zero
  initial configurations $x$. The process $\text{GW}(\vec{n} - n^{0.999}\vec{1}, P, x)$ is super-critical
  and furthermore $\rho(Q)\geq 1+\eps/2$ for the matrix $Q = ((n_j - n^{0.999})p_{i,j})_{i,j\in [r]}$.
\end{claim}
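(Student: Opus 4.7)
The plan is to mimic the similarity argument in Lemma~\ref{lemma:eigenbasis} to replace the non-symmetric matrix $Q$ by a symmetric matrix $\Lambda'$ with the same spectrum, then to compare $\Lambda'$ to $\Lambda$ entrywise and invoke a standard Weyl-type perturbation bound. The hypothesis $\rho(M) = \rho(\Lambda) \geq 1+\eps$ then degrades by only $o(1)$, yielding $\rho(Q) \geq 1+\eps/2$ for large $n$.

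First, I would set $D' = \operatorname{diag}(n_1 - n^{0.999},\ldots, n_r - n^{0.999})$ and note that $Q = PD'$. Since $\vec{n}$ is $A$-balanced, $n_i \geq n/(rA) \gg n^{0.999}$ for $n$ large, so $D'$ is positive definite. Then $\Lambda' := (D')^{1/2} P (D')^{1/2}$ is a symmetric matrix satisfying
\[
\Lambda'_{i,j} = \lambda_{i,j}\sqrt{(1 - n^{0.999}/n_i)(1 - n^{0.999}/n_j)},
\]
and the identity $Q = (D')^{-1/2}\Lambda' (D')^{1/2}$ shows that $Q$ is similar to $\Lambda'$. Hence $Q$ has real spectrum equal to that of $\Lambda'$, and in particular $\rho(Q) = \rho(\Lambda')$.

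Next, I would estimate the perturbation $\Lambda - \Lambda'$ entrywise. Since $n^{0.999}/n_i \leq rA \cdot n^{-0.001} =: \eta$ for all $i$, Bernoulli's inequality gives $1 - \sqrt{(1-\eta)(1-\eta)} \leq \eta$, so
\[
|\Lambda_{i,j} - \Lambda'_{i,j}| \leq \lambda_{i,j} \cdot \eta \leq \frac{rA}{\eps} \cdot n^{-0.001}.
\]
Consequently the operator norm satisfies $\|\Lambda - \Lambda'\|_{\text{op}} \leq r \cdot \max_{i,j}|\Lambda_{i,j} - \Lambda'_{i,j}| = O_{A,r,\eps}(n^{-0.001}) = o(1)$. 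Because both $\Lambda$ and $\Lambda'$ are symmetric, Weyl's inequality yields $|\rho(\Lambda) - \rho(\Lambda')| \leq \|\Lambda - \Lambda'\|_{\text{op}}$.

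Combining these facts, for $N$ large enough and $n \geq N$ we get $\rho(Q) = \rho(\Lambda') \geq \rho(\Lambda) - \eps/2 = \rho(M) - \eps/2 \geq 1 + \eps/2$, which both establishes super-criticality of the Galton--Watson process $\text{GW}(\vec{n} - n^{0.999}\vec{1}, P, x)$ (whose expected offspring matrix is exactly $Q$) and delivers the quantitative bound. The proof is essentially a stability estimate for the Perron eigenvalue under a vanishing perturbation of the symmetrized matrix; the only slightly delicate point, easily handled, is verifying positivity of $D'$ and checking that $A$-balance survives the shift (with constant $2A$, say) so that the estimate $n^{0.999}/n_i = o(1)$ is uniform in $i$.
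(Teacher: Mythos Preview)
Your proposal is correct and follows essentially the same approach as the paper: both symmetrize $M$ and $Q$ to $\Lambda = D^{1/2}PD^{1/2}$ and $\Lambda' = (D')^{1/2}P(D')^{1/2}$ via the similarity transformation from Lemma~\ref{lemma:eigenbasis}, then apply Weyl's inequality to bound $|\rho(\Lambda)-\rho(\Lambda')|$ by an operator-norm (or Frobenius-norm) perturbation of order $O_{A,r,\eps}(n^{-0.001})$. Your entrywise presentation, writing $\Lambda'_{i,j}=\lambda_{i,j}\sqrt{(1-n^{0.999}/n_i)(1-n^{0.999}/n_j)}$, is slightly cleaner than the paper's explicit Frobenius-norm computation, but the argument is the same.
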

\begin{proof}
  Let $D_{M} = {\sf diag}(n_1,\ldots,n_r)$, $D_Q = {\sf diag}(n_1 - n^{0.999},\ldots,n_r - n^{0.999})$. Recall from the proof of Lemma~\ref{lemma:eigenbasis}
  that the top-most eigenvalue of $M$ (respectively $Q$) is the top-most eigenvalue of $D_M^{\half} P D_M^{\half}$ (respectively $D_Q^{\half} P D_Q^{\half}$).
  Since $D_M^{\half} P D_M^{\half}, D_Q^{\half} P D_Q^{\half}$ matrices are symmetric, we have by Weyl's inequality
  \begin{align*}
    \card{\rho(M) - \rho(Q)}^2
    =\card{\rho(D_M^{\half} P D_M^{\half}) - \rho(D_M^{\half} P D_M^{\half})}^2
    \leq \norm{D_M^{\half} P D_M^{\half} - D_M^{\half} Q D_M^{\half}}_2^2\\
    \qquad\qquad\qquad\qquad=\sum\limits_{i,j} (\sqrt{n_i} p_{i,j} \sqrt{n_j} - \sqrt{n_{i} - n^{0.999}} p_{i,j} \sqrt{n_j - n^{0.999}})^2.
  \end{align*}
  Note that $\sqrt{n_i - n^{0.999}}\geq \sqrt{n_i} - \sqrt{n^{0.999}}$ and similarly for $n_j$, so we get
  \[
    \card{\rho(M) - \rho(Q)}^2\leq
    \sum\limits_{i,j} (\sqrt{n^{0.999}} p_{i,j}(\sqrt{n_i}+\sqrt{n_j}))^2
    =n^{0.999}\sum\limits_{i,j}\frac{\lambda_{i,j}^2(\sqrt{n_i}+\sqrt{n_j})^2}{n_i n_j}.
  \]
  Note that for each $i$, $\frac{n}{r A}\leq n_i\leq n$, and $\lambda_{i,j}\leq 1/\eps$, so
  the last sum is at most $\frac{r^3 A 4}{n \eps^2}$, so
  $\card{\rho(M) - \rho(Q)}^2 \leq \frac{r^3 A 4}{\eps^2} \frac{n^{0.999}}{n}$, and as $n\geq N$,
  $\card{\rho(M) - \rho(Q)}^2\leq \frac{r^3 A 4}{\eps^2} N^{-0.0001}\leq \eps^2/4$ for large enough $N$.
  This proves the claim.
\end{proof}

\subsection{The distribution of the population size at intermediate steps}
Lastly, we need to gain some understanding to the distribution of the number of unsaturated individuals
after several steps in the parallel development of the Galton-Watson process. Let $\mathcal{T} = \text{GW}(\vec{n}, P)$
be some super-ciritcal Galton-Watson process, i.e. $\rho(M)\geq 1+\eps$.

For $i\in[r]$ and $t\in\mathbb{N}$, we denote by $U(t; \mathcal{T}; e_i)$ the (random) set of unsaturated vertices at time $t$
when we begin the exploration process of $\mathcal{T}$ from configuration $e_i$. We also denote by $S(t; \mathcal{T}; e_i)$ the (random) set of saturated vertices at time $t$.
We will often drop the process $\mathcal{T}$ from the notation if it is clear from context.

What can be said about the distribution of $\card{U(t; e_i)}$? In the standard (i.e., non multi-type) Galton-Watson process,
this question has a rather accurate answer, and this random variable is distributed roughly as a Poisson random variable (see~\cite{AlonSpencer} for example).
Here, intuitively the situation is similar, but as we are not aware of any literature addressing this question,
and instead establish cruder properties of this random
variable (that are enough for our application).

To gain some intuition, it is instructive to check that the expectations of these numbers grow exponentially with $t$. Indeed,
by the second item in Claim~\ref{claim:pump} for large enough $t$ we have $\norm{M^t e_i}_2\geq \delta \rho(M)^t$, and as
$\Expect{}{\card{U(t; e_i)}} = \norm{M^t e_i}_1\geq \norm{M^t e_i}_2$ that the expected growth is at least exponential. A similar
argument, using $\norm{M^t e_i}_1\leq \sqrt{r}\norm{M^t e_i}_2$, shows that the growth is at most exponential.

Unfortunately, expectation considerations do not seem to be enough to make our arguments go through, and we need to establish
stronger properties of $\card{U(t; e_i)}$. For example, intuition suggests that not only should the expectation of $\card{U(t; e_i)}$
be large, but that will be achieved in a very skewed way: either $\card{U(t; e_i)}$ will be $0$ with some probability, or else
it will be exponentially large in $t$. The following lemma proves that something along the lines indeed occurs.
\footnote{The relation between the number of steps in the process and the number of unsaturated individuals we get is likely to be suboptimal,
but this will essentially be irrelevant for us.}

\begin{lemma}\label{lem:better_than_avg}
  For all $A,r\in\mathbb{N}$, $\eps>0$ and $i\in [r]$
  there exists $\xi>0$, such that for all $H>0$ there is
  $t > 0$ such that
  \[
  \Prob{}{\card{U(t; e_i)}\geq H}\geq \xi.
  \]
\end{lemma}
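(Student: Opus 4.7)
The plan is to apply a Paley--Zygmund second-moment argument to a Perron-weighted linear statistic of $X_t := U(t;e_i)$: the goal is to show that with constant probability $\norm{X_t}_1$ is of order $\rho(M)^t$, after which choosing $t$ logarithmic in $H$ forces this threshold to exceed $H$.

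Concretely, let $x_1 = D^{-1/2}y_1$ be the right Perron eigenvector of $M$ from Lemma~\ref{lemma:eigenbasis} (whose coordinates satisfy $c_1/\sqrt{n_j} \leq x_1(j) \leq 1/\sqrt{n_j}$ for a constant $c_1=c_1(A,r,\eps)>0$ by \eqref{eq1}), and set $Z_t := x_1^{\T}X_t$. Since each unsaturated parent of type $j'$ generates independent ${\sf Binomial}(n_j,p_{j',j})$ offspring of each type $j$, one has $\E[X_{t+1}\mid X_t] = M^{\T}X_t$, so $\E[Z_{t+1}\mid X_t] = x_1^{\T}M^{\T}X_t = (Mx_1)^{\T}X_t = \rho(M)\,Z_t$; hence $Z_t/\rho(M)^t$ is a non-negative martingale with $\E[Z_t] = \rho(M)^t x_1(i) \geq (c_1/\sqrt{n_i})\rho(M)^t$. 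A direct binomial computation then gives, for every parent $v$ of type $j'$,
\[
\mathrm{Var}(x_1^{\T}\xi_v) \leq \sum_j x_1(j)^2 n_j p_{j',j} = \sum_j y_1(j)^2 p_{j',j} \leq \sum_j p_{j',j} = O_{A,r,\eps}(1/n),
\]
using $y_1(j)\leq 1$ and the bound $p_{j',j} = \lambda_{j',j}/\sqrt{n_{j'}n_j}=O(1/n)$ from $\eps$-separation and $A$-balance. Summing independently over parents and applying the law of total variance,
\[
\mathrm{Var}(Z_{t+1}) \leq \rho(M)^2\,\mathrm{Var}(Z_t) + O(1/n)\cdot \E[\norm{X_t}_1],
\]
and since $\E[\norm{X_t}_1] = \norm{(M^{\T})^t e_i}_1 = O_{A,r,\eps}(\rho(M)^t)$ by an eigenbasis decomposition (the proof of Claim~\ref{claim:pump} carries over to $M^{\T}$ verbatim, as it has the same spectrum), iterating the geometric recursion yields $\mathrm{Var}(Z_t) \leq O_{A,r,\eps}(\rho(M)^{2t}/n)$ uniformly in $t$.

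Comparing with $\E[Z_t]^2\geq (c_1^2/n)\rho(M)^{2t}$ (using $n_i\leq n$) gives $\E[Z_t^2]/\E[Z_t]^2 \leq C$ for a constant $C=C(A,r,\eps)$, so Paley--Zygmund yields $\Prob{}{Z_t \geq \tfrac12 \E[Z_t]} \geq \xi_0 := 1/(4C)>0$. On this event, the inequality $Z_t \leq \norm{x_1}_{\infty} \norm{X_t}_1 \leq (1/\sqrt{n_{\min}})\norm{X_t}_1$ gives $\norm{X_t}_1 \geq c_2\,\rho(M)^t$ for a constant $c_2=c_2(A,r,\eps)>0$ (using $n_{\min}\geq n/(rA)$). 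Given any $H>0$, picking $t := \lceil \log_{\rho(M)}(H/c_2)\rceil$ makes $c_2\rho(M)^t \geq H$ and hence $\Prob{}{\card{U(t;e_i)}\geq H} \geq \xi_0$, so $\xi:=\xi_0$ works.

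The main obstacle is the variance bound: one must see that $\mathrm{Var}(Z_t)$ and $\E[Z_t]^2$ are both of order $\rho(M)^{2t}/n$, so that their ratio is a dimensionless constant depending only on $A,r,\eps$. This matching is dictated by the Perron--Frobenius structure: the factor $1/\sqrt{n_j}$ in $x_1(j)$ precisely tames the $n_j$-scale binomial variance of one offspring coordinate, and the uniform lower bound from \eqref{eq1} on $y_1(j)$ ensures $\E[Z_t]^2$ does not decay faster than $1/n$. The remaining ingredients --- the mean recursion, geometric summation of the variance recursion, Paley--Zygmund, and the final choice of $t$ --- are routine.
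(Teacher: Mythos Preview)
Your argument is correct and takes a genuinely different route from the paper's. The paper proves Lemma~\ref{lem:better_than_avg} by a survival-probability trick: if the process survives to step $t$ but has at most $H$ unsaturated individuals, it goes extinct at step $t+1$ with probability at least $c^{H}$; chaining these conditional extinction probabilities against the uniform survival lower bound of Lemma~\ref{lem:GW_supcrit} forces $\sum_{t\leq T} q_t$ to be bounded, so some $q_t\leq 1/2$, and the conclusion follows. By contrast, you run a Paley--Zygmund second-moment argument on the Perron-weighted martingale $Z_t = x_1^{\T}X_t$, showing that $\mathrm{Var}(Z_t)$ and $\E[Z_t]^2$ are both $\Theta(\rho(M)^{2t}/n)$ so their ratio is $O_{A,r,\eps}(1)$, whence $Z_t\gtrsim \rho(M)^t/\sqrt{n}$ with probability $\xi_0$ and therefore $\|X_t\|_1\gtrsim \rho(M)^t$.

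Your approach is quantitatively sharper: it yields $t = O(\log H)$, whereas the paper's argument produces $t$ of order $c^{-H}$. It is also self-contained in that it does not rely on Lemma~\ref{lem:GW_supcrit}. The paper's argument, on the other hand, is softer and avoids the spectral bookkeeping; it only needs the qualitative fact that survival has probability bounded away from zero. Two small remarks on your write-up: (i) the upper bound $\|(M^{\T})^t e_i\|_1 = O_{A,r,\eps}(\rho(M)^t)$ does not follow from Claim~\ref{claim:pump} literally (whose second inequality is for $M$, and in any case is not a pure spectral-radius bound for non-normal matrices), but it does follow directly from $M^{\T} = D^{1/2}\Lambda D^{-1/2}$ and the symmetry of $\Lambda$, picking up only a harmless $\sqrt{A}$; (ii) to make $t$ depend only on $A,r,\eps,H$ as the lemma's quantifier order demands, take $t := \lceil \log_{1+\eps}(H/c_2)\rceil$ rather than $\lceil \log_{\rho(M)}(H/c_2)\rceil$, which only strengthens the conclusion since $\rho(M)\geq 1+\eps$.
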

\begin{proof}
  Let $T$ be large enough to be determined later. For each $t=1,\ldots,T$,
  we denote by $X_{t}$ the indicator random variable of that is $1$ if and only if
  $\card{U(t; e_i)}\leq H$.

  Let $\alpha_i$ be the survival probability of the Galton-Watson process starting
  at $e_i$. Note that
  \begin{equation}\label{eq2}
  \alpha_i \leq \prod\limits_{t = 1}^{T}\cProb{}{{\text{GW survives up to step }t}}{\text{GW survives up to step }t+1}.
  \end{equation}
  Observe that for each $t$,
  \[
  \cProb{}{{\text{GW survives up to step }t}}{\text{GW goes extinct in step }t+1}
  \geq \cProb{}{{{\text{GW survives up to step }t}}}{X_{t} = 1} c^{H},
  \]
  for some $c = c(A,r,\eps)>0$; this is because for each individual, the probability that they will have $0$ descendants is at least
  $c$, for an appropriately chosen $c$. Set $q_t = \cProb{}{{{\text{GW survives up to step }t}}}{X_{t} = 1}$; then
  moving to the complements events in~\eqref{eq2}, this bound yields
  \[
  \alpha_i \leq \prod\limits_{t = 1}^{T}(1-q_t c^{H})
  \leq e^{-c^{H}\sum\limits_{t=1}^{T} q_t},
  \]
  From Lemma~\ref{lem:GW_supcrit}, $\alpha_i\geq \delta(A,r,\eps)>0$, so we get that
  $e^{c^{H}\sum\limits_{t=1}^{T} q_t}\leq 1/\delta$, and hence $\sum\limits_{t=1}^{T} q_t\leq c^{-H}\log(1/\delta)$.
  Therefore, there is $t$ such that $q_t\leq \frac{c^{-H}\log(1/\delta)}{T}$.
  We thus take $T = 2c^{-H}\log(1/\delta)$ and get that $q_t \leq \half$, and
  the proof is concluded by noting that
  \[
   \Prob{}{\card{U(t; e_i)}\geq H}
   \geq \Prob{}{\text{GW survives to step }t} (1-q_t)
   \geq \alpha_i (1-q_t)
   \geq \frac{\delta}{2}.
  \]\qedhere
\end{proof}

Next, we prove that the total number of individuals that we explored in constantly many steps obeys a strong tail bound.
\begin{claim}\label{TailBound}
For all $t\in\mathbb{N}$, there exists $c>0$ such that for all $i\in[r]$
and $k\in\mathbb{N}$ we have
\[
\Prob{}{\card{U(t;e_i)\cup S(t;e_i)}\geq k}\leq e^{-c k}.
\]
\end{claim}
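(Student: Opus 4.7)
My plan is to bound the moment generating function (MGF) of $Z_t^{(i)} := \card{U(t;e_i)\cup S(t;e_i)}$ uniformly in $i$ and the admissible parameters $\vec{n}, P$, and then invoke Markov's inequality. Since $t$ is a fixed constant in the claim, I proceed by induction on $t$, establishing that there exist a fixed $s>0$ and a finite constant $B_t$ (depending only on $t,A,r,\eps$) for which $\phi_{i,t}(s):=\Expect{}{e^{s Z_t^{(i)}}}\leq B_t$ holds for every type $i$.

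The base case $t=0$ is immediate, since $Z_0^{(i)}=1$ always. For the inductive step, I exploit the recursive tree structure: decompose $Z_t^{(i)}$ as the root individual plus the subtrees rooted at its children. The root of type $i$ produces $X_{i,\ell}\sim{\sf Binomial}(n_\ell,p_{i,\ell})$ offsprings of type $\ell$, independently across $\ell$, and conditioned on these counts, each subtree rooted at a child of type $\ell$ is an independent copy of a $(t-1)$-step process started from $e_\ell$. Combining this independence with the binomial identity $\Expect{}{a^X}=(1-p+pa)^n$ for $X\sim{\sf Binomial}(n,p)$ gives
\[
\phi_{i,t}(s)
= e^s \prod_{\ell}\bigl(1 + p_{i,\ell}(\phi_{\ell,t-1}(s)-1)\bigr)^{n_\ell}
\leq \exp\!\Bigl(s + \sum_\ell M_{i,\ell}\bigl(\phi_{\ell,t-1}(s)-1\bigr)\Bigr),
\]
using $1+x\leq e^x$ in the inequality. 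Since $M_{i,\ell}=\lambda_{i,\ell}\sqrt{n_\ell/n_i}\leq \sqrt{A}/\eps=:C_0$ under the standing hypotheses, the inductive bound $\phi_{\ell,t-1}(s)\leq B_{t-1}$ yields $\phi_{i,t}(s)\leq \exp(s+rC_0(B_{t-1}-1))=:B_t<\infty$, closing the induction.

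With the uniform MGF bound in hand, Markov's inequality applied to $e^{s Z_t^{(i)}}$ gives $\Prob{}{Z_t^{(i)}\geq k}\leq B_t e^{-sk}$, and a standard adjustment (shrinking $s$ slightly to absorb the multiplicative constant $B_t$) yields the claimed $e^{-ck}$ tail. There is no real obstacle beyond bookkeeping; the key reason the argument goes through is precisely that each individual's offspring distribution has mean bounded by the constant $C_0=\sqrt{A}/\eps$, which is what keeps the binomial MGF finite and the inductive bounds $B_t$ finite for each fixed $t$. Note that any fixed $s>0$ will work (at the cost of $B_t$ growing doubly-exponentially in $t$), which is harmless since $t$ is a constant in the claim.
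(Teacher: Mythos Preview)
Your argument is correct and follows the same inductive skeleton as the paper (induction on $t$, exploiting that each $M_{i,\ell}$ is bounded by a constant depending only on $A,\eps$), but the execution differs. The paper bounds the tail probability directly: in the inductive step it conditions on $L=\card{U(t)\cup S(t)}$, splits into the cases $L>\xi k$ (handled by the inductive hypothesis) and $L\leq\xi k$ (in which case the increment at step $t{+}1$ is a sum of at most $\xi k$ independent binomials with bounded means, controlled by Chernoff), and finally union-bounds. You instead track the moment generating function through the recursion $Z_t^{(i)} = 1+\sum_\ell\sum_{j\leq X_{i,\ell}}Z_{t-1}^{(\ell,j)}$ and the binomial identity $\Expect{}{a^X}=(1-p+pa)^n$, which lets you avoid the auxiliary parameter $\xi$ and the conditioning altogether. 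Your route is slightly cleaner and yields an explicit recursion $B_t=\exp(s+rC_0(B_{t-1}-1))$; the paper's route is more hands-on but equally valid. Both produce constants that degrade rapidly in $t$, which is harmless since $t$ is fixed.

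One minor quibble that applies equally to the paper's own proof: since $Z_t^{(i)}\geq 1$ deterministically, the bound $\Prob{}{Z_t^{(i)}\geq k}\leq e^{-ck}$ cannot literally hold at $k=1$ for any $c>0$. Your ``standard adjustment'' of shrinking $s$ to absorb $B_t$ works for all $k$ exceeding some threshold, but not for $k=1$. In practice this is irrelevant (the claim is only ever invoked for $k$ growing, e.g.\ in Claims~\ref{claim:fake_lb_expect} and~\ref{claim:fake_up_tail}), so the right reading is $\Prob{}{Z_t^{(i)}\geq k}\leq B_t e^{-sk}$, which is what both arguments actually establish.
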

\begin{proof}
The proof is by induction on $t$.

\paragraph{Base case.}
When $t=1$, we have
\[
    \Prob{}{\card{U(t;e_i)\cup S(t;e_i)}\geq k}
    \leq
    \Prob{}{\sum\limits_{j=1}^r{\sf Binomial}(n_j, p_{i,j})\geq k},
\]
where all of the binomial random variables are independent. As $p_{i,j}\leq O_{A,r,\eps}(1/n)$,
the expectation of $\sum\limits_{j=1}^r{\sf Binomial}(n_j, p_{i,j})$ is $O_{A,r,\eps}(1/n)$, so
the claim follows from Chernoff's bound.

\paragraph{Inductive step.}
Suppose the claim is correct for $t$ and let $\xi = \xi(A,r,\eps)>0$ be a small constant to be determined later.
By the inductive hypothesis, $\Prob{}{\card{U(t;e_i)\cup S(t;e_i)}\geq \xi k}\leq e^{-\delta k}$. Condition
on $L = \card{U(t;e_i)\cup S(t;e_i)}$ and assume that $L\leq \xi k$; we have
\[
\Pr\Big[\card{U(t+1;e_i)\cup S(t+1;e_i)}\geq k~~\Big|~~\card{U(t;e_i)\cup S(t;e_i)} = L\Big]
\leq \Prob{}{\sum\limits_{j=1}^L B_j\geq (1-\xi)k}
\]
where each $B_j$ is a Binomial random variable of
the type ${\sf Binomial}(n_{i''}, p_{i',i''})$ for some $i',i''$,
and they are independent. The expectation of
$\sum\limits_{j=1}^L B_j$ is $O_{A,r,\eps}(L)\leq O_{A,r,\eps}(\xi k)\leq O_{A,r,\eps}(\xi k) \leq 0.01 k$
for suitable $\xi$ (this is how we choose $\xi$). Therefore, by Chernoff's bound
$\Prob{}{\sum\limits_{j=1}^L B_j\geq (1-\xi)k}\leq e^{-c k}$ for
some $c=C(A,r,\eps)>0$. Thus,
\begin{align*}
&\Prob{}{\card{U(t+1;e_i)\cup S(t+1;e_i)}\geq k}
\leq
\Prob{}{\card{U(t;e_i)\cup S(t;e_i)}\geq \xi k}\\
&\qquad+
\sum\limits_{L\leq \xi k}
\Prob{}{{\card{U(t;e_i)\cup S(t;e_i)} = L}}\cdot
\cProb{}{\card{U(t;e_i)\cup S(t;e_i)} = L}{\card{U(t+1;e_i)\cup S(t+1;e_i)}\geq k}\\
&\qquad\leq e^{-c' k}.\qedhere
\end{align*}
\end{proof}

\section{Proof of Theorem~\ref{thm:main}}\label{sec:proff_main}
In this section, we prove Theorem~\ref{thm:main}.

\subsection{The sub-critical case}\label{sec:sub_crit}
Let $i\in [r]$ and let $v_i\in V_i$ be a vertex of type $i$. By Lemma~\ref{GWSandwich}
we have
\[
\Prob{}{\card{{\sf CC}(v_i)}\geq \sqrt{n}}
\leq
\Prob{}{{\sf pop}(\text{GW}(\vec{n},P,e_i))\geq \sqrt{n}},
\]
and using the notations in the proof of Fact~\ref{fact:GW_subcript} we can bound this
by $\Expect{}{Z}/\sqrt{n}$, where $Z = \sum\limits_{t=1}^{\infty} X_t$. As in Fact~\ref{fact:GW_subcript},
$\Expect{}{Z}\leq \sum\limits_{t=1}^{\infty} \rho(M)^t\norm{e_i}_2\leq \frac{1}{\eps}$, so
$\Prob{}{\card{{\sf CC}(v_i)}\geq \sqrt{n}}$.

Let $E_v$ be the event that $\card{{\sf CC}(v)}\geq \sqrt{n}$. Then by linearity of expectation
$\Expect{}{\sum\limits_{v\in V}{1_{E_v}}}\leq \sqrt{n}$, so by Markov
\[
\Prob{}{\sum\limits_{v\in V}{1_{E_v}\geq n^{0.51}}}\leq 1/n^{0.01}.
\]
Thus, the probability that $\sum\limits_{v\in V}{1_{E_v}\leq n^{0.51}}$ is $1-o(1)$, and we argue that whenever this happens,
the largest connected component in $G$ has size at most $n^{0.51}$. Indeed, letting $S = \sett{v}{ 1_{E_v} = 1}$,
we have that each vertex outside $S$ has connected component of size at most $\sqrt{n}$; as for $v\in S$, note that
${\sf CC}(v)\subseteq S$, otherwise there is $u\in {\sf CC}(v)\setminus S$, and then as ${\sf CC}(v) = {\sf CC}(u)$,
the connected component of $v$ must be of size at most $\sqrt{n}$, in contradiction to $v\in S$. Thus,
$\card{{\sf CC}(v)}\leq \card{S} \leq n^{0.51}$.
\subsection{The super-critical case}
In this section, we begin the proof of the second item in Theorem~\ref{thm:main}.
\vspace{-2ex}
\paragraph{High level structure of the argument.}
Our proof has three steps.
\begin{enumerate}
  \item
  {\bf No middle ground.} We first show that for $k_{-} := \log^2(n)$, $k_{+} := n^{0.99}$, with probability
$1-o(1)$ the graph $G\sim G(\vec{n},P)$ does not have any connected components whose size
is between $k_{-}$ and $k_{+}$. This statement is the bulk of our proof.

  \item
  {\bf Sprinkling.} Second, we show that with probability $1-o(1)$, all components whose size exceeds $k_{+}$ will be connected,
thus in conjunction with the previous statement, we conclude that with high probability $G$ consists of components
of sizes $k_{-}$, and a single component that exceeds it.

  \item
  {\bf Estimating the size of the component.} Finally, by appealing to the Galton-Watson process again, we show that the probability of an vertex $v_i\in V_i$ to be
in the single component exceeding $k_{-}$ is $\alpha_i\pm o(1)$, where $\alpha_i$ is the probability the Galton-Watson
process with $P$ survives if it starts at initial configuration $e_i$. The proof is then concluded by
a simple application of Chebyshev's inequality.
\end{enumerate}

We remark that this proof outline is analogous to the proof in the simpler case of
the standard Erd\H{o}s-R\'{e}nyi model (see for example~\cite{AlonSpencer}). The proofs of
the first and third steps however require more work.

\subsubsection*{No middle ground: proof overview}
Next, we elaborate on the main step in the argument outlined above, in which we show that
with probability $1-o(1)$, $G$ does not contain components of size between $k_{-}$ and $k_{+}$.

Suppose we wish to explore the connected component of a given vertex $v_i\in V_i$ for some type $i\in [r]$.
Towards this end, it is useful to consider a coupling between what's happening in the graph $G$, and in a corresponding execution of the
Galton-Watson process with $P$ on initial configuration $v_i$.

The only difference between the two procedures is that in the graph case there is no
replacement, hence after we have explored $T$ vertices spread according to the vector $\vec{T} = (T_1,\ldots,T_r)\in\mathbb{N}^r$,
the neighbours of type distribution of a new unsaturated vertex is slightly different. Namely, for each $j\in [r]$, the number of neighbours of type $j$ for a vertex $v_i$ of type $i$ is distributed as ${\sf Binomial}(n_j - T_j, p_{i,j})$, whereas in Galton-Watson process, the number of offsprings of type $j$ for an element of type $i$ would still be $\sim{\sf Binomial}(n_j,p_{i,j})$. However, as these random variables are anyway ``close'' to each other (at least as long as $T$ is not too large), this difference should be thought of as minor. For clarity, we ignore this issue in this overview.

The main issue in working with the original Galton-Watson process, is that it is not ``immediately clear'' whether the process is super critical or
not in a very local point of view. For example, it could be the case that (a) for each $j\in [r-1]$, an individual of type $j$ can only have descendants
of type $j+1$, and the average number of them could be small -- say $\half$, and (b) for $j = r$, an individual of type $j$ is likely to have many descendants
of type $j$ -- say at least $2$ in average. This process is connected and super-critical, however if we start with an individual of type $j=1$, it will
be hard for us to identify it locally. Namely, we would need to make at least $r$ steps in the process in order to have a chance of having an offspring
of type $r$, in which case it is ``clear'' that the process has a chance to never go extinct.

To circumvent this issue, we consider a modified Galton-Watson process, call it $\text{GW}_{{\sf modified}}$,
in which a step corresponds to a batch of constantly many parallel steps in the original Galton-Watson process.
We will also explore the connected component of $v$ in a process that is analogous to the modified Galton-Watson process.
Using this idea, we are able to show that:
\begin{enumerate}
  \item in order for the connected component to exceed $k_{-}$ in size, except for negligible probability,
  the modified Galton-Watson process must have gone on for at least $\Omega(k_{-})$ steps.
  \item For $T = O(k_{-})$, if graph exploration process survives at least $T$ steps, then except for negligible probability it
  has at least $\Omega(T)$ unsaturated vertices at step $T$.
  \item If the graph exploration process at least $U$ unsaturated vertices at some point, then except for
  probability $e^{-\Omega(U)}$, the connected component of $v_i$ will exceed $k_{+}$ in size. Intuitively,
  this is true since each one of the unsaturated vertices can be thought of as initiating a Galton-Watson process, which has
  positive probability of not going extinct. Furthermore, the events of ``not going extinct'' are not too negatively correlated.
\end{enumerate}

Below is a formal statement of a lemma that easily implies the ``no middle ground'' step.
\begin{lemma}\label{lem:main_no_mid}
  For all $A, r\in\mathbb{N}$, $\eps>0$, there exists $N>0$, $\delta>0$ such that if $\vec{n}$ is $A$-balanced, $n\geq N$, $\rho(M)\geq 1+\eps$
  and $\Lambda$ is $\eps$-separated, then for each $k_{-}\leq k\leq k_{+}$ and $v\in V$,
  \[
  \Prob{}{\card{{\sf CC}(v)} = k}\leq 3e^{-\delta k_{-}}.
  \]
\end{lemma}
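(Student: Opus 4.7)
The plan is to view the exploration of $\mathsf{CC}(v)$ as a serial breadth-first search: at each time step $t$ we pop one unsaturated vertex $w_t$ from the queue $Q_{t-1}$, saturate it, and add its newly discovered neighbors (those not yet queued or saturated) to form $Q_t$. Define the Perron-weighted queue size $Y_t:=\sum_{w\in Q_t}x_1(\tau(w))$, where $x_1$ is the top eigenvector of $M$ from Lemma~\ref{lemma:eigenbasis} (normalized so that $\|x_1\|_2=1$) and $\tau(w)$ is the type of $w$. The exploration terminates precisely when $Q_t=\emptyset$, equivalently $Y_t=0$, so $|\mathsf{CC}(v)|=\inf\{t:Y_t=0\}$. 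The approach is to show that $Y_t$ is essentially a random walk with a uniform positive drift and sub-exponential increments, so its first-passage time to zero has an exponential tail.

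For the drift, restrict throughout to the event $\{|\mathsf{CC}(v)|\le k_+=n^{0.99}\}$, which forces only $o(n_i)$ type-$i$ vertices to be encountered. Conditional on the popped type $j=\tau(w_t)$ and the history $\mathcal F_{t-1}$, the number of new type-$\ell$ neighbors of $w_t$ is distributed as $\mathrm{Binomial}(n_\ell-T'_\ell, p_{j,\ell})$ with mean $(1+o(1))M_{j,\ell}$, where $T'_\ell$ counts the type-$\ell$ vertices already encountered. Consequently
\[
\mathbb E[Y_t-Y_{t-1}\mid\mathcal F_{t-1}]=-x_1(j)+(1+o(1))(Mx_1)_j=((1+o(1))\rho(M)-1)\,x_1(j)\;\ge\;\tfrac{\varepsilon}{2}\,x_1(j)\;\ge\;\beta
\]
for a constant $\beta=\beta(A,r,\varepsilon)>0$, using the uniform lower bound on the coordinates of $x_1$ from Lemma~\ref{lemma:eigenbasis}. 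The conditional variance is $\mathrm{Var}(Y_t-Y_{t-1}\mid\mathcal F_{t-1})=O(\sum_\ell M_{j,\ell}x_1(\ell)^2)=O(1/n)$, and each increment is sub-exponential at scale $K=O(1/\sqrt n)$, since binomial variables with $O(1)$ mean have Poisson-type tails and $x_1(\ell)=O(1/\sqrt n)$.

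Applying a sub-exponential martingale Bernstein inequality to the centered sequence $M_t:=Y_t-Y_0-\sum_{s\le t}\mathbb E[Y_s-Y_{s-1}\mid\mathcal F_{s-1}]$, the event $\{Y_k=0\}$ forces $M_k\le -\beta k$; with $\lambda=\beta k=\Theta(k/\sqrt n)$ the Bernstein exponent satisfies $\lambda^2/V_k=\Theta(k^2/n)/\Theta(k/n)=\Theta(k)$ and $\lambda/K=\Theta(k/\sqrt n)/\Theta(1/\sqrt n)=\Theta(k)$, hence
\[
\Pr\bigl[\,|\mathsf{CC}(v)|=k\,\bigr]\;\le\;\Pr[Y_k=0]\;\le\;\exp\!\bigl(-\Omega(k)\bigr)\;\le\;\exp\!\bigl(-\Omega(k_-)\bigr)
\]
uniformly for every $k\in[k_-,k_+]$, which is precisely the statement (with plenty of slack for the factor of $3$).

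The principal subtlety, and the main obstacle, is that the sequence of popped types $(\tau(w_t))$ is not i.i.d.\ but is a Markov chain driven by the exploration itself; this is precisely what Perron-weighting resolves, since the eigenvalue equation $Mx_1=\rho(M)x_1$ guarantees the $\beta$-drift uniformly across types. A secondary but standard point is handling the sub-exponential (rather than bounded) range of the increments in Bernstein's inequality; the Poisson-type tails of the binomial descendants place us exactly in this regime. Finally, the $(1+o(1))$ without-replacement correction is uniformly valid on $\{|\mathsf{CC}(v)|\le k_+\}$ because $k_+\ll n$.
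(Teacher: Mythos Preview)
Your Perron-weighted random-walk argument is correct and takes a genuinely different route from the paper. One writeup slip: having normalized $\|x_1\|_2=1$, every coordinate satisfies $x_1(\ell)=\Theta_{A,r,\varepsilon}(1)$ (since $x_1\propto D^{-1/2}y_1$ with $n_\ell\asymp n$), so the drift $\beta$, the per-step conditional variance, and the sub-exponential scale $K$ are all $\Theta(1)$; your later Bernstein computation instead uses the unnormalized $x_1=D^{-1/2}y_1$ with $x_1(\ell)=\Theta(1/\sqrt n)$. The exponent is $\Theta(k)$ under either convention, so the slip is cosmetic --- pick one and stay with it. Also, ``restrict to $\{|\mathsf{CC}(v)|\le k_+\}$'' needs to be formalized by stopping the martingale at $\sigma=k\wedge\min\{t:\text{vertices encountered}>k_+\}$; on the target event $\sigma=k$, the drift bound is valid at every step $\le\sigma$, the increments are bounded below by $-O(1)$, and one-sided Freedman gives the lower tail directly.

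The paper does not use the Perron weighting here at all. To cope with the very issue you name --- that the unweighted queue may shrink in expectation from some types --- it \emph{batches}: a ``modified'' step runs $t_j$ parallel rounds of the underlying process (the $t_j$ coming from Lemma~\ref{lem:better_than_avg}), after which the number of new unsaturated vertices stochastically dominates an i.i.d.\ variable $R$ with $\mathbb E[R]\ge 10$, while the total explored is dominated by a sub-exponential $W$. The proof then splits into two cases: either the modified exploration halts in $\le\zeta k_-$ steps yet discovers $\ge k_-$ vertices (ruled out by Chernoff on $\sum W(\ell)$), or it survives $\ge\zeta k_-$ steps, hence accumulates $\Omega(k_-)$ unsaturated vertices (Chernoff on $\sum R(\ell)$), each of which independently has a constant probability of pushing the component past $k_+$ via Lemma~\ref{lem:GW_supcrit}. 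Your single martingale replaces both the batching device and the two-case split, and in fact yields the sharper bound $e^{-\Omega(k)}$ rather than $e^{-\Omega(k_-)}$; the paper's route has the compensating advantage of reusing the survival-probability machinery (Lemmas~\ref{lem:GW_supcrit} and~\ref{lem:better_than_avg}) that it needs elsewhere anyway.
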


\subsection{Proof of Lemma~\ref{lem:main_no_mid}}
\subsubsection{The modified Galton-Watson process}
Consider the Galton-Watson process $\mathcal{T} = \text{GW}(\vec{n} - k_{+}\vec{1}, P)$. Note that by Claim~\ref{claim:Q_supercrit}
this process is supercritical, and for its corresponding matrix $Q = ((n_j - k_+)p_{i,j})_{i,j\in [r]}$ has top-most eigenvalue at least
$1+\eps/2$. We may therefore apply Lemma~\ref{lem:better_than_avg}: first, we pick $\xi_1,\ldots,\xi_r$ from the Lemma~\ref{lem:better_than_avg} for $i=1,\ldots,r$,
set $\xi = \min(\xi_1,\ldots,\xi_r)$ and then take $H = 11\xi^{-r}$. Then, we take $t_1,\ldots,t_r$ from Lemma~\ref{lem:better_than_avg} for our choice of $H$.

We define a modified Galton-Watson process $\mathcal{T}_{{\sf modified}}$ as follows.
The process starts at some vertex $v\in V$, say of type $i$, and maintains as before lists ${\sf LU}(t)$ and
${\sf LS}(t)$ of unsaturated and saturated vertices at step $t$. Upon exploring
a vertex $w\in {\sf LU}(t)$, if its type is $j$, we run the original Galton-Watson process on $j$ for $t_j$ steps (in the ``parallel view''),
and then update ${\sf LS}(t)$ and ${\sf LU}(t)$ accordingly.

\subsubsection{The sandwiching random variables}
We will want to estimate the number of explored vertices as well as the number of unsaturated vertices at various times in the process.
However, as the random variables counting the number of newly explored vertices at each step are dependent, we are not able to use strong
concentration bounds such as Chernoff's inequality. To overcome this difficulty, we will use auxiliary random variable to lower bound the
number of newly found unsaturated vertices, and upper bound the total number of explored vertices, at each point in the process.

For each $i$, let $R_i$ be the random variable counting the number of unsaturated vertices found in a single step of the modified Galton-Watson $\mathcal{T}_{\sf modified}$
process in a single step from configuration $e_i$. Let $C>0$ be a large enough constant to be determined later, and set
$R = \min(R_1,\ldots,R_r, C)$ (where $R_1$,\ldots $R_r$ are independent). It is clear that $R$ is stochastically dominated by
each $R_i$, and we next show that it obeys a tail bound and its the expectation of $R$ is strictly larger than $1$.
\begin{claim}\label{claim:fake_lb_expect}
  There exists $C = C(A,r,\eps)>0$ such that in the above set-up, $\Expect{}{R}\geq 10$.
\end{claim}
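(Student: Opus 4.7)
The plan is to exploit the fact that each $R_i$ has the same distribution as $\card{U(t_i; e_i)}$ in the original Galton-Watson process $\mathcal{T}$ (since one step of $\mathcal{T}_{\sf modified}$ from $e_i$ is defined to be $t_i$ parallel steps of $\mathcal{T}$ from $e_i$). The parameters $\xi_1,\ldots,\xi_r$, $\xi$, $H = 11\xi^{-r}$, and $t_1,\ldots,t_r$ were chosen precisely so that Lemma~\ref{lem:better_than_avg} applies and gives $\Prob{}{R_i \geq H} \geq \xi_i \geq \xi$ for each $i\in[r]$. This is the only nontrivial input needed.

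Given this, I would choose the constant $C := H = 11\xi^{-r}$. Since $R_1,\ldots,R_r$ are independent, the event $\{R_i \geq H \text{ for every } i\}$ has probability at least $\xi^r$. On this event, every coordinate of the minimum is at least $H = C$, so $R = \min(R_1,\ldots,R_r, C) = C = H$. In particular,
\[
\Expect{}{R} \;\geq\; H\cdot\Prob{}{R \geq H} \;\geq\; H\cdot \xi^r \;=\; 11\xi^{-r}\cdot \xi^r \;=\; 11 \;\geq\; 10,
\]
where in the first inequality I used that $R \geq 0$.

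There is no real obstacle here: the ``work'' has already been done in the choice of $H$ and the $t_i$'s, which was engineered to absorb exactly the factor $\xi^r$ lost to independence across the $r$ coordinates. The only thing to verify is that the hypotheses of Lemma~\ref{lem:better_than_avg} are in force, which is immediate from Claim~\ref{claim:Q_supercrit}: the shifted process $\mathcal{T} = \text{GW}(\vec{n}-k_+\vec{1}, P)$ has top eigenvalue of its descendants matrix $Q$ at least $1+\eps/2$, is $A$-balanced up to a negligible correction, and $\Lambda$ remains $\eps$-separated, so Lemma~\ref{lem:better_than_avg} applies with parameters depending only on $A, r, \eps$. Consequently $C$ depends only on $A,r,\eps$ as required.
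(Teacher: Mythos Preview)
Your proof is correct and in fact slightly cleaner than the paper's. Both arguments rest on the same key input: by the choice of $t_i$ via Lemma~\ref{lem:better_than_avg}, each $R_i$ satisfies $\Pr[R_i\geq H]\geq \xi$, and by independence $\Pr[\min_i R_i\geq H]\geq \xi^r$. The paper first lower-bounds the uncapped variable $\tilde R=\min_i R_i$ via $\E[\tilde R]\geq\sum_{h=1}^{H}\Pr[\tilde R\geq h]\geq H\xi^r=11$, and then spends a second step invoking the tail bound of Claim~\ref{TailBound} to show that truncating at a sufficiently large $C$ loses at most $1$. You bypass this second step entirely by taking $C:=H$, so that $\E[R]\geq H\cdot\Pr[R=H]\geq H\xi^r=11$ in one line; Claim~\ref{TailBound} is not needed. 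The paper's version has the cosmetic advantage that $C$ is decoupled from $H$, but since the only downstream use of $C$ is as a uniform upper bound on $R$ for a Chernoff argument, your specific choice works just as well.
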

\begin{proof}
  First, consider the random variable $\tilde{R} = \min(R_1,\ldots,R_r)$. Then
  \[
  \Expect{}{\tilde{R}} = \sum\limits_{h=1}^{\infty}\Prob{}{\tilde{R}\geq h}
  \geq \sum\limits_{h=1}^{H}\Prob{}{\tilde{R}\geq h}
  = \sum\limits_{h=1}^{H}\prod\limits_{i=1}^{r}\Prob{}{R_i\geq h}
  \geq \sum\limits_{h=1}^{H} \xi^r = H\xi^r = 11.
  \]
  Next, write
  \[
  \Expect{}{R} = \Expect{}{\tilde{R}} - \Expect{}{\tilde{R} 1_{R > C}}
  \geq \Expect{}{\tilde{R}} - C\sum\limits_{h=C+1}^{\infty}\Prob{}{\tilde{R}\geq h}
  \geq 100 - C\sum\limits_{h=C+1}^{\infty}\Prob{}{\tilde{R}\geq h}.
  \]
  To upper bound the last sum, we use Claim~\ref{TailBound} to bound $\Prob{}{\tilde{R}\geq h}\leq e^{-\beta h}$ for
  some $\beta = \beta(t) = \beta(A,r,\eps) > 0$. Therefore,
  \[
  C\sum\limits_{h=C+1}^{\infty}\Prob{}{\tilde{R}\geq h}
  \leq \sum\limits_{h=C+1}^{\infty}e^{-\beta h}
  \leq \frac{C e^{-\beta C}}{1-e^{-\beta}},
  \]
  and taking $C$ large enough, this expression is at most $1$.
\end{proof}

We also construct a random variable $W$ that will serve as an upper bound for the population.
Let $\mathcal{T}^2 = \text{GW}(\vec{n}, P)$, and consider $\mathcal{T}_{{\sf modified}}^2$ to
be the modified Galton-Watson process of $\mathcal{T}^2$. Namely, this process maintains lists of
saturated and unsaturated vertices, and at each step picks an unsaturated vertex -- say $w$ whose
type is $j$ -- and then performs $t_j$ parallel steps of $\mathcal{T}^2$ on $w$. Finally, the lists
of saturated and unsaturated individuals are updated appropriately.

Let $\tilde{R}_i$ be the random variable measuring the total population explored in a single step of
$\mathcal{T}^2_{{\sf modified}}$ from configuration $e_i$, and define $W = \max(\tilde{R}_1,\ldots,\tilde{R}_r)$.
\begin{claim}\label{claim:fake_up_tail}
  There exists $\delta = \delta(A,r,\eps)>0$ such that for all $h$,
  $\Prob{}{W\geq h}\leq r e^{-\delta h}$.
\end{claim}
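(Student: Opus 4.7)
The plan is to reduce this to the tail bound already established in Claim~\ref{TailBound}. Observe that $\tilde{R}_i$ is, by construction, the total number of saturated and unsaturated individuals after $t_i$ parallel steps of the original Galton-Watson process $\mathcal{T}^2 = \text{GW}(\vec{n},P)$ starting from initial configuration $e_i$. In the notation of Section~3.3, that is exactly $\card{U(t_i;e_i)\cup S(t_i;e_i)}$. Since the $t_i$'s were fixed at the outset (depending only on $A,r,\eps$), each one is a constant for the purposes of applying Claim~\ref{TailBound}.

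I would then invoke Claim~\ref{TailBound} for each $i\in[r]$, which yields a constant $c_i = c_i(A,r,\eps)>0$ such that
\[
\Prob{}{\tilde{R}_i\geq h} = \Prob{}{\card{U(t_i;e_i)\cup S(t_i;e_i)}\geq h}\leq e^{-c_i h}
\]
for every $h\in\mathbb{N}$. Setting $\delta = \min_{i\in[r]} c_i > 0$, which still depends only on $A,r,\eps$, gives $\Prob{}{\tilde{R}_i\geq h}\leq e^{-\delta h}$ uniformly in $i$.

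Finally, a union bound over the $r$ types finishes the job: since $W = \max(\tilde{R}_1,\ldots,\tilde{R}_r)$,
\[
\Prob{}{W\geq h} = \Prob{}{\exists i:\ \tilde{R}_i\geq h}\leq \sum_{i=1}^{r}\Prob{}{\tilde{R}_i\geq h}\leq r e^{-\delta h},
\]
as desired. There is no real obstacle here; the only thing to check carefully is the bookkeeping that the $t_i$ parameters selected at the beginning of Section~4.3.1 are absolute constants in the parameters $A,r,\eps$, so that Claim~\ref{TailBound} (whose constant $c$ depends on $t$) applies with a uniform $\delta$.
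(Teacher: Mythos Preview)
Your proposal is correct and matches the paper's own proof, which simply states ``Immediate from Claim~\ref{TailBound} and the union bound.'' You have unpacked exactly that: identify $\tilde{R}_i$ with $\card{U(t_i;e_i)\cup S(t_i;e_i)}$ for the process $\mathcal{T}^2=\text{GW}(\vec{n},P)$, apply Claim~\ref{TailBound} for each fixed $t_i$, take $\delta=\min_i c_i$, and union bound over the $r$ types.
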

\begin{proof}
  Immediate from Claim~\ref{TailBound} and the union bound.
\end{proof}

\subsubsection{The main argument: the proof of Lemma~\ref{lem:main_no_mid}}
Fix $v\in V$. We explore the connected component of $v$ using a process $\mathcal{G}$ as follows. Throughout the process,
we maintain lists ${\sf LU}(t)$ and ${\sf LS}(t)$ of unsaturated and saturated vertices at step $t$ respectively.
At each step, we take an arbitrary $w\in {\sf LU}(t)$ -- say it's type is $i$, and then explore the neighbourhood
of radius $t_i$ around $w$ (analogously to the modified Galton-Watson process). We add $w$ and any other
new saturated vertex found in this step to ${\sf LS}(t)$, and add all new unsaturated vertices to ${\sf LU}(t)$.

Let $D$ be a random variable denoting the total number of steps in the process $\mathcal{G}$.
Let $U(\ell)$ be the number of unsaturated vertices found in the $\ell$-th step of $\mathcal{G}$;
note that the number of unsaturated vertices in step $L$ is then $\sum\limits_{\ell=1}^{L} U(\ell) - L$.
We also denote by $S(\ell)$ the number of saturated
vertices introduced in the $\ell$-th step of the modified process.

\paragraph{The coupling.} We now describe a coupling procedure that will help us analyze the above random variables.
Suppose that at the $\ell$-th step of the process. If we explored more than $k_+$ vertices we halt.
Otherwise, we take an unsaturated vertex $w$ -- say of type $j$, and perform a step of $\mathcal{G}$.
\begin{enumerate}
  \item Sample a random variable $R(\ell)$ which is an independent copy of $R$ and $W(\ell)$ independent copy of $W$
  \item Sample $U(\ell), S(\ell)$ conditioned on $U(\ell)\geq R(\ell)$ and $S(\ell)\leq W(\ell)$ (we justify below why this is possible).
\end{enumerate}
To see that the last step is possible, let $\tilde{U}(\ell)$ be the total number of unsaturated vertices in $G(\vec{n} - \vec{k_{+}}, P)$
when we do a step from a vertex of type $j$. Note that $U(\ell)$ stochastically dominates $\tilde{U}(\ell)$, which in return stochastically dominates $R(\ell)$.
Since the stochastic domination relation is transitive, by Fact~\ref{fact:domination_coupling} we can do the sampling of $U(\ell)$ as above.
The argument for $S(\ell)$ is analogous.

This process halts if the total population exceeds $k_+$, or we ran out of unsaturated vertices. To make the analysis simpler,
it will be helpful for us to imagine we still sample copies of $R$, namely $R(\ell)$'s even after the above process terminates,
so we in fact have the random variables $R(\ell)$ for all $\ell\in\mathbb{N}$. We also remark that they are independent.

\paragraph{Analysis.}
Let $\zeta$ be a parameter to be chosen later. To bound the probability in question, we first use the union bound to say
\begin{equation}\label{eq3}
  \Prob{}{\card{{\sf CC}(v)} = k} \leq
  \underbrace{\Prob{}{D\leq \zeta k_{-}\text{ and }\card{{\sf CC}(v)} \geq k_{-}}}_{(\rom{1})}
  +
  \underbrace{\Prob{}{D\geq \zeta k_{-}\text{ and }\card{{\sf CC}(v)} \leq k_{+}}}_{(\rom{2})}.
\end{equation}

We first show that it is very unlikely that the process terminates early and explores many vertices.
\begin{claim}\label{claim:upper_bound_total_pop}
  There exist $\zeta = \zeta(A,r,\eps)>0$, $\delta_1 = \delta_1(A,r,\eps)>0$ such that
  $(\rom{1})\leq e^{-\delta_1 k_{-}}$.
\end{claim}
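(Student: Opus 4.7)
The plan is to bound the total population discovered by the exploration in terms of the i.i.d.\ upper-bounding variables $W(\ell)$ supplied by the coupling, and then apply a sub-exponential tail bound.

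First, I would observe that on the event $\{D\le \zeta k_{-}\text{ and }\card{{\sf CC}(v)}\ge k_{-}\}$, the exploration process $\mathcal{G}$ has discovered at least $k_{-}$ vertices by time $D\le \zeta k_{-}$.  Since each step $\ell$ of $\mathcal{G}$ introduces $S(\ell)+U(\ell)$ new vertices to ${\sf LS}\cup{\sf LU}$, and the coupling produces independent copies $W(\ell)$ of $W$ with $S(\ell)+U(\ell)\le W(\ell)$ (this is precisely the role of $W$: it dominates the total population produced by a single modified-step of $\mathcal{T}^2$, which in turn dominates what a step of $\mathcal{G}$ can produce, via the standard no-replacement-vs-replacement argument of Lemma~\ref{GWSandwich}), we deduce
\[
\Prob{}{D\le \zeta k_{-}\text{ and }\card{{\sf CC}(v)}\ge k_{-}}
\le
\Prob{}{\sum_{\ell=1}^{\zeta k_{-}} W(\ell)\ge k_{-}}.
\]
(If the $S(\ell)\le W(\ell)$ bound in the coupling paragraph is only the saturated count, one simply uses a second independent copy to dominate $U(\ell)$; this only changes constants.)

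Next, I would extract moment bounds on $W$ from Claim~\ref{claim:fake_up_tail}.  Because $\Prob{}{W\ge h}\le r e^{-\delta h}$, $W$ has a bounded expectation $\mu=\sum_{h\ge 1}\Prob{}{W\ge h}\le \frac{re^{-\delta}}{1-e^{-\delta}}$ and a moment generating function $\Expect{}{e^{sW}}$ that is finite for all $s$ in some small interval $[0,s_0]$ depending only on $A,r,\eps$.  Choose $\zeta = \zeta(A,r,\eps)>0$ small enough that $\zeta\mu \le 1/4$, so the expected value of the sum $\sum_{\ell=1}^{\zeta k_{-}}W(\ell)$ is at most $k_{-}/4$.

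Finally, I would apply a Chernoff/Bernstein-type bound for sums of i.i.d.\ sub-exponential random variables: for any $s\in(0,s_0]$,
\[
\Prob{}{\sum_{\ell=1}^{\zeta k_{-}}W(\ell)\ge k_{-}}
\le e^{-s k_{-}} \Expect{}{e^{sW}}^{\zeta k_{-}}
= \exp\!\left(-k_{-}\bigl(s-\zeta \log \Expect{}{e^{sW}}\bigr)\right).
\]
Picking $s$ small but fixed and then shrinking $\zeta$ further if necessary makes the coefficient in the exponent a positive constant $\delta_1 = \delta_1(A,r,\eps)>0$, yielding the desired bound $(\rom{1})\le e^{-\delta_1 k_{-}}$.

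The only place that requires real care is the first step, where one has to justify the domination $\sum_{\ell}(S(\ell)+U(\ell))\le \sum_\ell W(\ell)$ given that $\mathcal{G}$ is the graph process (no replacement) whereas $W$ is defined with respect to the with-replacement process $\mathcal{T}^2=\text{GW}(\vec n,P)$; but this is an easy variant of the upper-bound coupling already performed in Lemma~\ref{GWSandwich}, so no new idea is needed.  Everything else is a routine MGF computation.
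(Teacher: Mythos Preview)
Your proposal is correct and follows essentially the same route as the paper: reduce $(\rom{1})$ to the event $\sum_{\ell\le \zeta k_-} W(\ell)\ge k_-$ via the coupling, then use the sub-exponential tail of $W$ from Claim~\ref{claim:fake_up_tail} to bound this by a Chernoff/MGF argument after choosing $\zeta$ small enough that the mean of the sum is well below $k_-$. Your write-up is in fact a bit more careful than the paper's (which calls $W$ ``sub-Gaussian'' when the tail bound is sub-exponential, and has a typo $\xi$ for $\zeta$), and your parenthetical remark about whether the coupling bounds $S(\ell)$ or $S(\ell)+U(\ell)$ by $W(\ell)$ correctly identifies a point the paper glosses over; as you say, it costs only a constant.
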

\begin{proof}
First, since by the coupling we have that $\card{{\sf CC}(v)}\leq \sum\limits_{\ell=1}^{D} W(\ell)$ if $\card{{\sf CC}(v)}\leq k_{+}$ and
otherwise $\sum\limits_{\ell=1}^{D} W(\ell)\geq k_{+}$, it follows that
\[
    (\rom{1})\leq \Prob{}{D\leq \zeta k_{-} \text{ and } \sum\limits_{\ell=1}^{D} W(\ell)\geq k_{-}}.
\]
By Claim~\ref{claim:fake_up_tail}, each $W(\ell)$ is a sub-Gaussian random variable with constant $\delta(A,r,\eps)>0$, and
in particular its expectation bounded by a constant $K(A,r,\eps)>0$. Thus, by Chernoff's bound
\[
\Prob{}{\sum\limits_{\ell=1}^{d} W(\ell)\geq 2K(A,r,\eps) d}\leq e^{-\delta' d},
\]
for some $\delta' = \delta'(A,r,\eps) > 0$. Therefore, as the probability in question is at most
$\Prob{}{\sum\limits_{\ell=1}^{\xi k_{-}} W(\ell)\geq k_{-}}$, it is upper bounded by $\leq e^{-\delta'' k_{-}}$.
\end{proof}
We fix $\zeta$ from Claim~\ref{claim:upper_bound_total_pop} henceforth.
\begin{claim}\label{claim:upper_bound_total_pop_2}
  There exists $\delta_2 = \delta_2(A,r,\eps)>0$ such that
  $(\rom{2}) \leq 2e^{-\delta_2 k_{-}}$.
\end{claim}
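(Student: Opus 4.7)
The plan is to exploit the lower-bounding variables $R(\ell)$ together with Claim~\ref{claim:fake_lb_expect}, which guarantees $\Expect{}{R}\geq 10$. The key bookkeeping observation is that on the event $\card{{\sf CC}(v)}\leq k_+$, the process $\mathcal{G}$ did not terminate because the population cap $k_+$ was exceeded, so it must have terminated because the unsaturated queue ${\sf LU}$ emptied. Since $\mathcal{G}$ begins with the single unsaturated vertex $v$, adds $U(\ell)$ new unsaturated vertices at step $\ell$, and moves exactly one vertex from ${\sf LU}$ to ${\sf LS}$ at each step, the queue empties after $D$ steps precisely when $\sum_{\ell=1}^{D} U(\ell) = D-1$. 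Combining this with the coupling guarantee $U(\ell)\geq R(\ell)$ forces $\sum_{\ell=1}^{D} R(\ell) < D$.

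Given this, I would union-bound over the possible values of $D$:
\[
(\rom{2})\leq \sum_{L\geq \zeta k_-}\Prob{}{D=L\text{ and }\sum_{\ell=1}^{L}R(\ell)<L}\leq \sum_{L\geq \zeta k_-}\Prob{}{\sum_{\ell=1}^{L}R(\ell)<L}.
\]
By the coupling construction, the $R(\ell)$'s are drawn as i.i.d.\ copies of $R$ prior to (and independently of) the sampling of $U(\ell)$, so they form a genuinely i.i.d.\ sequence. Since $R\in[0,C]$ with $\Expect{}{R}\geq 10$, Hoeffding's inequality gives
\[
\Prob{}{\sum_{\ell=1}^{L}R(\ell)<L}\leq \Prob{}{\sum_{\ell=1}^{L}\bigl(R(\ell)-\Expect{}{R(\ell)}\bigr)<-9L}\leq e^{-cL}
\]
for some constant $c=c(A,r,\eps)>0$. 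Summing the resulting geometric tail from $L=\zeta k_-$ onward yields $(\rom{2})\leq 2e^{-\delta_2 k_-}$ with $\delta_2 = c\zeta/2$ (or any smaller positive constant that absorbs the geometric-series prefactor).

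The one point that needs care is the claim that the $R(\ell)$'s can be treated as unconditionally i.i.d., since the event $\{D=L\}$ involves the $U(\ell)$'s, which are themselves coupled to the $R(\ell)$'s. This is handled by extending the sampling of $R(\ell)$ beyond the termination time $D$ as additional independent copies of $R$, so that $\{R(\ell)\}_{\ell\in\mathbb{N}}$ is a bona fide i.i.d.\ sequence; then the event $\{\sum_{\ell=1}^{L}R(\ell)<L\}$ depends only on this sequence, and the Hoeffding bound applies without conditioning. Once that is in place, the rest of the argument is essentially routine concentration plus geometric summation.
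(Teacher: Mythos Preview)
Your proof is correct and takes a genuinely different route from the paper's.

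The paper argues in two stages: first it shows by Chernoff that $\sum_{\ell=1}^{\zeta k_-}R(\ell)\geq 5\zeta k_-$ except with probability $e^{-\delta' k_-}$, so that on the event $D\geq \zeta k_-$ there are at least $4\zeta k_-$ unsaturated vertices in ${\sf LU}(\zeta k_-)$. It then conditions on this set, and for each of these $\Omega(k_-)$ unsaturated vertices separately invokes Lemma~\ref{GWSandwich} together with the quantitative survival bound of Lemma~\ref{lem:GW_supcrit} to say that the exploration from that vertex reaches size $k_+$ with probability at least $\delta''$; multiplying the complementary probabilities gives $(1-\delta'')^{\zeta k_-}$.

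Your argument bypasses the second stage entirely. By observing that termination via queue-emptying forces $\sum_{\ell=1}^{D}U(\ell)=D-1$ (the paper's own bookkeeping formula $\card{{\sf LU}(L)}=\sum_{\ell\leq L}U(\ell)-L$, up to the initial $+1$), and using the coupling $U(\ell)\geq R(\ell)$ (valid on $\{\card{{\sf CC}(v)}\leq k_+\}$ since the $k_+$ cap is never triggered), you reduce directly to the one-sided large deviation $\Prob{}{\sum_{\ell=1}^{L}R(\ell)<L}$ for an i.i.d.\ bounded sequence with mean $\geq 10$, then sum a geometric tail over $L\geq\zeta k_-$. This is exactly the classical random-walk-with-positive-drift argument from the single-type case, and it is shorter: it uses only Claim~\ref{claim:fake_lb_expect} and Hoeffding, with no further appeal to Lemma~\ref{lem:GW_supcrit} or the sandwiching Lemma~\ref{GWSandwich} inside this claim. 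The paper's version, on the other hand, more closely mirrors the three-step intuition laid out in the proof overview (many unsaturated vertices, each with an independent chance to blow up), and records as a byproduct that $\card{{\sf LU}(\zeta k_-)}$ is large, which is not something your argument yields.

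Your handling of the independence issue (extending $R(\ell)$ past $D$ as fresh i.i.d.\ copies so that $\{\sum_{\ell\leq L}R(\ell)<L\}$ is an event on a genuinely i.i.d.\ sequence) is precisely the device the paper sets up in the coupling paragraph, so that step is fully justified.
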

\begin{proof}
  We first argue that $\Prob{}{\sum\limits_{\ell=1}^{\zeta k_{-}} R(\ell)\leq 5\zeta k_{-}}\leq e^{-\delta' k_{-}}$ for some
  $\delta' = \delta'(A,r,\eps)>0$. Indeed, the random variables $R(\ell)$ are independent, bounded between $0$ and $C = C(A,r,\eps)>0$
  and have expectation at least $10$ by Claim~\ref{claim:fake_lb_expect}, so this just follows from Chernoff's bound. Therefore,
  we can write
  \[
    (\rom{2}) \leq e^{-\delta' k_{-}} + \Prob{}{D\geq \zeta k_{-}\text{ and } \sum\limits_{\ell=1}^{\zeta k_{-}} R(\ell)\geq 5\xi k_- \text{ and }\card{{\sf CC}(v)} \leq k_{+}}.
  \]
  The rest of the proof is devoted to bounding the probability on the right hand side.
  Note that whenever the event on the right hand side holds, we have that the number of unsaturated vertices explored by time $\zeta k_{-}$ is
  $\sum\limits_{\ell=1}^{\zeta k_{-}} R(\ell) - \zeta k_{-}$, and in particular at least $4\zeta k_-$, so we may bound this probability by
  \[
    \Prob{}{\card{{\sf LU}(\zeta k_{-})} \geq 4\zeta k_- \text{ and }\card{{\sf CC}(v)} \leq k_{+}}
    \leq
    \cProb{}{\card{{\sf LU}(\zeta k_{-})} \geq 4\zeta k_-}{\card{{\sf CC}(v)} \leq k_{+}}
  \]
  Consider the process at time $\zeta k_-$, and condition on the set of vertices in ${\sf LU}(\zeta k_-)$ and
  the set of vertices ${\sf LS}(\zeta k_-)$, and let $s_1,s_2$ be sizes of these sets respectively. The contribution of the case
  $s_1+s_2 > k_{+}$ to the probability in question is $0$, so we assume henceforth that $s_1+s_2\leq k_{+}$.
  Fix an ordering $w_1,\ldots,w_{\zeta k_{-}}$ on some $\zeta k_-$ vertices from ${\sf LU}(\zeta k_-)$. For each $1\leq \ell\leq \zeta k_{-}$, let $E_{\ell}$ be
  the probability that starting the exploration process from $w_{\ell}$ we find at least $k_+$ vertices
  (without using vertices from ${\sf LU}(\zeta k_{-})\cup {\sf LS}(\zeta k_{-})$). Let $q_\ell = \cProb{}{\bigcap_{\ell'<\ell} \bar{E_{\ell'}}}{\bar{E_{\ell}}}$.

  We bound each $q_{\ell}$ from above. Conditioning on $\bigcap_{\ell'<\ell} \bar{E_{\ell'}}$ and further on the vertices explored in the connected components
  of $w_1,\ldots,w_{\ell-1}$. Let $\vec{T}$ be the type statistics of these vertices and of the vertices in ${\sf LU}(\zeta k_{-})\cup {\sf LS}(\zeta k_{-})$.
  We note that the event $E_{\ell}$ then is the event that the vertex $w_{\ell}$ has a connected component of size at most $k_{+}$ in
  $G(\vec{n} - \vec{T}, P)$. Therefore, by Lemma~\ref{GWSandwich}, letting $j$ be the type of $w_{\ell}$ we have
  \[
    q_\ell \leq
    1 - \Prob{}{{\sf pop}(\text{GW}(\vec{n}-\vec{T}-k_{+} \vec{1}, P, e_j))\geq k_{+}}
    \leq
    1 - \Prob{}{\text{GW}(\vec{n}-\vec{T}-k_{+} \vec{1}, P, e_j))\text{ survives}}.
  \]
  Note that by Claim~\ref{claim:Q_supercrit}, the process $\text{GW}(\vec{n}-\vec{T}-k_{+} \vec{1}, P, e_j))$ is supercritical
  and its matrix $Q$ has $\rho(Q)\geq 1+\eps/2$. Thus, by Lemma~\ref{lem:GW_supcrit} the survival probability of this process
  is at least some $\delta'' = \delta''(A,r,\eps)>0$, and so $q_w\leq 1-\delta''$. Thus, we get that
  \[
  \cProb{}{\card{{\sf LU}(\zeta k_{-})} \geq 4\zeta k_-}{\card{{\sf CC}(v)} \leq k_{+}}
  =\prod\limits_{\ell=1}^{\zeta k_{-}} q_{\ell}\leq (1-\delta'')^{\zeta k_{-}}\leq
  e^{-\delta'' \zeta k_{-}}.
  \]
  Therefore, the claim is proved for $\delta_2 = \min(\delta', \delta''\zeta)$.
\end{proof}
Plugging Claims~\ref{claim:upper_bound_total_pop},~\ref{claim:upper_bound_total_pop_2} into~\eqref{eq3} we get that
$\Prob{}{\card{{\sf CC}(v)} = k}\leq 3 e^{-\delta_3 k_{-}}$ for $\delta_3 = \min(\delta_1,\delta_2)$, finishing the proof
of Lemma~\ref{lem:main_no_mid}.\qed

\subsection{Analysis of the size of the giant component}
\subsubsection{Sprinkling: the uniqueness of the giant component}
\begin{claim}\label{claim:sprinkle_aux}
  For all $A,r\in\mathbb{N}$, $\eps>0$ there exists $\delta>0$ such that the following holds.
  Suppose $\vec{n}$ is $A$-balanced and $\Lambda$ is connected and $\eps$-separated, and let
  $i\in [r]$ and $j\in [r]$ be such that $\lambda_{i,j} > \eps$.
  Then for all $k$ we have
  \[
  \Prob{G\sim G(\vec{n}, P)}{\exists S\subseteq V_i, \card{S} = k\text{ such that } \card{N(S)\cap V_j}\leq \frac{\delta k}{\log n}}
  \leq e^{-\delta k}.
  \]
\end{claim}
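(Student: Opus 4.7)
The plan is a union bound over subsets $S\subseteq V_i$ of size $k$, combined with a sharp lower-tail Chernoff bound for the binomial random variable $\card{N(S)\cap V_j}$.

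For any fixed $S\subseteq V_i$ with $\card{S}=k$, the events $\{u\in N(S)\cap V_j\}$ for different $u\in V_j$ are mutually independent (they depend on disjoint sets of potential edges), and each has probability $q_k := 1-(1-p_{i,j})^k$. Hence $\card{N(S)\cap V_j}\sim\mathrm{Binomial}(n_j, q_k)$. Using $\lambda_{i,j}>\eps$ and the $A$-balanced hypothesis we get $n_j p_{i,j}=\lambda_{i,j}\sqrt{n_j/n_i}\geq \eps/\sqrt{A}$, and the elementary inequality $1-(1-p)^k\geq \min(pk/2,1/2)$ then yields $\mu := n_j q_k \geq c k$ for some constant $c=c(A,\eps)>0$ (valid for $k\leq n$).

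Next I would apply the sharp Chernoff bound
\[
  \Prob{}{\mathrm{Binomial}(n_j,q_k)\leq t}\leq \left(\frac{e\mu}{t}\right)^t e^{-\mu}
\]
with $t=\delta k/\log n$. The prefactor equals $\exp\!\bigl(\tfrac{\delta k}{\log n}\log(e\mu\log n/(\delta k))\bigr)$; since $\mu\leq n$, the inner logarithm is $O(\log\log n + \log(1/\delta))$, so the prefactor is only $e^{o(k)}$ for large $n$. Combining with $e^{-\mu}\leq e^{-ck}$ gives, for each fixed $S$,
\[
  \Prob{}{\card{N(S)\cap V_j}\leq \delta k/\log n}\leq e^{-ck/2}.
\]
A union bound over the $\binom{n_i}{k}\leq (en_i/k)^k$ choices of $S$ then produces the target bound $(en_i/k)^k e^{-ck/2}\leq e^{-\delta k}$ once $\delta$ is small enough (say $\delta \leq c/4$) and $k$ is in the regime in which $\log(en_i/k)$ is controlled by $c/4$.

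The main technical tension is that the union-bound factor $\binom{n_i}{k}=e^{\Theta(k\log(n/k))}$ must be dominated by the Chernoff factor $e^{-\Theta(k)}$. The role of the $\log n$ in the denominator of the neighborhood threshold is precisely to keep the Chernoff prefactor $(e\mu/t)^t$ subexponential in $k$, so that the full $e^{-\Theta(k)}$ saving from $e^{-\mu}$ is available to absorb $\binom{n_i}{k}$; the remaining constraint forces $k$ to be comparable to $n$, which is exactly the regime relevant for the sprinkling step (where this claim is applied to components of size at least $k_{+}=n^{0.99}$ already produced by the ``no middle ground'' lemma).
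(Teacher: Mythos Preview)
Your Chernoff estimate for a fixed $S$ is fine; the gap is in the union bound over $S$. You must absorb $\binom{n_i}{k}\leq(en_i/k)^k=\exp\bigl(k\log(en_i/k)\bigr)$, and in the sprinkling step the claim is iterated down to $k$ of order $n^{0.98}$, where $\log(en_i/k)=\Theta(\log n)$. The union-bound cost is therefore $e^{\Theta(k\log n)}$, which swamps the $e^{-ck}$ saving from Chernoff. Your last paragraph conflates ``$k\geq n^{0.99}$'' with ``$k$ comparable to $n$'': your argument actually needs $\log(n_i/k)=O(1)$, i.e.\ $k\geq c'n$ for a constant $c'>0$, which is strictly stronger. (A side remark on why no union bound over $S$ can be rescued: the set of $v\in V_i$ with $N(v)\cap V_j=\emptyset$ has expected size $n_i(1-p_{i,j})^{n_j}=\Theta(n)$, so for $k$ below a fixed fraction of $n$ one can find a bad $S$ with probability $1-o(1)$.)

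The paper's proof does not union-bound over $S$. It instead union-bounds over the candidate small neighbourhoods $T\subseteq V_j$ of size $k'=\lceil\delta k/\log n\rceil$, using that $\card{N(S)\cap V_j}\leq k'$ forces $N(S)\cap V_j\subseteq T$ for some such $T$; for each $T$ it then bounds the probability that a given $k$-set sends no edge into $V_j\setminus T$ by $(1-\eps/n)^{k(n_j-k')}\leq e^{-\Theta(k)}$. The point of this reorganisation is that the combinatorial prefactor becomes
\[
\binom{n_j}{k'}\leq n^{k'}=e^{k'\log n}=e^{\delta k},
\]
with $\delta$ at one's disposal, rather than $\binom{n_i}{k}=e^{k\log(n/k)}$. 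This is exactly where the $\log n$ in the threshold $\delta k/\log n$ earns its keep, and it is the idea missing from your argument.
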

\begin{proof}
  Let $\delta = 1/(4A)$. Denote $k' = \lceil \frac{\delta k}{\log n}\rceil$. By the union bound,
  \begin{align*}
    &\Prob{G\sim G(\vec{n}, P)}{\exists S\subseteq V_i, \card{S} = k\text{ such that } \card{N(S)\cap V_j}\leq k'}\\
    &\qquad\leq \sum\limits_{T\subseteq V_j, \card{T} = k'} \Prob{G\sim G(\vec{n}, P)}{\exists S\subseteq V_i, \card{S} = k\text{ such that } N(S)\cap V_j\subseteq T)}\\
    &\qquad\leq \sum\limits_{T\subseteq V_j, \card{T} = k'}\prod\limits_{\substack{s\in S\\ t\not\in T}}\left(1-\frac{\eps}{n}\right)\\
    &\qquad= \binom{n_j}{k'}\left(1-\frac{\eps}{n}\right)^{k\cdot(n_j-k')}\\
    &\qquad\leq \binom{n_j}{k'}e^{-\frac{\eps}{n}k\cdot(n_j-k')},
  \end{align*}
  where in the last inequality we used $e^{-x}\geq 1-x$.
  Note that $\binom{n_j}{k'}\leq n^{k'}\leq e^{k' \log n}\leq e^{\delta k}$, and
  $n_j\geq n/A$, $k'\leq n/2A$, so we get that the above probability is upper bounded by
  $e^{-\delta k} e^{-k/2A}\leq e^{-k/4A}$.
\end{proof}

Define the matrix $P' = \left(1_{\lambda_{i,j} > 0} \frac{1}{n^{1.95}}\right)_{i,j\in[r]}$.
\begin{claim}\label{claim:edge_between_sets}
  Let $i,j\in [r]$ be such that $\lambda_{i,j}\geq \eps$, and let $S\subseteq V_i$, $T\subseteq V_j$ each
  be of size at least $n^{0.98}$. Then
  \[
    \Prob{G'\sim G(\vec{n}, P')}{N(S)\cap T=\emptyset}\leq e^{-n^{0.01}}.
  \]
\end{claim}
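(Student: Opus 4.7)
The plan is straightforward and relies only on the independence of edges in the $G(\vec n, P')$ model together with the elementary inequality $1-x\leq e^{-x}$. Because the sampling in $G(\vec n, P')$ includes each potential edge $(s,t)$ with $s\in V_i$, $t\in V_j$ independently with probability $p'_{i,j} = 1/n^{1.95}$ (using that $\lambda_{i,j}\geq \eps > 0$, so the indicator in the definition of $P'$ is $1$), the event $\{N(S)\cap T = \emptyset\}$ is exactly the intersection of the $|S|\cdot|T|$ independent events that individual edges fail to appear.

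First, I would write
\[
\Prob{G'\sim G(\vec n, P')}{N(S)\cap T = \emptyset}
= \prod_{s\in S,\, t\in T}\left(1 - \frac{1}{n^{1.95}}\right)
= \left(1 - \frac{1}{n^{1.95}}\right)^{|S|\cdot |T|}.
\]
Using $1-x\leq e^{-x}$ for $x\in [0,1]$, this is bounded by $\exp\!\left(-|S|\cdot|T|/n^{1.95}\right)$.

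Next, I would plug in the hypothesis $|S|,|T|\geq n^{0.98}$, which gives $|S|\cdot|T|\geq n^{1.96}$, and hence
\[
\Prob{G'\sim G(\vec n, P')}{N(S)\cap T = \emptyset} \leq \exp\!\left(-\frac{n^{1.96}}{n^{1.95}}\right) = e^{-n^{0.01}},
\]
which is the desired bound.

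There is essentially no obstacle here; the only thing to be careful about is that we are using the auxiliary edge-probability matrix $P'$ (not the original $P$), so the edges between $V_i$ and $V_j$ really do appear independently with the uniform probability $n^{-1.95}$ whenever $\lambda_{i,j}>0$. This claim is presumably a sprinkling step: one first exposes $G(\vec n, P - P')$ to obtain large components, and then uses the independent ``sprinkle'' $G(\vec n, P')$ together with this estimate and a union bound (over pairs of candidate sets $S,T$ of size $\gtrsim n^{0.98}$, likely arising from the ``no middle ground'' step) to conclude that with high probability all large components merge into a single giant component.
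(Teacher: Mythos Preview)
Your proof is correct and is essentially identical to the paper's: compute the probability as $(1-n^{-1.95})^{|S|\cdot|T|}$, apply $1-x\le e^{-x}$, and use $|S|\cdot|T|\ge n^{1.96}$. Your contextual remark about sprinkling is also on target, though in the paper $G$ and $G'$ are sampled independently and compared to $G\cup G'$ via statistical distance rather than by decomposing $P$ as $P-P'$ plus $P'$.
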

\begin{proof}
  The probability in question is
  \[
  \left(1-\frac{1}{n^{1.95}}\right)^{\card{S}\cdot \card{T}}\leq
  e^{-\frac{\card{S}\card{T}}{n^{1.95}}}
  \leq e^{-n^{0.01}}.\qedhere
  \]
\end{proof}

\begin{claim}\label{claim:bd_sd}
  Let $G'\sim G(\vec{n},P')$, $G\sim G(\vec{n},P)$. Then
  ${\sf SD}(G\cup G', G)\leq o(1)$.
\end{claim}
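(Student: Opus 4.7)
The plan is to realize $G\cup G'$ as a single product--Bernoulli random graph and then bound the total variation distance to $G$ via tensorization of the squared Hellinger distance; the naive coordinate-wise SD bound turns out to be too lossy for this range of parameters.

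First, since $G$ and $G'$ are independent and each of them draws every potential edge by an independent coin flip, $G\cup G'$ has distribution $G(\vec{n}, P'')$ where
\[
p''_{i,j} \;=\; 1 - (1-p_{i,j})(1-p'_{i,j}) \;=\; p_{i,j} + p'_{i,j}(1-p_{i,j}).
\]
In particular both $G$ and $G\cup G'$ are product distributions on $\set{0,1}^{\binom{\card V}{2}}$, with marginals on each pair of type $(i,j)$ differing by at most $p'_{i,j}\leq 1/n^{1.95}$, and agreeing exactly whenever $\lambda_{i,j}=0$.

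The direct union bound ${\sf SD}(G,G\cup G')\leq \sum_e \card{p''_e-p_e}$ gives only $O(r^2 n^2/n^{1.95})=O(n^{0.05})$, which is vacuous, so I would instead use the tensorization inequality $\Hell{\mu}{\nu}^2\leq \sum_e \Hell{\mu_e}{\nu_e}^2$ for product distributions, together with the standard estimate
\[
\Hell{\mathrm{Ber}(p)}{\mathrm{Ber}(q)}^2 \;\leq\; \frac{(q-p)^2}{8p} + \frac{(q-p)^2}{8(1-q)} \qquad (0<p\le q<1).
\]
Plugging in $q-p\leq 1/n^{1.95}$ and the crucial lower bound $p_{i,j}\geq \eps/n$ (valid whenever $\lambda_{i,j}\geq \eps$, since $\sqrt{n_i n_j}\leq n$), each summand is $O_{\eps}(n^{-2.9})$. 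Summing over the at most $r^2 n^2$ edges with $\lambda_{i,j}>0$ yields $\Hell{G}{G\cup G'}^2\leq O_{\eps}(n^{-0.9}) = o(1)$, and the classical inequality ${\sf SD}\leq \sqrt{2}\cdot \Hell{\cdot}{\cdot}$ then gives the desired ${\sf SD}(G,G\cup G')=o(1)$.

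The main (and essentially only) obstacle is recognizing why a coordinate-wise SD bound fails here: one must invoke a smoother $f$-divergence — Hellinger or equivalently $\chi^2$ — which gains an extra factor of $1/\sqrt{p_{i,j}}$ per coordinate relative to SD. This is exactly the slack needed, because each nonzero $p_{i,j}$ is $\Omega(1/n)$ while the perturbation is only $O(1/n^{1.95})$, so $(\Delta p)^2/p = O(n/n^{3.9}) = O(n^{-2.9})$ and the total mass of $n^2$ edges still leaves us with $o(1)$.
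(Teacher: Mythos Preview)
Your proof is correct and follows essentially the same route as the paper: both arguments recognize that the coordinate-wise SD bound is too weak, pass to a tensorizing $f$-divergence, and exploit $(\Delta p)^2/p = O_{\eps}(n^{-2.9})$ per edge (using $p_{i,j}\geq \eps/n$) summed over $O(n^2)$ edges. The only cosmetic difference is the choice of divergence---the paper uses KL plus Pinsker's inequality, while you use squared Hellinger plus the SD--Hellinger inequality---and you explicitly note the equivalence with $\chi^2$, which is in fact the very bound the paper applies to KL.
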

\begin{proof}
  Let $X$ be the set of potential edges, i.e. $e = (v_i,v_j)$ where $v_i\in V_i$, $v_j\in V_j$
  for  types $i,j\in [r]$ such that $\lambda_{i,j}\geq \eps$. For each $e\in X$,
  let $Z_e$ be the indicator random variable that $e\in G$, and let $Z'_e$ be the indicator random variable
  that $e\in G'$. Then the statistical distance between $G$ and $G\cup G'$ is at most
  the statistical distance between the ensemble of random variables $(Z_e)_{e\in E}$ and
  $(Z_e\lor Z'_e)_{e\in E}$, and each one of these ensembles consists of independent random variables.
  We compute the KL-divergence between the two ensembles. Due to independence, this KL-divergence is equal to
  \[
  \sum\limits_{e\in E} {\sf D}_{{\sf KL}}(Z_e; Z_e\lor Z'_e)
  =\sum\limits_{e\in E, e=(v_i,v_j)} {\sf D}_{{\sf KL}}({\sf Bernouli}(p_{i,j}); {\sf Bernouli}(p'_{i,j})),
  \]
  where $p'_{i,j} = p_{i,j} + (1-p_{i,j})\frac{1}{n^{1.95}}$. A standard computation using $\log(z)\leq \frac{z-1}{\ln 2}$
  shows that
  \[
  {\sf D}_{{\sf KL}}({\sf Bernouli}(p_{i,j}); {\sf Bernouli}(p'_{i,j}))\leq
  \frac{(p_{i,j} - p'_{i,j})^2}{p'_{i,j}(1-p'_{i,j})\ln 2}
  \leq \frac{n^{-3.9}}{\eps n^{-1} 2^{-1}\ln 2}
  \leq \frac{2}{\eps \ln 2} n^{-2.9},
  \]
  and plugging this about we get that
  \[
  {\sf D}_{{\sf KL}}((Z_e)_{e\in E}; (Z_e\lor Z'_e)_{e\in E})
  =\sum\limits_{e\in E} {\sf D}_{{\sf KL}}(Z_e; Z_e\lor Z'_e)
  \leq n^2 \frac{2}{\eps \ln 2} n^{-2.9} = o(1).
  \]
  Now Pinsker's inequality gives
  ${\sf SD}((Z_e)_{e\in E}; (Z_e\lor Z'_e)_{e\in E})\leq \sqrt{{\sf D}_{{\sf KL}}((Z_e)_{e\in E}; (Z_e\lor Z'_e)_{e\in E})/2} = o(1)$.
\end{proof}

\begin{lemma}\label{lem:sprinkle}
  $\Prob{G\sim G(\vec{n}, P)}{\exists u, v\text{ such that } {\sf CC}(u)\neq {\sf CC}(v)\text{ each of size at least }k_{+}}
  \leq o(1)$.
\end{lemma}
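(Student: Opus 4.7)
The plan is to invoke the sprinkling Claim~\ref{claim:bd_sd}. Let $G' \sim G(\vec{n}, P')$ be independent of $G$ and write $B(H)$ for the event that $H$ has at least two connected components of size $\geq k_{+}$. Since ${\sf SD}(G, G\cup G') = o(1)$, it suffices to prove $\Pr[B(G\cup G')]=o(1)$, and I will expose $G$ first, then $G'$. Several $o(1)$-probability events can be discarded up-front: the no-middle-ground conclusion of Lemma~\ref{lem:main_no_mid} (obtained by summing the bound $3e^{-\delta k_{-}}$ over $v\in V$ and $k\in[k_{-},k_{+}]$), so that every $G$-component has size either $\leq k_{-}$ or $\geq k_{+}$; the expansion conclusion of Claim~\ref{claim:sprinkle_aux} for every pair $(i,j)$ with $\lambda_{i,j}\geq \eps$ and every relevant size $k$; and $|E(G')|\leq n^{0.1}$, which holds with probability $1-o(1)$ by Markov since $\mathbb{E}[|E(G')|] = O(n^{0.05})$.

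Under these conditions the number of large $G$-components is at most $m:=n/k_{+} = n^{0.01}$. The main step is to show each large $G$-component $C$ is spread across every type, i.e., $|C \cap V_j| \geq n^{0.98}$ for all $j\in[r]$. By pigeonhole $|C\cap V_{i_0}| \geq k_{+}/r$ for some $i_0$; since $\Lambda$ is connected there is a path $i_0, i_1, \ldots, i_\ell = j$ of length $\ell \leq r$ in the support of $\Lambda$, and because $\Lambda$ is $\eps$-separated each $\lambda_{i_s, i_{s+1}}\geq \eps$. Iterating the expansion claim along this path and using that $N(C\cap V_{i_s})\cap V_{i_{s+1}} \subseteq C\cap V_{i_{s+1}}$ (since $C$ is a union of $G$-components), each step yields $|C\cap V_{i_{s+1}}| \geq (\delta/\log n)\cdot |C\cap V_{i_s}|$, so $|C\cap V_j| \geq (k_{+}/r)\cdot(\delta/\log n)^r \geq n^{0.98}$ for $n$ large enough.

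Now fix one edge $(i^*,j^*)$ of $\Lambda$ with $\lambda_{i^*,j^*}\geq \eps$. For any two large $G$-components $C_a,C_b$, both $|C_a\cap V_{i^*}|$ and $|C_b\cap V_{j^*}|$ exceed $n^{0.98}$, so by Claim~\ref{claim:edge_between_sets} the $G'$-probability that no $G'$-edge runs between $C_a\cap V_{i^*}$ and $C_b\cap V_{j^*}$ is at most $e^{-n^{0.01}}$. Union-bounding over the $\binom{m}{2}\leq n^{0.02}$ pairs, with $G'$-probability $1-o(1)$ every pair of large $G$-components is joined by a $G'$-edge, and hence all of them lie in a single component of $G\cup G'$. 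Lastly, any $G\cup G'$-component assembled entirely from small $G$-components is built from at most $|E(G')|+1 \leq n^{0.1}+1$ of them, so its total size is at most $(n^{0.1}+1)k_{-} = O(n^{0.1}\log^{2}n) = o(k_{+})$. Therefore $G\cup G'$ has at most one large component, which gives $\Pr[B(G\cup G')]=o(1)$ as required.

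The main obstacle will be the expansion step: one has to verify that the $1/\log n$ loss per hop in Claim~\ref{claim:sprinkle_aux}, together with the factor $1/r$ lost at the initial pigeonhole, still leaves $\gg n^{0.98}$ vertices after up to $r$ iterations, and that the associated union bounds (over the pairs of types, the $m$ large components, and the range of sizes $k\geq n^{0.98}$ to which Claim~\ref{claim:sprinkle_aux} is applied) combine to $o(1)$. Once this is in hand, the pairwise sprinkling via Claim~\ref{claim:edge_between_sets} is short, and the crude bound on $|E(G')|$ rules out the residual possibility of a second large component being assembled solely from small ones.
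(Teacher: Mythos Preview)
Your argument is correct and follows the same sprinkling strategy as the paper: show large $G$-components are spread across all types via Claim~\ref{claim:sprinkle_aux}, join every pair of them with a $G'$-edge via Claim~\ref{claim:edge_between_sets}, and transfer the conclusion from $G\cup G'$ back to $G$ via Claim~\ref{claim:bd_sd}. The expansion iteration and the accompanying union bounds are handled just as in the paper.

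The one substantive addition you make is worth noting. The paper, after merging all large $G$-components into one $G\cup G'$-component, directly concludes that $G\cup G'$ has at most one large component; it does not explicitly rule out the possibility that a second large component of $G\cup G'$ is assembled entirely from small $G$-components glued by $G'$-edges. You close this gap by (i) invoking Lemma~\ref{lem:main_no_mid} up front so that every non-large $G$-component has size at most $k_{-}$, and (ii) bounding $\card{E(G')}\le n^{0.1}$ via Markov, so that any such assembled component has size at most $(n^{0.1}+1)k_{-}=o(k_{+})$. This makes your version slightly more self-contained at the cost of an extra (harmless, non-circular) dependence on Lemma~\ref{lem:main_no_mid}.
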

\begin{proof}
  Denote the probability in question by $q$.

  We first argue that except for probability $e^{-\sqrt{n}}$, each connected component of size $k_{+}$ contains
  least $n^{0.98}$ vertices of each type. More specifically, let $E$ be the event whose complement is
  \[
   \bar{E} = \set{\exists k\geq n^{0.98},\exists S\subseteq V_i, \card{S} = k\text{ such that } \card{N(S)\cap V_j}\leq \frac{\delta k}{\log n}}.
  \]
  By Claim~\ref{claim:sprinkle_aux} and the union bound, the probability of $\bar{E}$ is at most $n\cdot e^{-\delta n^{0.98}}\leq e^{-\sqrt{n}}$,
  where the last inequality holds for large enough $n$. We now argue that if $E$ holds, then any connected component of size at least $n^{0.99}$
  contains at least $n^{0.98}$ vertices of each type. First, note that there is a type $i$ such that
  $\card{C\cap V_i}\geq n^{0.99}/r$. Since $E$ holds, for any $j$ such that $\lambda_{i,j} > 0$ we have that
  $\card{C\cap V_j}\geq \frac{\delta n^{0.99}}{\log n}$, and as $\Lambda$ is connected we get that for all $j$ it holds that
  $\card{C\cap V_j}\geq \frac{\delta^r n^{0.99}}{\log^r n}\geq n^{0.98}$.

  Sample $G\sim G(\vec{n},P)$ and $G'\sim(\vec{n}, P')$. Then
  \begin{align*}
  q\leq
  &{\sf SD}(G, G\cup G') + \Prob{G, G'}{\bar{E}}\\
  &+ \cProb{G,G'}{G\in E}{\text{number of connected components in }G\cup G'\text{ whose size is at least }k_{+}\text{ is at least }2}.
  \end{align*}
  By Claim~\ref{claim:bd_sd}, the statistical distance between $G$ and $G\cup G'$ is $o(1)$. By the argument above,
  the probability of $\bar{E}$ is $o(1)$. Finally, for the last probability, condition on $G$ and let $C_1,C_2,\ldots,C_{\ell}$
  be all connected components of $G$ of size at least $k_{+}$. Then By Claim~\ref{claim:edge_between_sets}, for any distinct $1\leq \ell',\ell''\leq \ell$,
  the probability there is no edge between $C_{\ell'}$ and $C_{\ell''}$ is at most $e^{-n^{0.01}}$, so by the union bound the last probability
  is at most $\ell^2 e^{-n^{0.01}}\leq n^2 e^{-n^{0.01}} = o(1)$. Combining everything, we get that $q=o(1)$.
\end{proof}
\subsubsection{The size of the giant component}
Let $E$ be the event that $G\sim G(\vec{n},P)$ contains only components of size at most $k_{-}$, and
at most a single connected component whose size exceeds $k_{+}$. Using Lemma~\ref{lem:main_no_mid} we get that the probability that
$G\sim G(\vec{n},P)$ contains a connected component
whose size is between $k_{-}$ and $k_{+}$ is at most $n\cdot e^{-\delta k_{-}} = o(1)$ for large enough $n\geq N$. Secondly, by Lemma~\ref{lem:sprinkle}
the probability $G$ contains more than one connected component of size $\geq k_{+}$ is at most $o(1)$. Therefore, $\Prob{}{E} = 1-o(1)$.

For the rest of the proof, we sample $G\sim G(\vec{n},P)$ conditioned on $E$.
Let $W$ be the random variable which is the connected component of size at least $k_{+}$. For each $v$, let $Z_v$ be the indicator random variable
of $v\not\in W$, i.e. that $\card{{\sf CC}(v)}\leq k_{-}$.

Let $\alpha_i,$ be the survival probability of $\text{GW}(\vec{n}, P, e_i)$.

\begin{claim}\label{claim:expectation_compute1}
  For all $i\in [r]$ and $v\in V_i$ it holds that $\Expect{}{Z_v} \leq 1-\alpha_i + o(1)$.
\end{claim}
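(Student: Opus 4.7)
Under the conditioning on the event $E$ every vertex $v$ lies either in the unique giant component $W$ of size at least $k_{+}$ or in a connected component of size at most $k_{-}$, so $Z_v = 1$ is equivalent to $\card{{\sf CC}(v)} \leq k_{-}$. Since $\Prob{}{E} = 1-o(1)$ conditioning on $E$ shifts any probability by at most $o(1)$, reducing the claim to the unconditional lower bound $\Prob{}{\card{{\sf CC}(v)} \geq k_{+}} \geq \alpha_i - o(1)$.

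The strategy is to pass from the graph to a Galton-Watson process via Lemma~\ref{GWSandwich}. Applying its lower-bound half with $\vec{T} = k_{+}\vec{1}$ (valid since $k_{+} = n^{0.99} \leq n_j$ for large $n$, by $A$-balancedness of $\vec{n}$) yields
\[
\Prob{}{\card{{\sf CC}(v)} \geq k_{+}} \;\geq\; \Prob{}{{\sf pop}(\text{GW}(\vec{n} - k_{+}\vec{1}, P, e_i)) \geq k_{+}} \;\geq\; \alpha_i',
\]
where $\alpha_i'$ denotes the survival probability of the modified process $\text{GW}(\vec{n}-k_{+}\vec{1}, P, e_i)$; the second inequality uses that whenever this process survives its total population is infinite. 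By Claim~\ref{claim:Q_supercrit} the modified process is super-critical with mean matrix $M' = \bigl((n_j-k_{+})p_{i,j}\bigr)$ satisfying $\rho(M') \geq 1+\eps/2$, so Theorem~\ref{thm:extinct_solve_gen} applies to both processes: the extinction vectors $\vec{1}-\vec{\alpha}$ and $\vec{1}-\vec{\alpha}'$ are the coordinate-wise minimal nontrivial fixed points of the generating-function maps $f^M$ and $f^{M'}$ from~\eqref{eq:gen_fn_def_simplified}.

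It remains to show $\alpha_i' \geq \alpha_i - o(1)$, i.e.\ that shrinking $\vec{n}$ by $k_{+}\vec{1}$ shifts the nontrivial fixed point by $o(1)$. The quantitative input is that $f^M$ and $f^{M'}$ are $L^\infty$-close on $[0,1]^r$: directly from the product form,
\[
\frac{f^{M'}_i(\vec{z})}{f^M_i(\vec{z})} \;=\; \prod_{\ell=1}^r \bigl(1-p_{i,\ell}(1-z_\ell)\bigr)^{-k_{+}} \;=\; 1 + O(n^{-0.01}),
\]
since $p_{i,\ell}k_{+} = O(k_{+}/n) = O(n^{-0.01})$, and hence $\|f^M - f^{M'}\|_\infty = o(1)$. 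I then plan to iterate both $f^M$ and $f^{M'}$ from $\vec{0}$ for $k = \Theta(\log n)$ steps: the iterates are monotone increasing and converge to the respective minimal nontrivial fixed points, a standard telescoping argument using the uniform Lipschitz bound on $f$ (a constant depending only on $A,r,\eps$) keeps the two sequences within $o(1)$ of each other, and exponential convergence of each iterate to its own fixed point—coming from the fact that in the super-critical regime the Jacobian of $f^M$ at its nontrivial fixed point has spectral radius strictly below $1$—lets me push $k \to \infty$ in a controlled way, yielding $\|\vec{\alpha}-\vec{\alpha}'\|_\infty = o(1)$ and finishing the proof.

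The main obstacle is precisely this last stability step: transferring an $O(n^{-0.01})$ perturbation of the generating-function map into an $o(1)$ perturbation of its minimal nontrivial fixed point, with constants depending only on $A,r,\eps$. The non-degeneracy that makes this work—the strict sub-unit spectral radius of the Jacobian of $f^M$ at the nontrivial fixed point—is a standard feature of super-critical multi-type branching processes and can be extracted from the quantitative Perron-Frobenius bound in Lemma~\ref{lemma:eigenbasis} applied to the linearization of $f^M$ at $\vec{1}-\vec{\alpha}$.
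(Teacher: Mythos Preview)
Your reduction is valid but you take an unnecessarily hard detour, and the step you flag as ``the main obstacle'' has a real gap as written.

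\textbf{The paper's much shorter route.} Instead of passing to $\Prob{}{\card{{\sf CC}(v)}\geq k_{+}}$, the paper bounds $\Prob{}{\card{{\sf CC}(v)}\leq k_{-}}$ directly. Applying the upper bound in Lemma~\ref{GWSandwich} (equivalently, the lower bound after taking complements) with $\vec{T}=k_{-}\vec{1}$ gives
\[
\Prob{}{\card{{\sf CC}(v)}\leq k_{-}}\ \leq\ \Prob{}{{\sf pop}(\text{GW}(\vec{n}-k_{-}\vec{1},P,e_i))\leq k_{-}}.
\]
Now the shift is only by $k_{-}=\log^2 n$, and the event $\{{\sf pop}\leq k_{-}\}$ is determined by the first $k_{-}$ sequential exploration steps. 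A direct coupling of $\text{GW}(\vec{n},P,e_i)$ and $\text{GW}(\vec{n}-k_{-}\vec{1},P,e_i)$ shows the first $k_{-}$ steps agree except with probability $O(r k_{-}^2/n)=o(1)$, so the right side is $\leq \Prob{}{{\sf pop}(\text{GW}(\vec{n},P,e_i))\leq k_{-}}+o(1)\leq (1-\alpha_i)+o(1)$. No fixed-point stability analysis is needed at all.

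\textbf{The gap in your stability step.} Your telescoping argument hinges on a ``uniform Lipschitz bound on $f$ (a constant depending only on $A,r,\eps$)''. Such a bound exists, but it is \emph{larger than $1$}: the Jacobian of $f^M$ at $\vec{1}$ is exactly $M$, with $\rho(M)\geq 1+\eps$, so any operator-norm Lipschitz constant $L$ on $[0,1]^r$ satisfies $L\geq 1+\eps$. Telescoping over $k=\Theta(\log n)$ steps then yields error of order $L^{k}\cdot n^{-0.01}=n^{\Theta(1)}\cdot n^{-0.01}$, which need not be $o(1)$. The contraction you want holds only in a neighbourhood of the nontrivial fixed point (where the Jacobian has spectral radius $<1$), and even there only in a norm adapted to its Perron--Frobenius eigenvector, not in $\ell^\infty$. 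So to salvage your route you would need (i) a quantitative bound $\rho(Df^{M'}(\vec{1}-\vec{\alpha}'))\leq 1-c(A,r,\eps)$, (ii) the observation that the Jacobian is entry-wise monotone in $\vec{z}$ so that this contraction persists on all of $[\vec{0},\vec{1}-\vec{\alpha}']$, and (iii) a comparison argument in a weighted norm rather than the naive telescope. This is all doable, but it is substantially more work than the paper's three-line coupling with $k_{-}$ in place of $k_{+}$.
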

\begin{proof}
  We have
  \[
  \cExpect{G\sim G(\vec{n},P)}{E}{Z_v}
  =\cProb{G\sim G(\vec{n},P)}{E}{\card{{\sf CC}(v)}\leq k_{-}}
  \leq (1+o(1))\Prob{G\sim G(\vec{n},P)}{\card{{\sf CC}(v)}\leq k_{-}},
  \]
  where we used the fact that $\Prob{}{E}\geq 1-o(1)$. By Lemma~\ref{GWSandwich},
  \[
  \Prob{G\sim G(\vec{n},P)}{\card{{\sf CC}(v)}\leq k_{-}}
  \leq
  \Prob{}{{\sf pop}(\text{GW}(\vec{n}-k_{-}\vec{1},P,e_i))\leq k_{-}}.
  \]
  By a coupling argument, the last probability is at most
  $\Prob{}{{\sf pop}(\text{GW}(\vec{n},P,e_i))\leq k_{-}} + o(1)$, which is at
  most $1-\alpha_i + o(1)$.
\end{proof}

\begin{claim}\label{claim:expectation_compute2}
  For all $i\in [r]$ and $v\in V_i$ it holds that $\Expect{}{Z_v} \geq 1-\alpha_i + o(1)$.
\end{claim}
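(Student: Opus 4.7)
The plan is to follow the template of Claim~\ref{claim:expectation_compute1} but to produce a matching lower bound. As in that claim, we work conditional on the event $E$ with $\Prob{}{E}=1-o(1)$, under which $v\notin W$ iff $\card{{\sf CC}(v)}\leq k_{-}$, so
\[
  \cExpect{}{E}{Z_v} = 1 - \cProb{}{E}{\card{{\sf CC}(v)}\geq k_{+}} \geq 1 - (1+o(1))\Prob{}{\card{{\sf CC}(v)}\geq k_{+}}.
\]
By the upper-bound direction of Lemma~\ref{GWSandwich}, $\Prob{}{\card{{\sf CC}(v)}\geq k_{+}} \leq \Prob{}{{\sf pop}(\text{GW}(\vec{n},P,e_i))\geq k_{+}}$, and splitting this last probability on whether the Galton-Watson process survives or goes extinct gives
\[
  \Prob{}{{\sf pop}(\text{GW}(\vec{n},P,e_i))\geq k_{+}} = \alpha_i + \Prob{}{\text{GW extinct and }{\sf pop}\geq k_{+}}.
\]
The whole claim thus reduces to showing $\Prob{}{\text{GW extinct and }{\sf pop}(\text{GW}(\vec{n},P,e_i))\geq k_{+}} = o(1)$.

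To prove this, I would invoke the classical fact that a super-critical multi-type Galton-Watson process conditioned on eventual extinction is itself a multi-type Galton-Watson process, with modified offspring pgfs $\tilde{f}_i(\vec z) = f_i(\alpha_1 z_1,\ldots,\alpha_r z_r)/\alpha_i$. A direct calculation using~\eqref{eq:gen_fn_def_simplified} shows that the mean matrix of the conditioned process satisfies $\tilde{M}_{i,j} = \alpha_j M_{i,j}(1+o(1))$, i.e.\ $\tilde{M} = M D_\alpha + o(1)$ with $D_\alpha = {\sf diag}(\alpha_1,\ldots,\alpha_r)$. The key analytic step is a quantitative sub-criticality bound $\rho(\tilde{M}) \leq 1 - \gamma$ for some $\gamma = \gamma(A,r,\eps) > 0$ uniformly in $\vec n$.

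To obtain this, I use $\rho(MD_\alpha) = \rho(D_\alpha M)$ together with the Collatz-Wielandt characterization applied to the positive test vector $\vec v = \vec 1 - \vec\alpha$, which reduces the task to bounding $\max_i \alpha_i (M(\vec 1 - \vec\alpha))_i/(1-\alpha_i)$. Using the fixed-point equation $\vec\alpha = f(\vec\alpha)$ and the approximation $f_i(\vec z) \approx e^{-(M(\vec 1 - \vec z))_i}$ from the remark after Theorem~\ref{thm:extinct_solve_gen}, we have $(M(\vec 1 - \vec\alpha))_i = \ln(1/\alpha_i)(1+o(1))$, so the expression becomes $\max_i \alpha_i\ln(1/\alpha_i)/(1-\alpha_i) + o(1)$. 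Lemma~\ref{lem:GW_supcrit} gives $\alpha_i\leq 1-\delta$ for some $\delta = \delta(A,r,\eps)>0$, and $\alpha_i \geq f_i(\vec 0) = \prod_\ell (1-p_{i,\ell})^{n_\ell}$ is uniformly bounded below since $\sum_\ell n_\ell p_{i,\ell} = \sum_\ell M_{i,\ell} = O_{A,r,\eps}(1)$; on the resulting compact interval $[\delta',1-\delta]$, the continuous function $h(\alpha) = \alpha\ln(1/\alpha)/(1-\alpha)$ satisfies $h(\alpha) < 1$ strictly, hence is bounded away from $1$, yielding the desired $\gamma$.

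Once $\rho(\tilde{M})\leq 1-\gamma$ is in hand, applying the proof technique of Fact~\ref{fact:GW_subcript} to the conditioned sub-critical process yields $\cExpect{}{\text{extinct}}{{\sf pop}(\text{GW}(\vec{n},P,e_i))} = O_{A,r,\eps}(1)$. By Markov, $\cProb{}{\text{extinct}}{{\sf pop}\geq k_{+}} \leq O_{A,r,\eps}(1)/k_{+} = o(1)$ since $k_{+} = n^{0.99}$; multiplying by $\Prob{}{\text{extinct}}\leq 1$ finishes the bound. The main obstacle is the quantitative sub-criticality of the conditioned process; the qualitative version is classical, but getting constants depending only on $A$, $r$, $\eps$ (uniformly in $\vec n$) requires the Collatz-Wielandt argument above.
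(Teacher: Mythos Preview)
Your reduction is correct: under $E$, the event $Z_v=0$ is exactly $\card{{\sf CC}(v)}\geq k_{+}$, and the upper-bound direction of Lemma~\ref{GWSandwich} together with the decomposition on survival/extinction reduces the claim to showing $\Prob{}{\text{GW extinct and }{\sf pop}\geq k_{+}}=o(1)$. Your argument for this last bound --- pass to the dual (extinction-conditioned) branching process, show its mean matrix is $MD_{q}+o(1)$ with $q_j$ the extinction probabilities, and bound $\rho(MD_{q})$ via Collatz--Wielandt with the test vector $\vec 1-\vec q$ --- is valid. The key inequality $h(q)=q\ln(1/q)/(1-q)<1$ on $(0,1)$ is elementary, and your two-sided bound $q_i\in[\delta',1-\delta]$ (the upper bound from Lemma~\ref{lem:GW_supcrit}, the lower bound from $q_i\geq f_i(\vec 0)$) gives the required uniform gap $1-\gamma$. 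One notational caution: in this section of the paper $\alpha_i$ denotes the \emph{survival} probability, whereas in Theorem~\ref{thm:extinct_solve_gen} it denotes the extinction probability; your argument implicitly uses the latter convention, so it would be cleaner to introduce $q_i=1-\alpha_i$ explicitly.

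The paper's own proof takes a different, more ad hoc route. Instead of passing to the conditioned process, it works with $k_{-}$ rather than $k_{+}$ and splits on the number $D$ of parallel Galton--Watson steps at a threshold $T=\sqrt{\log k_{-}}$: if $D\leq T$ then the expected population is at most $\sum_{t\leq T}\rho(M)^t$, which is sub-polynomial in $k_{-}$, so Markov handles that case; if $D\geq T$ then the paper invokes (a repetition of) the argument behind Claim~\ref{claim:upper_bound_total_pop_2} to show that surviving that long while still going extinct has vanishing probability. Your approach is structurally cleaner and yields an explicit $O(1)/k_{+}$ bound via a single Markov step, at the cost of importing the classical duality for conditioned branching processes and making it quantitative. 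The paper's approach stays entirely within tools already built in Section~\ref{sec:GW_properties_more} and Lemma~\ref{lem:main_no_mid}, but the reference to Claim~\ref{claim:upper_bound_total_pop_2} requires some adaptation (that claim is stated for the graph exploration process, not the pure Galton--Watson process) and the choice of $T$ growing with $n$ is what ultimately forces the $o(1)$.
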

\begin{proof}
  As before,
  \[
  \cExpect{G\sim G(\vec{n},P)}{E}{Z_v}
  =\cProb{G\sim G(\vec{n},P)}{E}{\card{{\sf CC}(v)}\geq k_{-}}
  \geq (1-o(1))\Prob{G\sim G(\vec{n},P)}{\card{{\sf CC}(v)}\geq k_{-}},
  \]
  where we used the fact that $\Prob{}{E}\geq 1-o(1)$. By Lemma~\ref{GWSandwich},
  \[
  \Prob{G\sim G(\vec{n},P)}{\card{{\sf CC}(v)}\geq k_{-}}
  \geq
  \Prob{}{{\sf pop}(\text{GW}(\vec{n}-k_{-}\vec{1},P,e_i))\geq k_{-}}.
  \]
  By a coupling argument, the last probability is at most
  $\Prob{}{{\sf pop}(\text{GW}(\vec{n},P,e_i))\geq k_{-}} - o(1)$,
  and we claim this is at least $1-\alpha_i + o(1)$. To see that, write
  \begin{align*}
  \Prob{}{{\sf pop}(\text{GW}(\vec{n},P,e_i))\geq k_{-}}
  &\geq
  \Prob{}{\text{GW}(\vec{n},P,e_i)\text{ survives}}\\
  &-
  \Prob{}{\text{GW}(\vec{n},P,e_i)\text{ goes extinct and }{\sf pop}(\text{GW}(\vec{n},P,e_i))\geq k_{-}}.
  \end{align*}
  The first probability is $1-\alpha_i$, and to finish the proof we show that the second probability is $o(1)$.
  To see that, consider the parallel view on the Galton-Watson
  process, and let $D$ be a random variable indicating the number of steps it takes until halting, and take $T = \sqrt{\log(k_{-})}$. Write
  \begin{align*}
    &\Prob{}{\text{GW}(\vec{n},P,e_i)\text{ goes extinct and }{\sf pop}(\text{GW}(\vec{n},P,e_i))\geq k_{-}}\\
    &\leq
    \underbrace{\Prob{}{D\leq T\text{ and }{\sf pop}(\text{GW}(\vec{n},P,e_i))\geq k_{-}}}_{(\rom{1})}
    +
    \underbrace{\Prob{}{D\geq T\text{ and } \text{GW}(\vec{n},P,e_i)\text{ goes extinct and }{\sf pop}}}_{(\rom{2})}.
  \end{align*}

  For $(\rom{1})$, note that in $T$ steps, the expected population
  size of the Galton-Watson process is at most $\sum\limits_{t=1}^{T}\rho(M)^{t} \leq \frac{\rho(M)^{T+1}}{\eps}$, which by inequality~\eqref{eq:rho_upper_bd}
  (recall that $\rho(M) = \rho(\Lambda)$) is at most $r^{T+1}/\eps^{T+2}$. Thus, by Markov
  \[
    (\rom{1})
    =
    \Prob{}{\text{population of }\text{GW}(\vec{n},P,e_i)\text{ is at least }k_{-}\text{ in T steps}}
    \leq \frac{r^{T+1}/\eps^{T+2}}{k_{-}}
    =o(1).
  \]

  For $(\rom{2})$, a verbatim repeat of the argument in Claim~\ref{claim:upper_bound_total_pop_2} shows that $(\rom{2}) = o(1)$ (we omit the repetition),
  and we are done.
\end{proof}

\begin{claim}\label{claim:bd_cov}
  Let $i,j\in [r]$ be types (not necessarily distinct), and let $v\in V_i$, $u\in V_j$ be two distinct vertices. Then
  $\Expect{}{Z_v Z_u}\leq \Expect{}{Z_v}\Expect{}{Z_u} + \frac{2k_{-}^2}{n_j}$.
\end{claim}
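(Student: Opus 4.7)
\textbf{Proof plan for Claim~\ref{claim:bd_cov}.}

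The plan is to decompose $\Prob{}{Z_v=1, Z_u=1}$ according to whether $u\in{\sf CC}(v)$. Since $\Prob{}{E}=1-o(1)$, the conditional quantities agree with their unconditional analogues (using $\tilde{Z}_w:=1_{\card{{\sf CC}(w,G)}\leq k_-}$ under $G\sim G(\vec{n}, P)$) up to negligible errors, and I work with the unconditional distribution throughout.

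In the \emph{collision} case $u\in{\sf CC}(v)$, I use exchangeability of $V_j$ under the product-edge distribution: $\Prob{}{u'\in{\sf CC}(v), \card{{\sf CC}(v)}\leq k_-}$ is the same for every $u'\in V_j\setminus\{v\}$. Summing over all such $u'$ gives $\Expect{}{(\card{{\sf CC}(v)\cap V_j}-1_{i=j})\cdot 1_{\card{{\sf CC}(v)}\leq k_-}}\leq k_-$, so each term is at most $k_-/(n_j-1)\leq 2k_-/n_j\leq 2k_-^2/n_j$.

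In the \emph{non-collision} case $u\notin{\sf CC}(v)$, I condition on ${\sf CC}(v)=C$ for $v\in C$, $u\notin C$, $\card{C}\leq k_-$. The conditioning determines all edges touching $C$ (in particular, no edges between $C$ and $V\setminus C$ exist), while the edges inside $V\setminus C$ remain an independent sample from $G(\vec{n}-\vec{T}(C),P)$; consequently ${\sf CC}(u,G)={\sf CC}(u,G[V\setminus C])$. Writing $g(\vec{T}):=\Prob{G'\sim G(\vec{n}-\vec{T},P)}{\card{{\sf CC}(u,G')}\leq k_-}$, the non-collision contribution equals $\sum_C\Prob{}{{\sf CC}(v)=C}\cdot g(\vec{T}(C))$, and replacing $g(\vec{T}(C))$ by $g(\vec{0})$ yields exactly $\Prob{}{\tilde{Z}_v=1}\cdot\Prob{}{\tilde{Z}_u=1}$. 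Thus it suffices to prove $g(\vec{T})-g(\vec{0})\leq O(k_-^2/n_j)$ uniformly for $\norm{\vec{T}}_1\leq k_-$.

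For this key step I use a monotone coupling. Fix $V^{out}\subseteq V$ with profile $\vec{T}$, set $V^{in}=V\setminus V^{out}$, and realize $G':=G[V^{in}]$. Then ${\sf CC}(u,G')\subseteq{\sf CC}(u,G)$, so the discrepancy event $\{\card{{\sf CC}(u,G')}\leq k_-\}\setminus\{\card{{\sf CC}(u,G)}\leq k_-\}$ forces ${\sf CC}(u,G)\supsetneq{\sf CC}(u,G')$ and hence the existence of at least one $G$-edge from ${\sf CC}(u,G')$ to $V^{out}$ (otherwise ${\sf CC}(u,G)\subseteq V^{in}$ and the two components coincide). Conditioning on ${\sf CC}(u,G')=C'$---a set determined by the edges within $V^{in}$---the cross-edges between $V^{in}$ and $V^{out}$ in $G$ are still independent Bernoullis of probability $O_{A,r,\eps}(1/n)$, and the union bound over the at most $\card{C'}\cdot\norm{\vec{T}}_1\leq k_-^2$ potential boundary edges gives a discrepancy probability of $O(k_-^2/n)\leq O(k_-^2/n_j)$. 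Combining this with the collision case yields the claim. The main obstacle is precisely this comparison step: turning the set-theoretic difference ${\sf CC}(u,G)\supsetneq{\sf CC}(u,G')$ into the existence of a single crossing edge and then exploiting the $O(1/n)$ edge probability to keep the resulting bound quadratic in $k_-$ (rather than losing factors of $n$).
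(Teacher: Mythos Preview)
Your proposal is correct and follows essentially the same route as the paper: split according to whether $u\in{\sf CC}(v)$, use exchangeability to bound the collision term by $k_-/(n_j-1)$, and for the non-collision term reduce to comparing $\Prob{G(\vec{n}-\vec{T},P)}{|{\sf CC}(u)|\leq k_-}$ with $\Prob{G(\vec{n},P)}{|{\sf CC}(u)|\leq k_-}$ via a monotone coupling. The paper's proof is terser---it simply invokes ``the natural coupling'' to assert the $k_-^2/n_j$ error---whereas you spell out the key mechanism (the discrepancy forces a cross edge from ${\sf CC}(u,G')$ to $V^{out}$, then union-bound over at most $k_-^2$ potential edges each of probability $O(1/n)$). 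Your treatment of the conditioning on $E$ is also more explicit than the paper's; note that the exact constant $2$ in the statement is not really the point, as the claim is only used to feed an $O(k_-^2/n_j)$ covariance bound into Chebyshev.
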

\begin{proof}
  By definition,
  \[
  \Expect{}{Z_v Z_u} =
  \Prob{}{Z_v = 1}\cProb{}{Z_v=1}{Z_u=1}
  \leq \Expect{}{Z_v}\cProb{}{Z_v=1}{Z_u=1}.
  \]
  To analyze the last probability, condition on the connected component of $v$, call it $S$.
  By symmetry, for each $u'\in V_j$ the probability that $u'\in S$ is the same, and as $\card{S}\leq k_{-}$
  the probability that $u\in S$ is at most $k_{-}/n_j$. Otherwise, $u'\not\in S$, and then
  the second probability is
  \[
  \Expect{G\sim G(\vec{n}-\vec{T},P)}{Z_u = 1}
  \]
  where $\vec{T}$ is the type statistics of $S$. Using the natural coupling between $G(\vec{n},P)$
  and $G(\vec{n}-\vec{T},P)$, this probability is
  $\Expect{G\sim G(\vec{n},P)}{Z_u = 1} + \frac{k_{-}^2}{n_j} = \Expect{}{Z_u} + \frac{k_{-}^2}{n_j}$, and we are done.
\end{proof}

We are now ready to show that for each $i\in [r]$, with probability $1-o(1)$ the size of $W\cap V_i$ is $(\alpha_i + o(1)) n_i$.
\begin{lemma}\label{lem:size_of_CC_in_Vi}
  For all $i\in [r]$, we have
  $\Prob{G\sim G(\vec{n},P)}{\card{\card{W\cap V_i} - \alpha_i n_i} \leq o(n_i)} = 1-o(1)$.
\end{lemma}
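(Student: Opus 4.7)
The plan is to apply a second-moment/Chebyshev argument to the random variable $N_i := \sum_{v \in V_i} Z_v$, which under the event $E$ counts exactly the vertices in $V_i$ that do \emph{not} lie in the giant component $W$. Since conditioning on $E$ changes probabilities only by a $1 \pm o(1)$ factor (as $\Pr[E] = 1 - o(1)$), I will work mostly with the unconditioned distribution $G \sim G(\vec{n}, P)$ and translate the result at the end via $|W \cap V_i| = n_i - N_i$.

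First, I would read off the first moment. Combining Claims \ref{claim:expectation_compute1} and \ref{claim:expectation_compute2}, for every $v \in V_i$ we have $\E[Z_v] = 1 - \alpha_i + o(1)$, whence by linearity
\[
\E[N_i] = (1 - \alpha_i) n_i + o(n_i).
\]
Next, I would control the variance using the covariance bound from Claim \ref{claim:bd_cov}. For any two distinct $v, u \in V_i$,
\[
\mathrm{Cov}(Z_v, Z_u) = \E[Z_v Z_u] - \E[Z_v]\E[Z_u] \leq \frac{2 k_-^2}{n_i},
\]
and each $Z_v$ is a Bernoulli variable with variance at most $1/4$. Therefore
\[
\mathrm{Var}(N_i) \leq \tfrac{1}{4} n_i + n_i(n_i - 1) \cdot \frac{2 k_-^2}{n_i} = O(n_i k_-^2) = O(n_i \log^4 n),
\]
using $k_- = \log^2 n$.

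With these two estimates in hand, Chebyshev's inequality gives, for any $\eta = \eta(n) \to 0$ slowly enough that $\eta^2 n_i / \log^4 n \to \infty$ (for example $\eta = \log^{-1} n$, which works since $n_i \geq n/A$),
\[
\Prob{G \sim G(\vec{n},P)}{|N_i - \E[N_i]| \geq \eta n_i} \leq \frac{\mathrm{Var}(N_i)}{\eta^2 n_i^2} = O\!\left(\frac{\log^4 n}{\eta^2 n_i}\right) = o(1).
\]
Combining this with the first-moment estimate, $N_i = (1 - \alpha_i) n_i + o(n_i)$ with probability $1 - o(1)$. Finally, on the event $E$ (which has probability $1 - o(1)$) we have $|W \cap V_i| = n_i - N_i$, so $||W \cap V_i| - \alpha_i n_i| = o(n_i)$ with probability $1 - o(1)$, as desired.

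The main obstacle is really bookkeeping: making sure that the $o(1)$ terms inherited from the coupling-based bounds on $\E[Z_v]$ (which already swallow the $\Pr[E] = 1 - o(1)$ conditioning factor) are uniform across $v \in V_i$, and checking that $k_-^2 / n_i = o(1)$ so that the covariance bound is small enough to make Chebyshev's inequality succeed with a slack $\eta n_i$ that is itself $o(n_i)$. Both are immediate from $k_- = \log^2 n$ and $n_i \geq n/A$.
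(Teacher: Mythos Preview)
Your proof is correct and follows essentially the same route as the paper: compute $\E[Z_v]$ from Claims~\ref{claim:expectation_compute1} and~\ref{claim:expectation_compute2}, bound the pairwise covariances via Claim~\ref{claim:bd_cov}, and conclude by Chebyshev. Your bookkeeping is in fact a bit cleaner than the paper's, which writes $\card{W\cap V_i}=\sum_{v\in V_i}Z_v$ whereas, as you note, $\sum_{v\in V_i}Z_v$ really counts $n_i-\card{W\cap V_i}$; the only other difference is cosmetic (you take $\eta=\log^{-1}n$ where the paper takes $\zeta=n^{-0.01}$).
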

\begin{proof}
  Fix $i\in [r]$, and note that $\card{W\cap V_i} = \sum\limits_{v\in V_i} Z_v$.
  Set $\zeta = n^{-0.01}$; by Chebyshev's inequality
  \[
  \Prob{}{\card{\card{W\cap V_i} - \Expect{}{\card{W\cap V_i}}}\geq \zeta n_i}
  \leq \frac{{\sf var}(\card{W\cap V_i})}{\zeta^2 n_i^2}.
  \]
  To compute the variance, we write
  \[
  {\sf var}(\card{W\cap V_i})
  =\sum\limits_{v}{{\sf var}(Z_v)}
  +\sum\limits_{v\neq v'} {\sf cov}(Z_v, Z_{v'})
  \leq n_i + n_i^2 \cdot \frac{2k_{-}^2}{n_j},
  \]
  where we used Claim~\ref{claim:bd_cov} in the last inequality.
  Thus, ${\sf var}(\card{W\cap V_i}) = \leq 4 A k_{-}^2 n$, so we get that
  \[
  \Prob{}{\card{\card{W\cap V_i} - \Expect{}{\card{W\cap V_i}}\geq \zeta n_i}}\leq \frac{4 A k_{-}^2 n}{\zeta^2 n_i^2}
  \leq \frac{4A^2 k_-^2}{\zeta^2} n^{-1}\leq \frac{1}{\sqrt{n}}
  \]
  for large enough $n$. The claim now follows as
  $\Expect{}{\card{W\cap V_i}} = n_i-n_i\cdot\Expect{}{Z_v}$ where $v\in V_i$ is some vertex, and
  by Claims~\ref{claim:expectation_compute1},~\eqref{claim:expectation_compute2}
  we have $\Expect{}{Z_v} = 1-\alpha_i + o(1)$.
\end{proof}

Applying the union bound on Lemma~\ref{lem:size_of_CC_in_Vi} over all $i\in [r]$ gives that
$\card{W\cap V_i} = (\alpha_i + o(1))n_i$ for all $i\in [r]$ except for probability $o(1)$.
Denote this event by $E_2$.

The event in question of the second item in Theorem~\ref{thm:main} is simply $E\cap E_2$, and
as each one of these events has probability $1-o(1)$, it follows that the probability of
$E\cap E_2$ is also $1-o(1)$, and we are done.\qed

\section{Numerical simulations}\label{sec:simulations}

In this section we perform several numerical simulations to highlight some
of the features of the pandemic spread in our proposed model.
In Section~\ref{sec:sim1} we perform numerical simulations and calculate the threshold for having a GCC and its size in two-type models and show the agreement
with Theorem~\ref{thm:main} (and the size of the component given in Equation~\eqref{eq:gen_fn_def} and Theorem~\ref{thm:extinct_solve_gen}).

Next, we would like to consider questions regarding the pandemic while it still develops (and not only about the end result of it).
As we proved, the basic reproduction number at the beginning of the disease spread is the
Perron-Frobenius eigenvalue $\rho(M)$ of the initial matrix $M$ and it determines whether or not the outbreak occurs.
As the spread progresses, a non-negligible fraction of the population gets infected and the ratios between
the numbers of unsaturated vertices (vertices that are susceptible) from each of the types get changed.
This follows simply from the fact there are types that are more infectious and susceptible than others and they tend to have a larger percentage of infected vertices.
Thus, while the probabilities matrix $P$ is constant throughout the spread, the matrix $M$ changes and
at each time step is modified. This change in time is not important if we are interested only in the final size of the GCC at the end of the pandemic. However, if we are interested in answering questions before the end of the pandemic, such as the GCC at the herd immunity point, we have to recalculate it at
each time step.

This calculation gives  $M_t=P\cdot {\sf diag}\left(n_1^{(t)},\dots,n_r^{(t)}\right)$, according to the amounts of unsaturated vertices $n_i^{(t)}$ (from each of the types) in the graph at time $t$. We use the SIR framework where saturated vertices are removed.
At each point in time $t$, we calculate the matrix $M_t$ and get a new Perron-Frobenius eigenvalue and a corresponding eigenvector. By updating the matrix we can follow the development
of the disease in time.
Without any countermeasures or external interventions, this time-updating eigenvalue is naturally reduced until the pandemic reaches the entire GCC, and dies out on its own. When trying to curb an outbreak in a given population, the time dependent eigenvalue can be very important, e.g. to check the possible effectiveness of countermeasures.

In Section~\ref{sec:sim2} we use our analytical results to calculate the GCC size at the end of the pandemic
and compare it to its size at the herd immunity point where the basic reproduction number
drops below one.
In Section~\ref{sec:sim3}
we calculate the GCC size at the end of the pandemic
and compare it to its size at the herd immunity point when counter measures are being taken.
We will see that countermeasures can lower the difference between the number of infected at the herd immunity
point and the end of the disease.
In Section~\ref{sec:sim4} we follow the development of the pandemic in time and show the time dependence of the
Perron-Frobenius eigenvalue and the corresponding eigenvector.

\subsection{Comparison of GCC: analytical calculation versus numerical simulation}\label{sec:sim1}

In this subsection we compare the calculation of the threshold to having GCC and its
size as follows from Theorem~\ref{thm:main} and a numerical simulation. We consider a two-type model
and sample graphs with $n_1=n_2=10,000$ vertices. We  generate a random two-type graphs with different configurations for $\lambda_{1,1}$, $\lambda_{2,2}$ and $\lambda_{1,2}=\lambda_{2,1}$.
The Perron-Frobenius eigenvalue $\rho(M)$ of the matrix $M$ reads:
\begin{equation}
 \rho(M) = \frac{M_{1,1}+M_{2,2}}{2}+ \frac{1}{2}\sqrt{(M_{1,1}+M_{2,2})^2 + 4\cdot (M_{1,2}M_{2,1}-M_{1,1}M_{2,2})} \ .
 \label{PF}
\end{equation}
In Figure~\ref{fig:realvsnum} we compare the GCC  size and the phase transition threshold
of the simulations and the analytical calculation. It is clear that, the the random graphs sampled results match the analytical results both for the threshold value of $\rho(M)$ which is $1$ and the size of the GCC.

\begin{figure}[h]
\centering
\begin{tabular}{cc}
  \includegraphics[width=75mm]{./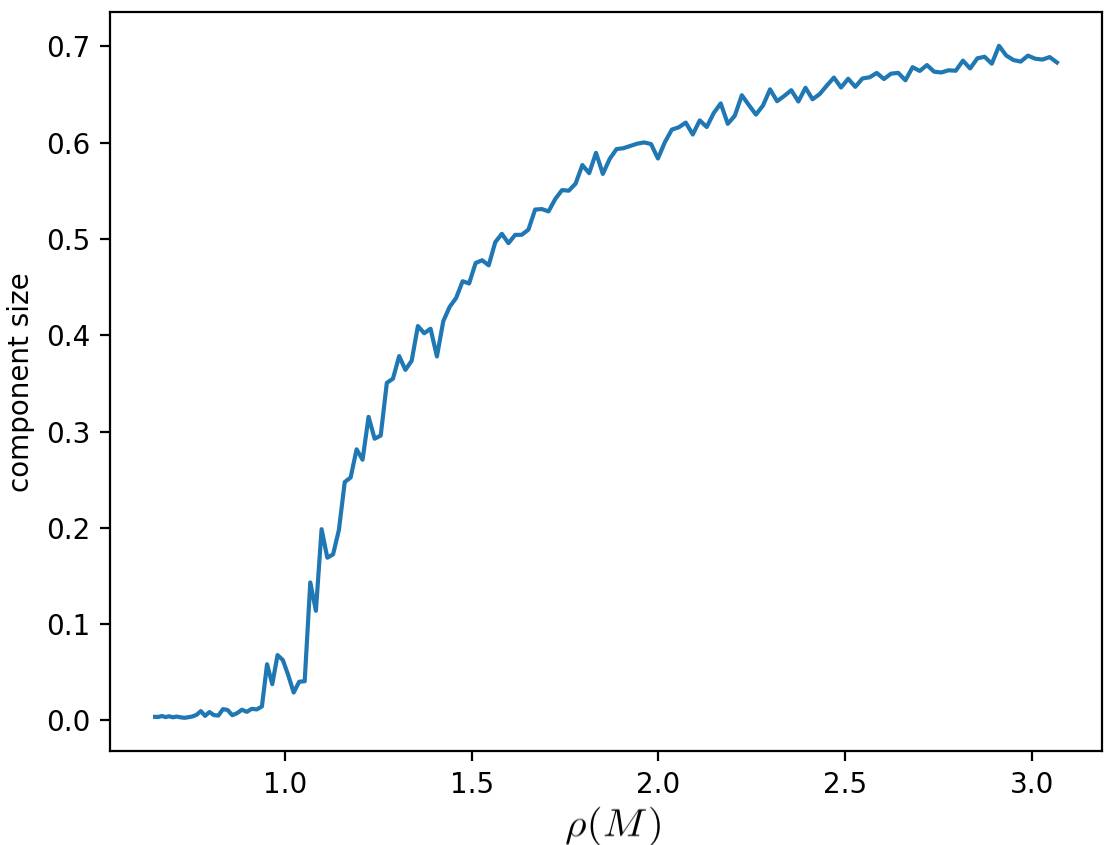} &
  \includegraphics[width=75mm]{./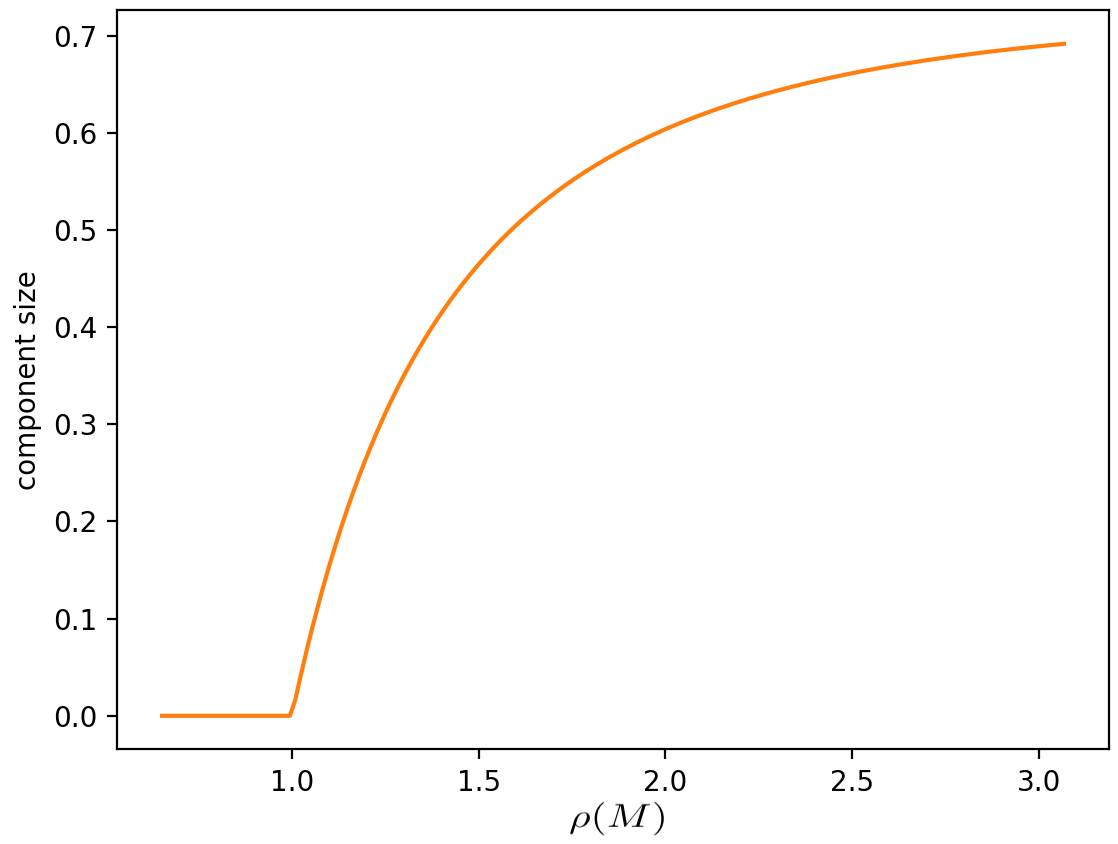}\\
  (a) & (b)\\[6pt]
\end{tabular}
\caption{(a) GCC threshold and size from the random graphs realized in the simulations. (b) GCC threshold and size derived from analytical result of Theorem~\ref{thm:main} and  Remark~\ref{remark:gcc-equation}. We use Equation~\eqref{PF} for the Perron-Frobenius eigenvalue in the two-type case. In both simulations we fixed $\lambda_{1,1}=0.2$ and $\lambda_{1,2}=0.5$ and we took values of $\lambda_{2,2}$ in the range $[0.1,3]$ with steps of $0.2$. Then $\rho(M)$ ranges in $[0,3]$}
\label{fig:realvsnum}
\end{figure}

\subsection{Herd immunity versus end of the pandemic}\label{sec:sim2}

In this subsection we consider the difference between homogeneous and heterogeneous infection graphs and the effect on the fraction of infected population at the end
of the disease. We show the difference between fraction of the population
infected at the herd immunity point where the effective reproduction number drops below one and the fraction infected at the end of the disease. This is the after-burn effect analysed in \cite{us2}.

Consider two types of populations.
Let the initial basic reproduction number be $\rho(M)=2.6$ which is in the estimated
range for the COVID-19, $R_0 \sim 2 - 3$.
We fix $\lambda_{1,2}=0.6$
and vary the ratio $\frac{\lambda_{2,2}}{\lambda_{1,1}}$ and the fraction of type one of the total population $\frac{n_1}{n_1+n_2}$.
In Figure~\ref{fig:herd} we compare
the fraction of the population infected at the end of the disease and at the herd immunity point where the effective reproduction number $\rho(M)$
drops below one (this is the standard definition, see e.g. \cite{britton2020disease,cacciapaglia2021effective}). The uppermost graphs (brown, red and orange) show the fraction of infected at the end of disease, and the lower graphs (pink, purple and green) show the fraction of infected when herd immunity is reached, i.e. $\rho(M)=1$.
We also see the difference between the two fractions of infected population in the two-dimensional
plot (b).

The difference between these two fractions is significant \cite{us2},
 is of much importance and is often ignored
in the discussions about reaching herd immunity.
We also see in the graphs the difference between the homogeneous and heterogeneous cases. Real world pandemic spread follows a heterogeneous network and we see that the fraction of infected population can be significantly lower compared to the often quoted number of the homogeneous spread.

\begin{figure}[h]
\centering
\begin{tabular}{cc}
  \includegraphics[width=60mm]{./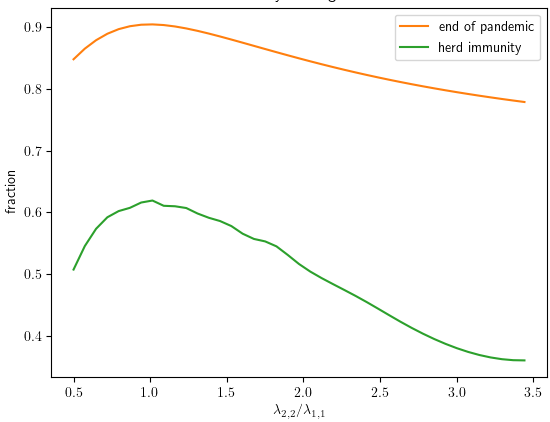} &
  \includegraphics[width=60mm]{./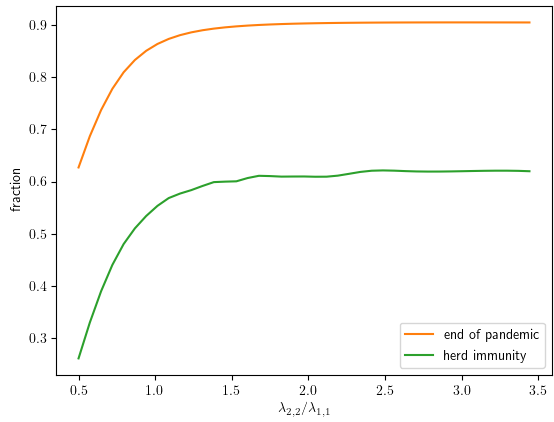}\\
  (a) & (b)\\[6pt]
\end{tabular}
\caption{A comparison between the fraction of the population infected at the end of the disease and at the herd immunity point where the effective reproduction number $\rho(M)$
drops below one. The uppermost graphs (brown, red and orange) show the fraction of infected at the end of disease, and the lower graphs (pink, purple and green) show the fraction of infected when herd immunity is reached $\rho(M)=1$.
We see a significant difference, the after-burn effect \cite{us2}.
We also see the difference between the homogeneous and heterogeneous cases, where the fraction of infected population is lower in the latter. The plot are for two types, where the initial basic reproduction number is  $\rho(M)=2.6$ and $\lambda_{1,2}=0.6$ (we move $\lambda_{1,1}$ on the range of $[0.7,3]$ with steps of $0.1$ and interpolate $\lambda_{2,2}$). In (a) and the sizes of the communities are equal, i.e. $n_1=n_2$. (b) The same effect for the same parameters only changing community sizes, i.e. $\frac{n_1}{n_1+n_2}=10\%$, i.e. the vertices of type $1$ make $10\%$ of all vertices.}
\label{fig:herd}
\end{figure}

\newpage
\subsection{The countermeasure effect}\label{sec:sim3}

We consider the effect of a partial lockdown (which stops
infections between distinct types) on the infected fraction of population at the
end of the pandemic.
The lockdown starts once
reaching the point where $10\%$ of the population is infected.
We take as an example the case of two types
with initial basic reproduction number $\rho(M)=2.6$. We use $\lambda_{1,2}=0.6$
and vary the ratio $\frac{\lambda_{2,2}}{\lambda_{1,1}}$ and the fraction of type one of the total population
 $\frac{n_1}{n_1+n_2}$.
The results are plotted in Figure~\ref{fig:ldown}.
The uppermost graphs (brown, red and orange) show the fraction of infected population at the end of disease when no countermeasure are taken. The lower graphs (pink, purple and green) show the fraction of the population that is infected if we impose the lockdown.

\begin{figure}[h]
\centering
\begin{tabular}{cc}
\includegraphics[width=60mm]{./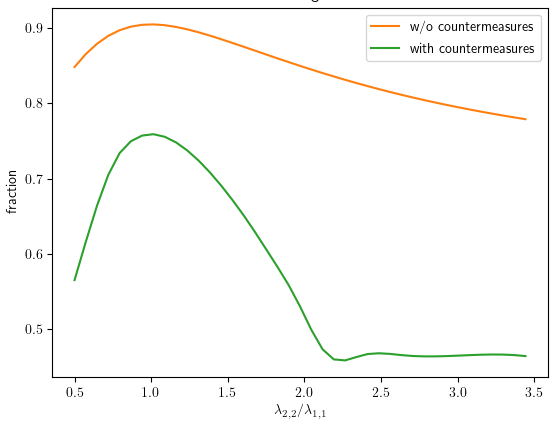} & \includegraphics[width=60mm]{./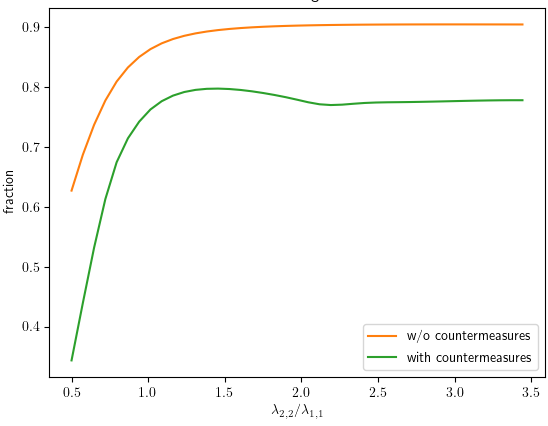}\\
(a) & (b)\\[6pt]
\end{tabular}
\caption{The effect of lockdown on the fraction of the population infected at the end of the pandemic. The uppermost graphs show fraction of infected population at the end of disease when no countermeasure are taken. The lower graphs show the fraction of the population that is infected if we impose a lockdown when reaching the point where $10\%$ of the population is infected. The lockdown is such that different types cannot infect each other and it is done by nullifying the off-diagonal entries of the matrix $\Lambda$. We consider two types, and
all the plots are drawn with initial basic reproduction number $\rho(M)=2.6$ and
$\lambda_{1,2}=0.6$, we move $\lambda_{1,1}$ on the range of $[0.7,3]$ with steps of $0.1$ and interpolate $\lambda_{2,2}$. In (a) we have the Homogeneous case where $n_1=n_2$ (b) The same effect where $\frac{n_1}{n_1+n_2}=10\%$, i.e. the vertices of type $1$ make $10\%$ of all vertices.}
\label{fig:ldown}
\end{figure}

\subsection{The direction of the disease spread}\label{sec:sim4}

One of our observations is that despite the complex structure of the pandemic spread
one can identify a propagation direction. It is given by the Perron-Frobenius eigenvector
of the matrix $M$, where the corresponding eigenvalue is the reproduction
number of the disease whose threshold value at $1$ separates between an outbreak
or no outbreak of the pandemic. We refer to the Perron-Frobenius eigenvector which is time dependent (i.e. of the matrices $M_t$ defined above) and in order to calculate it one needs to update the matrix $M$, to get $M_t$, as the pandemic evolves.
Intuitively, the Perron-Frobenius eigenvector points at any time step in the direction
where there is the highest potential to infect. This takes into account the size
of the remaining uninfected population of the different types at that time step and
their probability to infect. In fact, it supports the result of Theorem $1$ of ~\cite{MeaningEV}.

In Figure~\ref{fig:evec} we plot the Perron-Frobenius eigenvector for two distinct scenarios: homogeneous and
heterogeneous infection graphs.
The projections on the three axes show the weight of the three different types
in the pandemic propagation. In plot (a) we see the homogeneous case, where all the three types are equal in size, $\rho(M)\approx 2.1$ and we use:

\[
\Lambda=
  \begin{pmatrix}
    1.47 & 0.3 & 0.3 \\
    0.3 & 1.47 & 0.3 \\
    0.3 & 0.3 & 1.47
  \end{pmatrix}
  \label{L1}
\]
\vskip0.2cm
\noindent
As intuitively expected, the direction of the
propagation is time independent.

In plots (b) and (c) we see the heterogeneous case. In this example we take
 $80\%$ from type one and $10\%$ from each of types $2$ and $3$. $\rho(M)\approx 2.7$ and we used
\[
\Lambda=
  \begin{pmatrix}
    1.16 & 0.18 & 0.18 \\
    0.18 & 2.32 & 0.36 \\
    0.18 & 0.36 & 2.32
  \end{pmatrix}
  \label{L2}
\]
\vskip0.2cm

\noindent Types $2$ and $3$ are relatively highly infectious, and more susceptible to be infected than type $1$. This means that they are quickly being infected and after they infect others they are removed (become saturated vertices). However, Type $1$ is much less infectious and susceptible to be infected and remains longer with more unsaturated vertices, this means that only at a later stage of the outbreak, when vertices of types $2$ and $3$ become saturated, the pandemic catches more vertices of type $1$ much quicker. We see in the plot that types $2$ and $3$ decrease quickly  in their potential to infect compared to type $1$ (their amount of unsaturated vertices is decreased), while type $1$'s unsaturated vertices increase relative to types $2$ and $3$ as the pandemic evolves.

\begin{figure}[h]
\centering
\begin{tabular}{ccc}
 \includegraphics[width=50mm]{./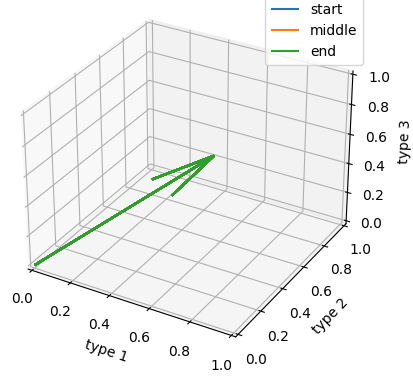} & \includegraphics[width=50mm]{./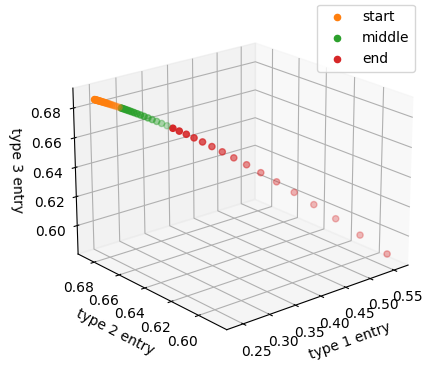} & \includegraphics[width=50mm]{./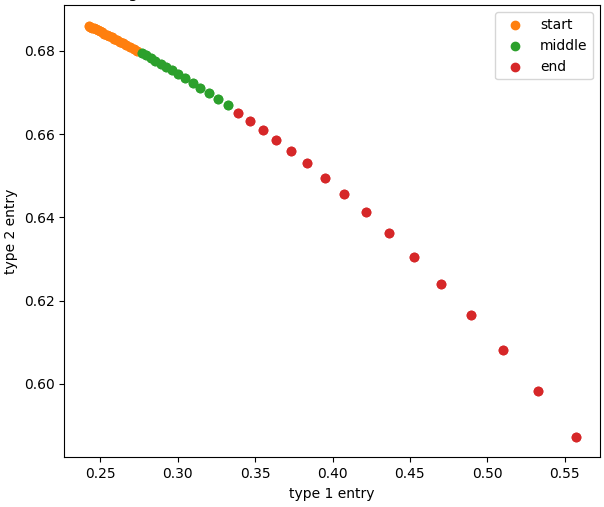}\\
(a) & (b) & (c)\\[6pt]
\end{tabular}
\caption{(a) The position of the Perron-Frobenius eigenvector during the evolution of the pandemic in the homogeneous spread case (constant and centered). All three types are equal in size and $\rho(M)\approx 2.1$. (b) and (c - projection of (b) on types $1$ and $2$)
The position of the Perron-Frobenius eigenvector during the evolution of the pandemic in the non-homogeneous spread case viewed from different perspectives. The entries of types $2$ and $3$ are decreased and the entry of type $1$ is increased, while the outbreak progresses.}
\label{fig:evec}
\end{figure}

\newpage

\section{Discussion}\label{sec:discussion}

The contact graph of a real world pandemic is naturally heterogeneous and complex. It is clearly desirable to be able to work with a general graph and this is what we have done in this work. We employed the multi-type Galton-Waston branching process and analysed the question of whether there would be an outbreak and what will be the fraction of infected population at the end of the disease, in the case of an outbreak.

We defined an $(r\times r)$-dimensional matrix
$M$ where $r$ is the number of types, and whose entries encode the probability
that an individual of one type would infect an individual of his type or an individual from another type (and this for each pair of types).
$M$ encodes the whole information about the evolution of the pandemic. In particular, its
Perron-Frobenius eigenvalue is the basic reproduction number that determines
whether there will be an outbreak.
The corresponding eigenvalue points in the the direction of the spread and
at each time step takes into account the remaining individuals of each type that can infect as well as their probability to infect.

Our framework allows for a general simulation on real world data once collected.
It would be of importance to follow this direction.
In our numerical simulations we presented several examples that highlight certain
properties of the spread such as the difference between homogeneous and heterogeneous
infection networks and the difference between herd immunity and the final end of the disease spread. We have also observed a property of our model regarding the eigenvector corresponding to the Perron-Frobenius eigenvalue. We have shown that, at least numerically, this eigenvector describes the direction in which the pandemic will spread (i.e. to which types first and quicker and to which types later on).

Further developments of our model may include applying it to more specific cases with explicit types, as done in~\cite{NewmanMed}. One may want to use our model in order to incorporate connections between caregivers (may be viewed as super-spreaders of one type) and patients (that may be themselves partitioned into different communities). Through this, one can study the effects of more subtle interventions and countermeasures, or plan effective assignment of caregivers, in order to prevent further infections and stop the outbreak. Also, it is to be considered that other frameworks, rather than SIR, are more applicable in some cases, as described in~\cite{Karrer_2010}, e.g. by allowing non immediate recovery-time (and thus prolonging infectiousness stage), but rather letting it follow a certain distribution.

Moreover, our analysis is applicable for a general information spread and as such outlines a structure that can be used not just for a pandemic spread.
A generalization of our work that is worth pursuing in this context is to a non-symmetric probability
matrix $P$ that naturally arises, for instance, in search engines.
Another important direction to follow is to have general degree distributions, rather than just binomial, e.g.
the Gamma distribution that has proven valuable is describing pandemics such as COVID-19.

\bibliography{ref.bib}
\end{document}